\documentclass[11pt,a4paper]{article}

\usepackage[english]{babel}
\usepackage[T1]{fontenc}
\usepackage{lmodern}

\usepackage[
  a4paper,
  top=1in,
  bottom=1in,
  left=1in,
  right=1in,
  marginparwidth=1.75cm
]{geometry}
\usepackage{setspace}
\usepackage[shortlabels]{enumitem}

\usepackage{amsfonts}
\usepackage{amsmath}
\usepackage{amsthm}
\usepackage{amssymb}
\usepackage{mathtools}
\usepackage{caption}
\usepackage{booktabs}
\usepackage{array}
\usepackage{graphicx}
\usepackage{microtype}
\usepackage{xspace}
\usepackage{xparse}
\usepackage[dvipsnames]{xcolor}

\usepackage{tikz}
\usetikzlibrary{positioning,arrows.meta}

\usepackage{algorithm}
\usepackage{algpseudocode}
\floatname{algorithm}{Algorithm}

\usepackage[colorlinks=true,allcolors=BlueViolet]{hyperref}
\usepackage[capitalise,nameinlink,noabbrev]{cleveref}

\theoremstyle{definition}
\newtheorem{theorem}{Theorem}[section]
\newtheorem{definition}[theorem]{Definition}
\newtheorem{lemma}[theorem]{Lemma}
\newtheorem{corollary}[theorem]{Corollary}
\newtheorem{proposition}[theorem]{Proposition}

\newtheorem*{claim*}{Claim}

\theoremstyle{remark}
\newtheorem{remark}[theorem]{Remark}
\newtheorem{example}[theorem]{Example}

\newtheorem*{remark*}{Remark}
\newtheorem*{note*}{Note}

\theoremstyle{plain}

\newcommand{\GC}{\textup{\textsc{GCircuit}}\xspace}
\newcommand{\PC}{\textup{\textsc{Pure-Circuit}}\xspace}
\newcommand{\PURE}{\textup{\textsf{PURIFY}}\xspace}
\newcommand{\NOT}{\textup{\textsf{NOT}}\xspace}
\newcommand{\OR}{\textup{\textsf{OR}}\xspace}
\newcommand{\NOR}{\textup{\textsf{NOR}}\xspace}
\newcommand{\AND}{\textup{\textsf{AND}}\xspace}
\newcommand{\NAND}{\textup{\textsf{NAND}}\xspace}
\newcommand{\PPAD}{\textup{\textsf{PPAD}}\xspace}

\newcommand{\PP}{\textup{\textsf{P}}\xspace}
\newcommand{\linearFIXP}{\textup{\textsf{Linear-FIXP}}\xspace}
\newcommand{\pcppad}{\ensuremath{\textup{\textsf{PCP-for-PPAD}}}\xspace}

\newcommand{\eps}{\ensuremath{\varepsilon}\xspace}
\newcommand{\bbot}{\ensuremath{\bot}\xspace}
\newcommand{\Sstate}{\mathcal{S}}
\newcommand{\lst}{L^{\star}}
\newcommand{\gvalonly}{\boldsymbol{\mathrm{x}}}
\newcommand{\pvalonly}{\boldsymbol{\mathrm{y}}}

\usepackage{tabularx}

\hypersetup{
  pdftitle={Tight Inapproximability for Generalized Circuits},
  pdfauthor={Panagiotis Kanellopoulos and Themistoklis Melissourgos},
  pdfsubject={},
  pdfkeywords={GCircuit, Pure-Circuit, PPAD, approximation thresholds,
               PCP-for-PPAD}
}

\title{\texorpdfstring{Tight Inapproximability for Generalized Circuits}
{Tight Inapproximability for Generalized Circuits}\\ \text{}}

\author{Panagiotis Kanellopoulos\\
        \small{King’s College London, UK}\\
        \small{\texttt{panagiotis.kanellopoulos@kcl.ac.uk}}
        \and
        Themistoklis Melissourgos\\
        \small{University of Essex, UK}\\
		\small{\texttt{themistoklis.melissourgos@essex.ac.uk}}
        }

\date{\vspace{0.5cm}}

\begin{document}
\maketitle

\begin{abstract}
Generalized circuits are a fundamental tool for establishing
\PPAD-hardness and membership of approximate equilibrium problems. Despite their
importance, the complexity of finding approximate solutions has
remained unresolved for a wide range of approximation parameters.
We settle this question by establishing a tight threshold at
$\eps=1/6$. Specifically, we prove that $\eps$-\GC is
\PPAD-complete for every constant $0<\eps<1/6$, even when
restricted to four gate types, and polynomial-time solvable for
every constant $\eps\ge1/6$. This closes the gap between the
previously known hardness bound of $1/10$ and the polynomial-time
solvability bound of $1/2$.

We further investigate the relaxation in which a fraction
$\delta$ of the gates may be violated. We establish
polynomial-time solvability of $(\eps,\delta)$-\GC for every
constant $\eps>0$ and $\delta>1/5$, and of the fan-out-one formulations of $\delta$-\PC for every constant $\delta>1/8$. These results follow from a decomposition theorem for directed graphs of maximum in-degree two and establish that the \pcppad conjecture cannot be true for the aforementioned values of \eps and $\delta$, unless $\PPAD = \PP$.

To showcase the contribution of our positive results, we apply them to a variety of problems in algorithmic game theory, including public goods games, financial networks, Bayesian, polymatrix, and graphical games, and obtain improved approximation guarantees.
\end{abstract}

\clearpage

\setcounter{tocdepth}{2}
\tableofcontents
\clearpage

\section{Introduction}
\subparagraph{Generalized circuits and PPAD.}
The generalized circuit problem, \GC, was introduced by Chen, Deng and
Teng~\cite{CDT09} as the key intermediate problem in their proof that
computing a Nash equilibrium of a two-player normal-form game is
\PPAD-hard. An instance of \GC consists of variables taking values
in $[0,1]$, together with gates imposing approximate arithmetic,
comparison, and Boolean constraints. The gates may form directed cycles,
so a generalized circuit is a fixed-point problem, and a solution is
guaranteed to exist by Brouwer's fixed-point theorem. In
\eps-\GC, every gate needs to be satisfied only up to
additive error \eps.

Since its introduction, \eps-\GC has become a standard
starting point for proving hardness of approximate equilibrium problems.
For such reductions, the value of \eps matters: hardness for a
larger value of \eps can translate directly into a stronger
inapproximability result for the target problem.
Rubinstein~\cite{Rubinstein18} proved that \eps-\GC is
\PPAD-hard for some sufficiently small constant $\eps>0$.
Deligkas, Fearnley, Hollender and Melissourgos~\cite{PureCircuit}
subsequently introduced \PC, a purely discrete problem over
$\{0,1,\bbot\}$, and proved that it is \PPAD-complete. They then used it to show that \eps-\GC is \PPAD-hard for every $\eps<1/10$. More recent hardness results have sometimes bypassed \GC and reduced directly from \PC, for
example for graphical games~\cite{DFHM23graphical}, financial
networks~\cite{IdKV23}, Fisher
markets~\cite{DFHM24fisher,DFHM26}, and pacing equilibria~\cite{ChenLi26}.

On the algorithmic side, for $\eps\ge1/2$ the problem is trivial:
setting every variable to $1/2$ satisfies every gate. To the best of our
knowledge, no polynomial-time algorithm was previously known for any
constant $\eps<1/2$. This leaves a substantial gap between the
hardness bound $1/10$ and the trivial algorithm at $1/2$.

\medskip
\noindent\emph{For which constant values of \eps is
\eps-\GC \PPAD-hard, and for which is it polynomial-time
solvable?}
\medskip

Besides settling the complexity of a fundamental PPAD problem, an exact
answer determines the strongest approximation parameter that can
possibly be obtained from a reduction starting from
\eps-\GC. Beyond this point, a hardness proof must use a
stronger starting problem, such as \PC, or exploit additional
structure of the target problem.

\subparagraph{Allowing violated gates.}
A different relaxation of \GC allows some gates to be violated
altogether. In the $(\eps,\delta)$-\GC problem, at least a
$(1-\delta)$-fraction of the gates must be \eps-satisfied,
while the remaining gates may be violated arbitrarily. The \pcppad
conjecture\footnote{A recent manuscript generated with an OpenAI model claims a
proof of the quasilinear-size version of this conjecture~\cite{OpenAI2026PCP};
to our knowledge it has not yet been independently verified.} of Babichenko, Papadimitriou and Rubinstein~\cite{BPR16}
states that there exist constants $\eps,\delta>0$ for which
$(\eps,\delta)$-\GC is \PPAD-hard. The conjecture has recently
been used to obtain conditional hardness for Hylland--Zeckhauser
markets~\cite{BLXZ26}, and Fisher markets~\cite{DFHM26}. The latter work additionally showed the converse: if a broad class of Fisher markets with $(1-\delta)$-optimal bundles and $\eps$-over/under-clearing is \PPAD-hard for some constant $\delta$ and \eps, then the \pcppad conjecture is true. They also gave
equivalent formulations of the conjecture in terms of $\delta$-\PC, which, although they preserve the existence for some positive $\delta$, they do not preserve
its numerical value. It is therefore natural to ask, for specific
formulations of \GC and \PC, how large $\delta$ can be before the
problem necessarily becomes polynomial-time solvable.

\subsection{Our contribution}

Our main result identifies the exact approximation threshold of the
standard \eps-\GC problem. Throughout the main body, \GC
refers to the standard definition of the problem, with the gate set of~\cite{CDT09,PureCircuit}, with
multipliers in $[0,1]$ and the usual conventions (see details in
Section~\ref{sec:prelim}).

\begin{theorem}\label{thm:main}
\eps-\GC is \PPAD-complete for every constant
$0<\eps<1/6$, and it can be solved in polynomial time for every
constant $\eps\ge1/6$.
\end{theorem}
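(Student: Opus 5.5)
The two halves of Theorem~\ref{thm:main} need different arguments, so I would prove them separately. Membership of \eps-\GC in \PPAD for every constant $\eps>0$ is standard (Chen, Deng and Teng~\cite{CDT09}), so \PPAD-completeness for $\eps<1/6$ reduces to showing \PPAD-hardness.

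\emph{Hardness for $\eps<1/6$.} I would reduce from \PC, following the approach of~\cite{PureCircuit} but with tighter gadgets. Each \PC variable is encoded by a \GC variable $v$, read as:
\begin{itemize}
\item the bit $0$ if $v\le 1/2-\eps$;
\item the bit $1$ if $v\ge 1/2+\eps$;
\item the garbage value $\bbot$ otherwise.
\end{itemize}
\PC is \PPAD-complete for the gate set $\{\NOR,\PURE\}$. So it suffices to build, from at most four \GC gate types, gadgets that implement \NOR and \PURE under this reading. The gadgets must tolerate additive error $\eps$ per gate and must not leak $\bbot$ where \PC forbids it.

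For \NOR I would compose the Boolean \OR and \NOT gates directly. Their output error of $\eps$ around $\{0,1\}$ stays decisive as long as $1-\eps\ge 1/2+\eps$, i.e.\ $\eps<1/4$, so \NOR is not the binding gate.

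\PURE is the delicate gadget. It must map an undecided input to two outputs, at least one of which is decisive. I would build it from a comparison gate $G_<$ against a constant $\zeta$, together with scaling or addition gates, and chain a bounded number of them. Each stage pushes the value a distance of order $1/2-\eps$ toward $\{0,1\}$ but loses $\eps$ to error. I would then choose the constants so that the accumulated slack is positive exactly when $\eps<1/6$. The target is the $1/6$ arising as $3\eps<1/2$: three error terms compete with the half-unit gap.

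I expect the main obstacle of the hardness direction to be this error budget. Each gate in the gadget must contribute an error of at most $\eps$, and the undecided band $(1/2-\eps,1/2+\eps)$ has to be routed correctly through $G_<$. Optimizing the comparison thresholds is what moves the previous bound of $1/10$ up to $1/6$.

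\emph{Algorithm for $\eps\ge 1/6$.} Here I would restrict attention to assignments with values in a small grid, say multiples of $1/6$. The key idea is to place Boolean and comparison-controlled variables in the undecided band, where those gates impose no constraint. Arithmetic gates ($G_+$, $G_-$, $G_{\times\zeta}$, $G_\zeta$) then remain the only real constraints.

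I would process the circuit by first fixing constant gates. Every other variable would start at $1/2$ or $1/3$ and be corrected only where an arithmetic gate forces it, for instance when $x+y\ge 1$ forces $z\ge 5/6$. The key lemma would show that, with $\eps=1/6$, every such correction can be absorbed without forcing a downstream Boolean or comparison gate into a decisive-but-wrong state. I would try to prove this by exhibiting, for each gate type, a closed set of local value patterns on the grid. The patterns should be closed under the gate's $\eps$-tolerance, so that a consistent assignment exists for any gate graph.

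Cycles in the circuit are the main obstacle on the algorithmic side. To handle them I would compute the assignment as the least fixed point of a monotone update on the finite grid lattice. Polynomial running time then follows, because each variable can change value only a constant number of times. The step I most expect to need care is verifying monotonicity and tolerance for $G_-$ and $G_<$ at $\eps=1/6$ exactly.
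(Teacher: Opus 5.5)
\emph{Hardness.} Your decoding, with thresholds $1/2\pm\eps$, does not fit the \GC gates. The Boolean premises are $x\le\eps$ and $x\ge1-\eps$. So an input such as $x=1/3$ (with $\eps=0.1$) decodes to $0$ under your reading, yet $G_\neg(x)$ and $G_\vee(x,x)$ constrain nothing, and their outputs may sit at $1/2$ and decode to $\bbot$. Your \NOR gadget therefore fails; you only checked the output side. The gap of only $2\eps$ between your two pure bands also defeats comparators. A gate $G_<(u,c)$ with $c$ independent of $u$ would need $c>1/2$ to force every decoded-$0$ input high, and $c<1/2$ to force every decoded-$1$ input low. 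The paper uses the decoding \eqref{eq:decode} instead, under which $G_\vee$ is \OR verbatim.

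More importantly, you never construct the \PURE gadget, which is the entire content of the hardness proof. The sentence ``each stage pushes the value toward $\{0,1\}$ and the slack is positive exactly when $\eps<1/6$'' states the conclusion; it is not an argument. Even with the right decoding, take the most direct form of your plan: $p_i=G_<(u,c_i)$ with $c_i=G_{\zeta_i}()$ for $i=1,2$, followed by negation. Because each constant node itself carries error $\eps$, this needs $\zeta_1>3\eps$, $\zeta_2<1-3\eps$ and $\zeta_2-\zeta_1>4\eps$. That gives only $\eps<1/10$, the old bound. The paper reaches $1/6$ by never comparing against a fixed constant. It compares $u$ with $r=G_-(q,u)\approx1-u$. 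It then uses the four-gate block $B$ of Lemma~\ref{lem:block} twice with swapped inputs, plus a final comparator $v=G_<(z,c)$. The binding inequality is $d-t\ge1-5\eps>\eps$.

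\emph{Algorithm.} Boolean and comparator outputs cannot be parked in the undecided band; consider $G_\neg(G_0())$ or $G_<(G_0(),G_1())$. So arithmetic gates are not the only real constraints: decisive Boolean and comparator values feed back into $G_-$ and $G_+$ around cycles.

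More seriously, no ``least fixed point of a monotone update on the grid'' exists in the numerical order. $G_-$ is antitone in its second input and $G_<$ in its first. Example~\ref{ex:chain} shows a subtraction output moving $1/3\to1/6\to1/3\to1/2$ as its inputs get resolved. You flagged $G_-$ and $G_<$ as needing care, but this is exactly where the plan breaks, and nothing in it bounds the number of changes per variable.

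The missing idea is to separate information from value. The paper propagates seven states in a refinement order whose value map $\nu$ is neither injective nor monotone: $L$ and $\lst$ both have value $1/6$, but only $L$ can still be refined. Every non-comparator gate gets a rule that is monotone in this order and valid at $1/6$. Comparators instead commit the first time they are forced. This is necessary because even in this order $(L,D)$ and $(D,L)$ lie below $(D,D)$ yet force opposite outputs. Lemma~\ref{lem:noflip} guarantees that later refinements never force the opposite value.

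Finally, \eps-\GC is not monotone in \eps, so a $1/6$-solution does not settle every $\eps\ge1/6$. On your grid it survives on $[1/6,1/3)$, but for $\eps\in[1/3,1/2)$ the value $1/3$ already triggers Boolean low premises. The paper reads the same final states through $\mu$ in \eqref{eq:mu} for that range, and uses the all-$1/2$ assignment for $\eps\ge1/2$. Your proposal covers neither range.
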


Both directions are new: The hardness result improves the previous bound
$\eps<1/10$ of~\cite{PureCircuit}, while the algorithm gives the first
polynomial-time result for a constant $\eps<1/2$ and shows that
the transition occurs exactly at $1/6$.

For completeness, Appendix~\ref{app:conventions} considers an alternative
error-tolerance convention for the gates of \eps-\GC that has been used in~\cite{Rubinstein18}. Away from the precise threshold point, $1/6$, the results hold in both the standard and the alternative variant; the latter affects only
the complexity classification at $\eps=1/6$. In Appendices~\ref{app:alt-four-gates} and \ref{app:hardness}, we also show results for families of \GC with restricted sets of gate types, and a reduced number of gates per simulation gadget, respectively.

Our second set of results concerns violated gates; Section~\ref{sec:robust-overview} gives an overview of the violated-gate
algorithms, with full proofs and parameter-dependent running times in
Appendix~\ref{sec:robust}.

\begin{theorem}\label{thm:robust}
For all constants $\eps>0$ and $\delta>1/5$,
$(\eps,\delta)$-\GC can be solved in polynomial time.
\end{theorem}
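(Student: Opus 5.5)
We satisfy at least a $4/5$ fraction of the gates exactly (error $0$ or small), by working with the dependency graph of the circuit. Form the directed graph $G$ whose vertices are the gates. Put an arc from gate $g'$ to gate $g$ when the output variable of $g'$ is an input of $g$. Every gate of \GC has at most two inputs, and each variable is the output of at most one gate. Hence $G$ has maximum in-degree two; variables that are the output of no gate can be fixed arbitrarily and ignored.

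The core of the argument is a decomposition lemma for digraphs of maximum in-degree two. I would show that in polynomial time one can delete a set $D$ of at most $\lfloor n/5\rfloor$ (or just under $\delta n$) vertices so that the remaining gates can be satisfied simultaneously. The cleanest sufficient condition is that $G-D$ is acyclic. In that case we evaluate the gates of $G-D$ in topological order, treating the outputs of deleted gates as free inputs, say set to $1/2$. This satisfies every gate in $G-D$ exactly, for all gate types with multipliers in $[0,1]$. Comparison gates $G_<$ need care, since their output is unconstrained when the inputs are within \eps; but in topological order we can always pick a consistent output. So it suffices to hit all directed cycles with few vertices.

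The trouble is that a directed feedback vertex set of a graph with in-degree at most two can need far more than $n/5$ vertices; a disjoint union of small complete digraphs is an example. So the plan cannot rely on acyclicity alone. It must also satisfy gates on short cycles locally. I would proceed as follows.

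(i) Process the strongly connected components one at a time. For each component, compute an assignment satisfying all but at most a $1/5$ fraction of its gates, given fixed values on its external inputs.

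(ii) Within a component, use the in-degree-two structure. Remove a vertex set hitting all cycles, but treat a removed gate as "violated" only when it genuinely cannot be satisfied. For many removed gates we can instead close the loop by solving a one-dimensional fixed point. Along a single cycle where each gate has one input coming from the cycle, the composition of the gate maps is a continuous (piecewise-linear) map $[0,1]\to[0,1]$, truncated. Its \eps-fixed point can be found in polynomial time by bisection. This is where the assumption $\eps>0$ and the parameter-dependent running time enter, and it satisfies the entire cycle.

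(iii) So the decomposition I aim for is this: partition $V(G)$ into a set $D$ with $|D|\le n/5$ and a remainder whose cycle structure is "unicyclic per component" (each weak piece contains at most one cycle). Every piece is then solvable by bisection on the unique cycle, followed by forward propagation.

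The main obstacle is proving the combinatorial decomposition with the constant exactly $1/5$. Every digraph with maximum in-degree two must admit a deletion of at most one fifth of its vertices leaving a subgraph in which each component has at most one cycle, that is, functional-graph-like pieces where every vertex has in-degree at most one inside the remainder. I would attempt this greedily. Repeatedly choose, for each vertex of in-degree two, which of its two in-arcs to keep, and count "excess" arcs. Then use a charging or discharging argument, or local search, that flips choices to reduce conflicts. Cases where a vertex's in-degree cannot drop to one need a vertex deleted, and each such deletion is charged against five vertices. Tightness examples of small dense digraphs, such as a complete-ish structure on five vertices, suggest where $1/5$ comes from. Since $\delta>1/5$ is strict, slack is available to handle rounding and small components separately, for instance by brute force on components of constant size.
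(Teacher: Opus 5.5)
\textbf{Gap: the key decomposition lemma is false.} Your plan rests on this claim: every digraph of maximum in-degree two has a set of at most $n/5$ vertices whose deletion leaves every remaining vertex with in-degree at most one, so that each weak piece has at most one cycle. Disjoint bidirected triangles refute it. Every vertex has in-degree two inside its triangle, so each triangle must lose a vertex, and $n/3$ deletions are necessary. More generally, if all in- and out-degrees equal two, every surviving vertex needs a deleted in-neighbour. Each deleted vertex has only two out-neighbours, so $n-|D|\le 2|D|$, i.e.\ $|D|\ge n/3$. No charging or local-search argument can prove your lemma. The bisection step (ii) is workable on a genuinely unicyclic piece, but it is applied to a decomposition that does not exist at the $1/5$ level. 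The ``complete-ish five-vertex structure'' is also not where $1/5$ comes from.

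\textbf{Missing idea: $1/5$ comes from balancing two assignments.} The triangle example is harmless for the circuit problem. A gate whose output has two (non-constant) dependencies must be $G_+$, $G_-$, $G_<$, $G_\vee$ or $G_\wedge$. Setting constants to their values and every other variable to $0$ satisfies all such gates. So this assignment violates at most $N_1$ gates, the non-constant gates with at most one dependency.

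Against this, the paper uses a weaker structural target than yours. It deletes at most $n_2/4+4n/(\lambda+1)$ vertices of the dependency digraph: quarter deletion gives an elimination ordering of in-width at most three, then a fragmentation step follows. After this, every strongly connected component has at most $\lambda$ vertices. Each component is solved, with its boundary inputs already fixed, by exhaustive search over a finite grid. A solution on the grid is guaranteed by Brouwer's theorem applied to continuous gate selectors, plus Lipschitz rounding. This violates at most $N_2/4+\tau N$ gates. Combining the two bounds, $5v\le N_2+N_1+4\tau N\le(1+4\tau)N$, which gives every $\delta>1/5$.

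Your unicyclic target is too strong even inside this balancing scheme. Its worst-case cost is at least $n_2/3$ (bidirected triangles), and balancing $N_2/3$ against $N_1$ would certify only $1/4$. Relaxing to bounded-size components, solved by brute force, is what makes the $n_2/4$ bound, and hence $1/5$, attainable.
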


We also consider $\delta$-\PC with fan-out one. Deligkas et al.~\cite{DFHM26} showed that the \pcppad conjecture is equivalent to the \PPAD-hardness of this problem for some constant $\delta>0$.

\begin{theorem}\label{thm:pc-robust}
For every constant $\delta>1/8$, all formulations of $\delta$-\PC can be solved in polynomial time on instances in which every node is the input to exactly one gate.
\end{theorem}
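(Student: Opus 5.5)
The plan is to reduce the problem to a graph-decomposition statement about the \emph{gate graph} and then handle each piece of the decomposition locally. Given a fan-out-one instance of $\delta$-\PC, form the directed graph $G$ whose vertices are the gates, with an arc $g\to h$ whenever an output node of $g$ is an input node of $h$. Since every node is the output of one gate and the input of exactly one gate, every arc of $G$ corresponds to exactly one node. Since each gate has at most two inputs, $G$ has maximum in-degree two; \PURE gates have out-degree two and all other gates have out-degree one. The key observation is that if a set $F$ of gates is such that $G-F$ is acyclic, then we can satisfy every gate outside $F$. To do this, assign arbitrary Boolean values to the outputs of the gates in $F$ and evaluate the remaining gates in topological order, always choosing a legal output, which exists for every gate type and every input in $\{0,1,\bbot\}$. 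So it suffices to find, in polynomial time, a set of violated gates of size at most $\delta n$.

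A plain feedback vertex set will not give $1/8$, so the second ingredient is to \emph{charge violations more cleverly}. A gate in $F$ need not actually be violated: its output is chosen before its inputs are known, but we can often choose it so that the gate is satisfied whatever happens. Examples are outputting $\bbot$-compatible values, or exploiting that a \PURE gate with one free output still has a legal completion. The set $F$ only needs to break cycles; the gates actually violated are those in $F$ whose pre-chosen output conflicts with the inputs later computed. I would prove a local lemma of the following form: every strongly connected component, or more generally every piece $P$ of a suitable partition with internal cycles, can be resolved with at most $|P|/8$ violated gates. One example is a cycle of \NOT gates, which is satisfiable with all nodes set to $\bbot$ only if the \NOT semantics allow it; otherwise we pay one violation per cycle and need long cycles. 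The case analysis would go over short cycles: I would show that every directed cycle of length less than $8$, together with its attached in-neighbours, can be satisfied exactly by an appropriate assignment, so that only long cycles cost a violation.

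This leads to the decomposition theorem I expect to be the main obstacle. The target is this: in every digraph of maximum in-degree two, one can find in polynomial time a vertex partition $V=P_1\cup\dots\cup P_m$ together with an acyclic ordering of the pieces, such that every piece either is internally satisfiable with zero violations or contains at least $8-o(1)$ vertices per charged cycle-breaking vertex. I would first try an ear-decomposition or greedy argument: repeatedly take a shortest cycle $C$ in the remaining graph. If $|C|\ge 8$, violate one of its gates and contract or remove it. If $|C|<8$, show that the in-degree-two bound (each vertex of $C$ has at most one external input) makes $C$ solvable relative to any fixed external inputs, so it can be processed as a single super-gate without violations. The delicate point is that after contracting short cycles the in-degree bound degrades, so the induction must track a potential such as $\sum_v (\text{in-degree}(v)-1)$ rather than raw in-degree. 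This is where I expect the constant $1/8$ to arise, and where tightness examples such as disjoint $8$-cycles of \NOT-like constraints should appear.

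Finally, to obtain the statement for every constant $\delta>1/8$ rather than $\delta\ge 1/8+$(additive term), I would absorb lower-order losses, such as boundary effects of the decomposition and a bounded number of exceptional pieces, into the slack $\delta-1/8$. Exceptional small instances are handled by brute force over constant-size components, giving a running time polynomial in $n$ with dependence on $\delta$ only in the exponent's constant. The argument should be uniform over all gate formulations of $\delta$-\PC, since it uses only two facts: every gate has a legal output for every input, and the short-cycle satisfiability lemma, which I would verify gate-set by gate-set.
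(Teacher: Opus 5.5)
There is a genuine gap, and it starts with where you think violations come from. In \PC no cycle is intrinsically costly. For any set of gates and any fixed values entering it, the nodes it produces can be assigned so that all its gates are satisfied (Lemma~\ref{lem:pc-boundary-totality}, via Brouwer applied to piecewise-linear versions of the gates). For example, a cycle of \NOT gates is satisfied by setting every node to $\bbot$, so disjoint $8$-cycles are not a tightness example: they cost nothing. Conversely, no pre-chosen output makes a gate safe ``whatever happens'', since for \NOT, \OR and \PURE the inputs $0$ and $1$ demand different outputs. Violations are purely the price of tractability. Exhaustive search is polynomial only on pieces of constant size, so what you need is a deletion set after which every strongly connected component has at most $\lambda=O(1)$ gates. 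Your shortest-cycle greedy does not deliver this. Contracting short cycles into super-gates and recursing lets pieces grow without bound, and an unbounded piece is as hard as \PC itself. Deleting one gate of a long cycle also cannot be charged to eight gates, because the other gates of that cycle stay in the graph and may lie on many further cycles.

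What is missing are the two facts that actually produce $1/8=\frac14\cdot\frac12$.
\begin{itemize}
\item \emph{Where fan-out one enters.} You use the fan-out-one hypothesis only to define the gate graph. Counting outputs against input positions gives $2m_P+m_N+m_R=m_P+m_N+2m_R$, hence $m_R=m_P\le m/2$. Since only two-input gates can have in-degree two, this yields $n_2\le m/2$.
\item \emph{A deletion bound in terms of $n_2$.} You aim for a universal $1/8$ bound for all digraphs of maximum in-degree two. That is stronger than what is available, namely $n_2/4$, which can be as large as $n/4$, and it is not needed. The paper first deletes at most $n_2/4$ gates to obtain an elimination ordering of in-width at most three (Theorem~\ref{thm:quarter-del}). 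This uses an amortized charging in which each deletion is paid for by at least four in-degree-two vertices removed in its round, with seeds and marks handling directed cycles. It then deletes at most $4m/(\lambda+1)$ further gates so that every strongly connected component has at most $\lambda$ gates (Lemma~\ref{lem:fragment}).
\end{itemize}
To finish, take $\lambda$ with $4/(\lambda+1)<\delta-1/8$ and give the deleted gates arbitrary outputs. Then solve the components in topological order, searching at most $3^{2\lambda}$ assignments for each.
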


Thus, if the \pcppad conjecture holds, any of its fan-out-one $\delta$-\PC formulations cannot have $\delta > 1/8$, unless every problem in $\PPAD$ is polynomial-time solvable.

Finally, in Section~\ref{sec:positive-applications}, to illustrate their impact, we apply our positive results to game-theoretic problems, such as public goods games, financial networks, and Bayesian, polymatrix, and graphical games, achieving improved approximation guarantees in each case.

\subsection{Our techniques}

\subparagraph{Hardness.}
We reduce from the \PPAD-complete problem \PC~\cite{PureCircuit} to \eps-\GC. In our main hardness result, we simulate the original gates using \GC gates, trying to use as few gate-types as possible. We implement the \OR gates directly by the
corresponding Boolean gate of \GC, and simulate \NOT using two other gates with allowed error $\eps < 1/5$. 

The only nontrivial part is the
\PURE gate, which we manage to implement using twelve \GC gates from a set of only four gate types. If the input represents a pure bit, both outputs are forced to represent the same bit in the \PC instance. Otherwise, it forces the output values such that they cannot both represent non-pure bits. The error bounds that make this gadget work correctly dictate $\eps < 1/6$, which allows the whole reduction to go through.

\subparagraph{The algorithm at $\eps=1/6$.}
A natural algorithmic approach is to start from an unresolved assignment
and let gates progressively reveal information about their outputs. The
challenge lies with subtraction: refining its inputs can require its numerical
output first to decrease and later to increase. We therefore propagate
seven \emph{information states}, rather than numerical values directly.

The states form a partial order describing refinement of information,
while a separate map assigns each state one of the values
$1/6,1/3,1/2,5/6$. This numerical map is neither injective nor monotone.
Nevertheless, every non-comparator gate admits a rule that is monotone in
the information order. A comparator behaves differently: once its inputs
first force an output, it commits to that output permanently. A
``no-flip'' lemma proves that subsequent refinements can never force the
opposite value. Since every update strictly increases a state in the
finite information order, the process terminates in polynomial time.

\subparagraph{Violated gates.}
For the violated-gate results on $(\eps,\delta)$-\GC, we work with a dependency digraph whose
arcs point from inputs to outputs. Every vertex therefore has in-degree
at most two. Our structural theorem first deletes at most $n_2/4$
vertices, where $n_2$ is the number of vertices of in-degree two, and
constructs an elimination ordering of the remaining graph with in-width
at most three. Eliminating a vertex means bypassing it by adding
shortcut arcs from its current in-neighbors to its current out-neighbors.

The elimination ordering is then used, for any integer $\lambda\ge1$, to
make every strongly connected component have size at most $\lambda$ using only $O(n/\lambda)$ additional
deletions. Once all inputs entering a component have been fixed, the
component has constant size and can be solved by exhaustive search over a
fixed finite grid. Choosing $\lambda$ as a sufficiently
large constant and balancing this decomposition assignment against two
simple assignments gives the $1/5$ bound of
Theorem~\ref{thm:robust}, as well as a refinement when negation gates are
sparse.

For fan-out-one \PC, we apply the same graph argument to the gate
digraph. Counting inputs and outputs shows that there are exactly as many two-input gates as \PURE gates, so at most half of the gates have in-degree two, leading to the $1/8$ bound.

\subsection{Further related work}

The \pcppad conjecture was introduced by Babichenko, Papadimitriou
and Rubinstein~\cite{BPR16}. Recent work has used it for conditional
hardness results~\cite{BLXZ26,DFHM26}, and Deligkas et
al.~\cite{DFHM26} established equivalent \PC formulations of the
conjecture. Our violated-gate results go in the opposite direction: they
give explicit values of $\delta$ above which particular formulations are
polynomial-time solvable.

As observed in~\cite{SS19}, \eps-\GC is not monotone in
\eps. This matters here because an algorithm at
$\eps=1/6$ does not automatically give an algorithm for every
larger value. Our propagation algorithm is also related in spirit to
classical data-flow analysis and constraint propagation over ordered
information domains~\cite{Kildall73,CousotCousot77,Apt99}.

The graph problem underlying the violated-gate algorithms, deleting
vertices so that every remaining strongly connected component is small,
has been studied as \emph{Bounded Size Strong Component Vertex
Deletion}~\cite{GMM22} and \emph{Directed Component Order
Connectivity}~\cite{BJEGWY22}, together with a generalization~\cite{NRSS24}
and the related Directed Balanced Separator problem~\cite{ALLS25}. These
works give parameterized or approximation algorithms relative to a
prescribed deletion budget. We instead give an explicit worst-case deletion
bound for digraphs of maximum in-degree two, in the spirit of the
undirected fragmentability results of Edwards and Farr~\cite{EF01,EF08,EF12}
and Gupta et al.~\cite{GLLMW19}.

Our proof adapts the deletion-and-reduction charging used for feedback
vertex sets of 2-degenerate graphs~\cite{BorowieckiDS14} and for
planarization~\cite{Meyer26}, i.e., on acyclic orientations of in-degree at most two. Our new
ingredient is a quarter-deletion bound for digraphs of maximum in-degree
two with directed cycles, yielding a directed elimination
ordering~\cite{RoseTarjan78} of in-width at most~$3$, in the sense of
Hunter and Kreutzer~\cite{HK08} applied to the reverse digraph.

\subparagraph{Organization.}
Section~\ref{sec:prelim} gives the definitions.
Section~\ref{sec:hardness} proves the hardness part of
Theorem~\ref{thm:main}, and Section~\ref{sec:algorithm} proves the
algorithmic part. Section~\ref{sec:robust-overview} studies violated
gates, and Section~\ref{sec:positive-applications} gives applications. Appendices~\ref{app:alt-four-gates} and~\ref{app:hardness} give restricted-gate variants of the hardness result, Appendix~\ref{app:conventions} treats the open tolerance convention, Appendix~\ref{app:eps-above-1/6} extends the algorithm to $\eps>1/6$,  Appendix~\ref{sec:robust} contains the proofs for violated gates, and Appendix~\ref{app:positive-applications} the applications. 

\section{Preliminaries}\label{sec:prelim}

\subsection{Generalized circuits}\label{sec:GC}

\begin{definition}[\GC~\cite{CDT09}]\label{def:GC}
A generalized circuit consists of a set $V$ of variables and a set
of gates. Every variable is the output of exactly one gate and may
be used as an input by any number of gates. For $\eps\ge0$, an
assignment $\gvalonly:V\to[0,1]$ is an \eps-\GC solution if it
satisfies every gate constraint below. We write $z=t\pm\eps$
for $t - \eps \leq z \leq t + \eps$.

The arithmetic gates, with inputs $x,y$ and output $z$, impose
\[
\begin{aligned}
    G_\zeta():\;&z=\zeta\pm\eps,  \qquad  \qquad  \qquad  \qquad 
& G_{\times c}(x):\;&z=cx\pm\eps,\\
G_=(x):\;&z=x\pm\eps,
& G_+(x,y):\;&z=\min\{x+y,1\}\pm\eps,\\
G_-(x,y):\;&z=\max\{x-y,0\}\pm\eps.
\end{aligned}
\]
The comparator and Boolean gates impose
\[
\begin{array}{@{}ll@{\qquad}ll@{}}
G_<(x,y):&
\begin{cases}
z\ge1-\eps,&x<y-\eps,\\
z\le\eps,&x>y+\eps,
\end{cases}
&
G_\neg(x):&
\begin{cases}
z\ge1-\eps,&x\le\eps,\\
z\le\eps,&x\ge1-\eps,
\end{cases}
\\[5ex]
G_\vee(x,y):&
\begin{cases}
z\le\eps,&\max\{x,y\}\le\eps,\\
z\ge1-\eps,&\max\{x,y\}\ge1-\eps,
\end{cases}
&
G_\wedge(x,y):&
\begin{cases}
z\le\eps,&\min\{x,y\}\le\eps,\\
z\ge1-\eps,&\min\{x,y\}\ge1-\eps.
\end{cases}
\end{array}
\]
If neither premise of a comparator or Boolean gate holds,
its output is unrestricted. The parameters $\zeta,c\in[0,1]$
are rational numbers given in the input. We write $G_0$ and $G_1$
for the constant gate $G_\zeta$ with $\zeta=0$ and $\zeta=1$, respectively.
\end{definition}

We use the definition above throughout the main body of this work. This is the standard definition in the related literature. In Appendix~\ref{app:conventions} we also analyze an alternative convention used in~\cite{Rubinstein18}.

\subparagraph{An important remark: non-monotonicity in \eps.}
The definition of \eps-\GC has a particular subtlety first noticed by Schuldenzucker and Seuken in~\cite{SS19}. In particular, unlike many approximation problems, it is \emph{not monotone} in \eps: an \eps-solution need not be an
$\eps'$-solution for $\eps'>\eps$. The arithmetic constraints become weaker as \eps grows. For comparator gates, the premises become harder to trigger and the output constraints become weaker. In contrast, although the output constraints of Boolean gates also become weaker, their premises $x\le\eps$ and $x\ge1-\eps$ become easier to trigger. For example, $x=0.18$ and
$z=0.5$ satisfy $z=G_\neg(x)$ when $\eps=1/6$, but not when
$\eps=0.2$, since then $x\le\eps$ forces $z\ge0.8$.
Hence neither hardness nor algorithmic results automatically transfer
between different values of \eps. \cite{SS19} repairs the hardness results which at the time relied on reductions from \GC. Here, we provide standalone hardness proofs by reducing from \PC, bypassing non-monotonicity issues. On the other hand, our approximation algorithm has to carefully handle the non-monotonicity by considering two distinct intervals of \eps's domain, namely $[1/6, 1/3)$ and $[1/3, 1/2)$.

\subsection{\PC}

\begin{definition}[\PC~\cite{PureCircuit}]\label{def:PC}
An instance of the \PC problem is given by a set $V$ of nodes (or \emph{variables}), and a set
$G$ of gate-constraints (or \emph{gates}). Each gate $g \in G$ is of the
form $g = (T,u,v,w)$ where $u,v,w \in V$ are distinct nodes, and $T \in \{\NOT, \OR, \PURE\}$ is the type of the gate, with the following interpretation.
\begin{itemize}
	\item If $T=\NOT$, then $u$ is the input of the gate, and $v$ is its output. Variable $w$ is unused, and for notational simplicity we write $v=\NOT(u)$.
	\item If $T=\OR$, then $u$ and $v$ are the inputs of the gate, and $w$ is its output. We write $w=\OR(u,v)$.
	\item If $T=\PURE$, then $u$ is the input of the gate, and $v$ and $w$ are its outputs. We write $(v,w)=\PURE(u)$.
\end{itemize}
We require that each node is the output of exactly one gate.

A solution to instance $(V,G)$ is an assignment $\pvalonly: V \to \{0,1,\bbot\}$ that satisfies all the gates. These gates' constraints are defined in Figure~\ref{fig:pc}.\footnote{More gates are presented in this figure; these serve equivalent definitions of \PC used in this work.} 
\end{definition}

\begin{figure}[t]
\centering
\small
\resizebox{\linewidth}{!}{%
\begin{tabular}[t]{c||c}
\toprule
$u$ & $v$\\
\midrule
$0$ & $1$\\
$1$ & $0$\\
$\bbot$ & $*$\\
\bottomrule
\multicolumn{2}{c}{\NOT gate}
\end{tabular}
\qquad
\begin{tabular}[t]{cc||c}
\toprule
$u$ & $v$ & $w$\\
\midrule
$1$ & $*$ & $1$\\
$*$ & $1$ & $1$\\
$0$ & $0$ & $0$\\
\multicolumn{2}{c||}{else} & $*$\\
\bottomrule
\multicolumn{3}{c}{\OR gate}
\end{tabular}
\qquad
\begin{tabular}[t]{cc||c}
\toprule
$u$ & $v$ & $w$\\
\midrule
$1$ & $*$ & $0$\\
$*$ & $1$ & $0$\\
$0$ & $0$ & $1$\\
\multicolumn{2}{c||}{else} & $*$\\
\bottomrule
\multicolumn{3}{c}{\NOR gate}
\end{tabular}
\qquad
\begin{tabular}[t]{cc||c}
\toprule
$u$ & $v$ & $w$\\
\midrule
$0$ & $*$ & $0$\\
$*$ & $0$ & $0$\\
$1$ & $1$ & $1$\\
\multicolumn{2}{c||}{else} & $*$\\
\bottomrule
\multicolumn{3}{c}{\AND gate}
\end{tabular}
\qquad
\begin{tabular}[t]{cc||c}
\toprule
$u$ & $v$ & $w$\\
\midrule
$0$ & $*$ & $1$\\
$*$ & $0$ & $1$\\
$1$ & $1$ & $0$\\
\multicolumn{2}{c||}{else} & $*$\\
\bottomrule
\multicolumn{3}{c}{\NAND gate}
\end{tabular}
\qquad
\begin{tabular}[t]{c||cc}
\toprule
$u$ & $v$ & $w$\\
\midrule
$0$ & $0$ & $0$\\
$1$ & $1$ & $1$\\
$\bbot$ & \multicolumn{2}{c}{\shortstack{at least\\one in $\{0,1\}$}}\\
\bottomrule
\multicolumn{3}{c}{\PURE gate}
\end{tabular}%
}
\caption{The gates of \PC. Here $*$ denotes an arbitrary value in
$\{0,1,\bbot\}$.}
\label{fig:pc}
\end{figure}

\begin{theorem}[\cite{PureCircuit}]\label{thm:pc}
\PC with a gate set containing \PURE and a universal set of Boolean gates from Figure~\ref{fig:pc} is \PPAD-complete, even if each node is input to exactly one gate (fan-out is one). In particular, Definition~\ref{def:PC} falls under this category.
\end{theorem}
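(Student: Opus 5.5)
Since the statement is attributed to~\cite{PureCircuit}, I would reproduce their proof rather than find a new one. It splits into four parts: \PPAD membership, \PPAD-hardness for the gate set \NOT, \OR, \PURE, the strengthening to fan-out one, and the observation that other universal Boolean bases reduce to this one.

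\textbf{Membership.} I would embed the discrete problem into a continuous fixed-point problem, encoding $0\mapsto 0$, $1\mapsto 1$ and $\bbot\mapsto$ any value in $(1/4,3/4)$ under a threshold reading. Each gate is replaced by a continuous, piecewise-linear map with the following properties:
\begin{itemize}
\item \NOT and \OR are realised as $1-x$ and $\min\{x+y,1\}$, clipped suitably;
\item \PURE$(u)$ outputs $v=\min\{1,\max\{0,4u-1\}\}$ and $w=\min\{1,\max\{0,4u-2\}\}$. With the boundary points placed correctly, for every $u$ at least one of $v,w$ lies in $\{0,1\}$, and both equal the pure bit when $u$ is pure.
\end{itemize}
An approximate fixed point of this map, with sufficiently small inverse-polynomial precision, is computable in \PPAD by the standard Linear-FIXP/\PPAD theory. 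Reading off values with thresholds $1/4,3/4$ then gives a \PC solution. The thresholds and slopes must be chosen so that approximation slack cannot move a value across a threshold in a way that breaks a gate.

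\textbf{Hardness.} I would reduce from a \PPAD-complete discrete Brouwer or End-of-Line formulation: a Boolean circuit $C$ computing a displacement on a fine grid, where the solutions are panchromatic or fixed cells. The reduction has four steps.
\begin{enumerate}
\item Encode the current point in unary or bit form as \PC nodes.
\item Evaluate $C$ using \NOT and \OR gates.
\item Feed the output back to form a cycle.
\item Insert \PURE gates to fight the ``$\bbot$-infection'' of coordinates near cell boundaries.
\end{enumerate}
The key point is that one \PURE gate applied to an ambiguous coordinate yields two candidate points, at least one of which is pure. Iterating \PURE to branch into polynomially many sampled copies, in the style of the averaging/sampling trick of~\cite{CDT09}, ensures that a majority or an \OR-combination of the evaluated copies is correct. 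A careful accounting then shows that any satisfying assignment decodes to a solution of the Brouwer instance.

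\textbf{Fan-out one and other bases.} Universality of the other Boolean bases follows because \NOR, \AND and \NAND each simulate \NOT and \OR with constant overhead. The gadgets must behave correctly with $\bbot$ inputs as well; this holds because each table in Figure~\ref{fig:pc} is the ``Kleene'' extension, so compositions remain sound.

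For fan-out one, every node used multiple times must be duplicated without a generic copy gate. I would use \PURE itself as the duplicator, and make the hardness construction tolerant of this. When the input is pure, both copies are faithful. When it is $\bbot$, the copies are arbitrary except that one is pure, and I would argue that the sampling/\OR-aggregation already tolerates arbitrary values on ambiguous coordinates. Chains of such duplications then give any required fan-out.

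\textbf{Main obstacle.} The difficult step is the interplay between the \PURE-based sampling in the hardness proof and this duplication. One must verify that no unforced value can propagate to corrupt a coordinate that was pure in the intended encoding. My first attempt would be to route all multi-use signals before the \PURE sampling layer, so that only already-robust signals are ever duplicated.
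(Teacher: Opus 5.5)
The paper gives no proof of this theorem; it is imported from~\cite{PureCircuit}. Your proposal therefore has to stand on its own as a reconstruction. Its hardness part, which is the real content of the theorem, does not.

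The step that is meant to replace averaging does not make sense as stated. In the sampling technique of~\cite{CDT09}, a solution yields a panchromatic cell because the \emph{sum} of the sampled displacement vectors nearly cancels. At a solution there is no single ``correct'' value for a majority or an \OR of evaluated copies to recover, and neither operation can express cancellation.

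What is missing is an explicit Boolean update of the encoded point, together with a proof that every configuration consistent with it forces all colours among the uncorrupted samples. This proof must also cover the case where the aggregated signals are themselves $\bbot$, which leaves the update unconstrained.

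Moreover, branching a bit through \PURE does not create samples at shifted locations as in~\cite{CDT09}. Copies of a pure bit coincide, and a chain of \PURE gates on a $\bbot$ bit yields arbitrary pure copies with at most one $\bbot$. The good samples are therefore arbitrary combinations of both sides of every ambiguous bit. With a binary encoding this destroys locality: an ambiguous high-order bit next to pure low-order bits gives good samples half the grid apart.

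You would need a locality-preserving (e.g.\ unary) encoding that also copes with non-monotone pure patterns, plus a bound on corrupted samples (at most one per ambiguous bit). ``A careful accounting then shows'' is exactly the proof that is absent.

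There are also two smaller corrections.

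\textbf{Membership.} Your maps clash with threshold decoding. A \PURE output $\min\{1,\max\{0,4u-1\}\}=0.2$ decodes to $0$, yet $\min\{0.2+0.2,1\}=0.4$ decodes to $\bbot$, which violates $\OR(0,0)=0$. It is simpler to take an exact fixed point of the piecewise-linear map, using $\linearFIXP=\PPAD$, and decode only the exact values $0,1$ as pure. The maps in Lemma~\ref{lem:pc-boundary-totality} work this way.

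\textbf{Fan-out one.} This is not the main obstacle, and it should not be entangled with the sampling layer. Every constraint in Figure~\ref{fig:pc} only weakens when an input is replaced by $\bbot$. So after hardness is established, replace each node of fan-out $k\ge2$ by a tree of $k-1$ \PURE gates feeding distinct copies to its consumers. In any solution of the new instance, a pure node has all its copies equal to it. For a $\bbot$ node, each original consumer already faces its weakest constraint. Restricting to the original nodes therefore gives a solution.

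The same device handles the requirement that a gate's nodes be distinct: \NOT from \NAND needs two \PURE copies of $x$, since $\NAND(x,x)$ is not a legal gate. A node of fan-out zero can be absorbed by the gadget $t'=\OR(x,t)$, $t=\NOT(t')$, which is satisfiable for every value of $x$.
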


\section{Hardness for $\eps < 1/6$}
\label{sec:hardness}

Here we prove that for any fixed $0<\eps<1/6$, \eps-\GC is \PPAD-hard even when only four gate types are used, namely, $G_\zeta, G_-, G_<$, and $G_\vee$ (for three more sets consisting of four gate types, see Appendix~\ref{app:alt-four-gates}). Moreover, each \emph{gadget}, i.e., a construction of \GC gates,  uses at most 12 \GC gates, and there is a single $G_1$ gate in the whole construction (for an alternative hardness proof that uses at most 6 gates per gadget, but 6 gate types, see Appendix~\ref{app:hardness}).

We reduce from \PC. The idea is to simulate each type of \PC gate with a gadget, when the tolerance of the latter gates is $\eps < 1/6$. A value of the constructed \GC instance is decoded by
\begin{equation}\label{eq:decode}
[0,\eps]\mapsto0,\qquad
[1-\eps,1]\mapsto1,\qquad
(\eps,1-\eps)\mapsto\bbot.
\end{equation}
We call a \GC node's value \emph{pure} if it decodes to $0$ or $1$.

First, we create one shared node $q=G_1()$, so that $1-\eps \le q \le 1$.
The same node feeds into all gadgets in the reduction that need it. Next, we present the gadgets that simulate the \PC gates.

\subsection{Simulating \NOT and \OR}\label{sec:sim-not-or}

According to Definitions \ref{def:GC} and \ref{def:PC}, gate $G_\vee$ directly simulates \OR.

We next implement \NOT without using $G_\neg$. Given an input $x$, we create the gadget
\begin{equation*}
t=G_-(q,x),
\qquad
y=G_<(x,t).
\end{equation*}
If $x\le\eps$, then
$t-x\ge1-4\eps>\eps$,
so $y\ge1-\eps$. If $x\ge1-\eps$, then
$t\le2\eps$ and
$x-t\ge1-3\eps>\eps$,
so $y\le\eps$. Thus, the gadget simulates \NOT correctly.

\subsection{Simulating \PURE}\label{sec:sim-purify}

We first describe a four-gate block that will be used twice. Given
three \GC nodes $x,y,t$, let block $B$ be defined as $(d,z)=B(x,y,t)$, where
\begin{equation}\label{eq:block}
\begin{aligned}
d&=G_-(x,y),&
p&=G_<(d,y),\\
p'&=G_<(d,t),&
z&=G_\vee(p,p').
\end{aligned}
\end{equation}

The following simple observation is the core of the construction.

\begin{lemma}\label{lem:block}
Suppose that $t=G_-(q,x)$. Then every assignment satisfying the gates
of $B(x,y,t)$ has the following properties:
\begin{enumerate}
    \item $x\le y \text{ } \qquad \qquad \qquad\Longrightarrow\quad d\le\eps \quad\text{and}\quad z\ge1-\eps$, \label{eq:block-first}
    \item $x\ge1-\eps,\quad y\le\eps \quad\Longrightarrow\quad d\ge1-3\eps \quad\text{and}\quad z\le\eps$. \label{eq:block-second}
\end{enumerate}
\end{lemma}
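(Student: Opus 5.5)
We verify the two implications directly from the gate constraints in Definition~\ref{def:GC}, using $\eps<1/6$, $q\in[1-\eps,1]$, $t=\max\{q-x,0\}\pm\eps$ and $d=\max\{x-y,0\}\pm\eps$. The plan in each case is to pin down $d$ from the subtraction gate. Then we show that at least one of the two comparators $p=G_<(d,y)$, $p'=G_<(d,t)$ has its premise triggered in the needed direction. Finally we read off $z$ from the OR gate.

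\emph{Case~\ref{eq:block-first}.} If $x\le y$ then $\max\{x-y,0\}=0$, so $0\le d\le\eps$. We need $\max\{p,p'\}\ge1-\eps$, which means showing $d<y-\eps$ or $d<t-\eps$. We split on $x$. If $x\le 1/2$, then $t\ge q-x-\eps\ge 1-2\eps-x\ge 1/2-2\eps$, and $t-\eps\ge1/2-3\eps>\eps\ge d$ because $\eps<1/6$; hence $p'\ge1-\eps$. If $x>1/2$, then $y\ge x>1/2$, so $y-\eps>1/2-\eps>\eps\ge d$, giving $p\ge1-\eps$. In either subcase the premise $\max\{p,p'\}\ge 1-\eps$ of $G_\vee$ holds, so $z\ge1-\eps$. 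The threshold $1/2$ is a convenient choice that makes both subcases go through when $\eps<1/6$; any split point in the window where both strict inequalities hold would also work.

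\emph{Case~\ref{eq:block-second}.} If $x\ge1-\eps$ and $y\le\eps$, then $x-y\ge1-2\eps>0$, so $d\ge1-3\eps$. We must show both $p\le\eps$ and $p'\le\eps$. That requires $d>y+\eps$ and $d>t+\eps$. The first holds since $y+\eps\le2\eps<1-3\eps$. For the second, $q-x\le\eps$ gives $t\le 2\eps$, so $t+\eps\le 3\eps<1-3\eps\le d$. Both comparators then output at most $\eps$, so $\max\{p,p'\}\le\eps$ and the $G_\vee$ gate forces $z\le\eps$.

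The only delicate point is case~\ref{eq:block-first}. There neither comparator alone suffices: $y$ can be small, and $t$ can be small when $x$ is large. The proof therefore needs the case split showing that $y$ and $t$ cannot both be small when $x\le y$. That is exactly where $t=G_-(q,x)$ and the bound $\eps<1/6$ (via $1/2-3\eps>\eps$ and $1/2-\eps>\eps$) are used. The remaining steps are routine interval arithmetic.
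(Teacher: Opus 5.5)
\textbf{Gap in Case~1.} Your strategy for Property~1 is sound, but the split at $x\le 1/2$ does not work on the whole range $\eps<1/6$. Your Case~2 is correct and matches the paper's argument.

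In the subcase $x\le 1/2$ you conclude $t-\eps\ge 1/2-3\eps>\eps$ ``because $\eps<1/6$''. However, $1/2-3\eps>\eps$ is equivalent to $\eps<1/8$. For $\eps\in[1/8,1/6)$ the comparator $p'=G_<(d,t)$ need not fire in that subcase. For example, take $\eps=0.15$, $q=0.85$, $x=1/2$. The subtraction gate allows $t=0.2$, so $t-\eps=0.05$, while $d$ may equal $0.15$. (Here $p$ happens to be forced because $y\ge 1/2$, so the lemma itself survives, but your subcase argument does not show it.) As written, your proof of Property~1 covers only $\eps<1/8$. That would shrink the hardness range the block is meant to support.

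\textbf{How to repair it.} The fix is the one your closing remark gestures at, once the window is computed correctly. A split at $x\le\theta$ versus $x>\theta$ goes through with your bounds as long as $2\eps\le\theta<1-4\eps$:
the first subcase needs $1-3\eps-\theta>\eps$, and the second needs $y>\theta\ge 2\eps$. This window is nonempty exactly when $\eps<1/6$. Both $\theta=2\eps$ and $\theta=1/3$ work, but $\theta=1/2$ lies in the window only when $\eps<1/8$.

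The paper avoids the split altogether. From $t\ge q-x-\eps\ge 1-2\eps-x$ it gets $y+t\ge x+t\ge 1-2\eps>4\eps$. Hence one of $y,t$ exceeds $2\eps\ge d+\eps$, and the corresponding comparator fires.

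This also corrects your remark about where $\eps<1/6$ enters. Property~1 uses it tightly, not just Property~2: at $\eps=1/6$, the choice $x=y=1/3$, $q=5/6$, $t=1/3$, $d=1/6$ triggers neither comparator.
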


\begin{proof}
Suppose first that $x\le y$. Then $d\le\eps$ from \eqref{eq:block}. Moreover, since $t=G_-(q,x)$ and $q\ge1-\eps$ (recall that $q = G_1()$ is the shared output of the constant gate), we have $x+t\ge1-2\eps$. Therefore, $y+t\ge x+t\ge1-2\eps>4\eps$.
Hence at least one of $y,t$ is strictly larger than $2\eps$.
Since $d\le\eps$, either comparator $G_<(d,y)$ forces $p \geq 1-\eps$ or comparator $G_<(d,t)$ forces $p' \geq 1-\eps$. Hence, $G_\vee(p,p')$ forces $z\ge1-\eps$, proving Property~\ref{eq:block-first} of the lemma. 

Now suppose that $x\ge1-\eps$ and $y\le\eps$.
The subtraction gate $G_-(x,y)$ gives $d\ge x-y-\eps\ge1-3\eps$.
Also, since $q \le 1$, we get $t \le 1-(1-\eps) + \eps = 2\eps$.
Consequently, $d-y\ge1-4\eps>\eps$, and $d-t\ge1-5\eps>\eps$, where the rightmost inequalities hold due to our choice of $\eps < 1/6$.
The outputs of both the comparators in \eqref{eq:block} are forced to at most $\eps$, and hence
$z\le\eps$, proving Property~\ref{eq:block-second} of the lemma.
\end{proof}

We can now construct the \PURE gadget. Given its input $u$, first create
\begin{equation}\label{eq:purify-core}
r=G_-(q,u),\qquad
a=G_<(u,r),\qquad
s=G_-(q,a).
\end{equation}
Then use the block in the two opposite orders:
$(b,w)=B(a,u,s)$, and $(c,z)=B(u,a,r)$, and let $v=G_<(z,c)$.
The outputs of the \PURE gadget are $v,w$.
See Figure~\ref{fig:gadget} for a depiction.

The construction uses three gates in~\eqref{eq:purify-core}, four in
each copy of block $B$, and one final comparator outputting $v$, for a total of
twelve gates. The shared constant node~$q$ is not included in this
count.

\begin{figure}[t]
\centering
\begin{tikzpicture}[
  >=Stealth,
  gate/.style={
    draw,rounded corners,inner sep=5pt,
    font=\small,align=center
  },
  box/.style={
    draw,rounded corners,inner sep=7pt,
    font=\small,align=center,text width=5.1cm
  },
  var/.style={
    draw,circle,inner sep=2pt,font=\small
  },
  terminal/.style={
    gate,double,double distance=1.1pt,
    minimum width=10mm,minimum height=9mm
  },
  wire/.style={->,semithick}
]

\node[terminal] (u) at (-3.2,0) {$u$};
\node[var]      (q) at (3.2,0) {$q$};

\node[gate] (core) at (0,-1.3)
{$r=G_-(q,u),\quad a=G_<(u,r),\quad s=G_-(q,a)$};

\draw[wire] (u) -- (core);
\draw[wire] (q) -- (core);

\node[box] (left) at (-3.0,-3.5)
{\textbf{First copy of $B$}\\[2pt]
$b=G_-(a,u)$\\
$h_1=G_<(b,u),\quad h_2=G_<(b,s)$\\
$w=G_\vee(h_1,h_2)$};

\node[box] (right) at (3.0,-3.5)
{\textbf{Second copy of $B$}\\[2pt]
$c=G_-(u,a)$\\
$\ell_1=G_<(c,a),\quad \ell_2=G_<(c,r)$\\
$z=G_\vee(\ell_1,\ell_2)$};

\draw[wire]
  (core) -- node[left,font=\scriptsize] {$a,u,s$} (left);
\draw[wire]
  (core) -- node[right,font=\scriptsize] {$u,a,r$} (right);

\node[terminal] (w) at (-3.0,-5.8) {$w$};
\node[terminal] (v) at (3.0,-5.8) {$v=G_<(z,c)$};

\draw[wire] (left) -- (w);
\draw[wire] (right) -- node[right,font=\scriptsize] {$z,c$} (v);

\end{tikzpicture}
\caption{The \PURE gadget with input $u$ and outputs $v,w$,
shown with double borders. The same four-gate block is used twice,
with its first two inputs reversed. Node $q=G_1()$ is shared among
all gadgets that need a constant gate in the reduction.}
\label{fig:gadget}
\end{figure}

\paragraph*{Correctness of the \PURE gadget}

We first show that one of the two outputs is always pure.

\begin{lemma}\label{lem:purity}
Consider the \PURE gadget $(v,w)=\PURE(u)$. Then, 
\begin{enumerate}
    \item if $a \le u$, then $w \ge 1-\eps$, \label{eq:purity-first}
    \item if $a \ge u$, then $v \le \eps$. \label{eq:purity-second}
\end{enumerate}
Hence, for every $u\in[0,1]$, at least one of $v,w$ is pure.
\end{lemma}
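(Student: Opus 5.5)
Both properties follow from Lemma~\ref{lem:block}, applied to the two copies of block $B$. The final claim then follows because the two cases $a\le u$ and $a\ge u$ cover every real $u$. The only work is checking that the hypothesis of Lemma~\ref{lem:block}, namely that the third argument $t$ equals $G_-(q,x)$ for the first argument $x$, holds in each copy.

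For Property~\ref{eq:purity-first}, consider the first copy $(b,w)=B(a,u,s)$. Here $x=a$, $y=u$ and $t=s$. By \eqref{eq:purify-core}, $s=G_-(q,a)$, which is exactly the required hypothesis $t=G_-(q,x)$. If $a\le u$, then $x\le y$, so Property~\ref{eq:block-first} of Lemma~\ref{lem:block} gives $w\ge1-\eps$.

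For Property~\ref{eq:purity-second}, consider the second copy $(c,z)=B(u,a,r)$. Here $x=u$, $y=a$ and $t=r=G_-(q,u)$, so the hypothesis again holds. If $a\ge u$, then $x\le y$, and Property~\ref{eq:block-first} gives $c\le\eps$ and $z\ge1-\eps$. It remains to pass this through the final comparator $v=G_<(z,c)$. We have $z-c\ge1-2\eps$, and $1-2\eps>\eps$ because $\eps<1/3$ (indeed $\eps<1/6$). So the premise $z>c+\eps$ of $G_<$ holds, which forces $v\le\eps$.

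Finally, every $u$ satisfies $a\le u$ or $a\ge u$. In the first case $w\ge1-\eps$ decodes to $1$ under \eqref{eq:decode}; in the second case $v\le\eps$ decodes to $0$. Either way at least one of $v,w$ is pure.

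The argument is essentially routine. The only point needing care is the role-swapping between the two copies of $B$: one must confirm that the third input of each copy is the subtraction from $q$ of that copy's first input, and not of its second.
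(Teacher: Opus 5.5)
Your proposal is correct and follows the paper's proof essentially verbatim. You apply Property~\ref{eq:block-first} of Lemma~\ref{lem:block} to $B(a,u,s)$ using $s=G_-(q,a)$ and to $B(u,a,r)$ using $r=G_-(q,u)$, then push $z-c\ge1-2\eps>\eps$ through the final comparator $v=G_<(z,c)$ to obtain $v\le\eps$.
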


\begin{proof}
If $a\le u$, apply Property~\ref{eq:block-first} of Lemma~\ref{lem:block} to $(b,w)=B(a,u,s)$. Since $s=G_-(q,a)$, we obtain $w \ge 1-\eps$. If $a \geq u$, again apply the same property to $(c,z)=B(u,a,r)$. This gives $c\le\eps$, and $z\ge1-\eps$.
Hence $z-c\ge1-2\eps>\eps$, so the final comparator $v=G_<(z,c)$ forces $v\le\eps$.
Thus, in either case, one of the two outputs is pure.
\end{proof}

It remains to check that a pure input is copied to both outputs.

\begin{lemma}\label{lem:pure-inputs}
Consider the \PURE gadget $(v,w)=\PURE(u)$. If $u\le\eps$, then $v,w\le\eps$. If
$u\ge1-\eps$, then $v,w\ge1-\eps$.
\end{lemma}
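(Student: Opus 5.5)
The proof is a direct case analysis on the two pure cases of $u$. In each case I will trace values through the core gates in \eqref{eq:purify-core} and then invoke Lemma~\ref{lem:block} on one of the two copies of $B$. The other output will come from Lemma~\ref{lem:purity} or from a direct comparator check. Throughout, I use $1-\eps\le q\le 1$ and $\eps<1/6$.

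\textbf{Case $u\le\eps$.}
\begin{enumerate}
\item From $r=G_-(q,u)$ I get $r\ge q-u-\eps\ge 1-3\eps$. Hence $r-u\ge 1-4\eps>\eps$, and the comparator $a=G_<(u,r)$ forces $a\ge 1-\eps$.
\item Then $a\ge 1-\eps\ge u$, so Property~\ref{eq:purity-second} of Lemma~\ref{lem:purity} directly gives $v\le\eps$.
\item For $w$, I apply Property~\ref{eq:block-second} of Lemma~\ref{lem:block} to $(b,w)=B(a,u,s)$, with $x=a\ge 1-\eps$ and $y=u\le\eps$. The lemma's hypothesis $t=G_-(q,x)$ holds because $s=G_-(q,a)$. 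This yields $w\le\eps$.
\end{enumerate}

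\textbf{Case $u\ge 1-\eps$.}
\begin{enumerate}
\item From $r=G_-(q,u)$ I get $r\le \max\{q-u,0\}+\eps\le 2\eps$. Then $u-r\ge 1-3\eps>\eps$, so $a=G_<(u,r)$ forces $a\le\eps$.
\item Now $a\le\eps<u$, so Property~\ref{eq:purity-first} of Lemma~\ref{lem:purity} gives $w\ge 1-\eps$.
\item For $v$, I apply Property~\ref{eq:block-second} of Lemma~\ref{lem:block} to $(c,z)=B(u,a,r)$, with $x=u\ge1-\eps$, $y=a\le\eps$ and $t=r=G_-(q,u)$. This gives $c\ge 1-3\eps$ and $z\le\eps$.
\item Then $c-z\ge 1-4\eps>\eps$, so the final comparator $v=G_<(z,c)$ forces $v\ge 1-\eps$.
\end{enumerate}

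The only step needing care is checking that the hypothesis $t=G_-(q,x)$ of Lemma~\ref{lem:block} matches each block instance, namely $s=G_-(q,a)$ for the first copy and $r=G_-(q,u)$ for the second. The remaining work is routine inequality checking, which holds comfortably for $\eps<1/6$; the tightest margin is $1-4\eps>\eps$, which already holds for $\eps<1/5$.
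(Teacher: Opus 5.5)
Your proposal is correct and matches the paper's proof step for step: the same bounds on $r$ force $a$, Lemma~\ref{lem:purity} supplies one output, and Property~\ref{eq:block-second} of Lemma~\ref{lem:block} supplies the other (followed, when $u\ge1-\eps$, by the final comparator $v=G_<(z,c)$). One small correction to your closing remark: the binding inequality is $1-5\eps>\eps$, which sits inside Lemma~\ref{lem:block} at the comparator $G_<(d,t)$, so the lemma genuinely needs $\eps<1/6$ and not merely $\eps<1/5$.
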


\begin{proof}
Suppose first that $u\le\eps$. Since $q \geq 1-\eps$ and $r=G_-(q,u)$, we get
$r\ge (1-\eps - \eps) - \eps = 1-3\eps$, and so, $r-u \ge 1-4\eps>\eps$.
Thus, the comparator $a=G_<(u,r)$ forces $a\ge1-\eps$.
In particular, $a \geq 1-\eps > \eps \geq u$, so Property~\ref{eq:purity-second} of
Lemma~\ref{lem:purity} already gives $v \le \eps$.
Moreover, applying Property~\ref{eq:block-second} of Lemma~\ref{lem:block} to
$(b,w)=B(a,u,s)$ gives $w\le\eps$.

Now suppose that $u\ge1-\eps$. Since $q \le 1$, we have $r \le 1 - (1-\eps) + \eps = 2\eps$, and therefore, $u-r\ge1-3\eps>\eps$.
Hence $a\le\eps$.
In particular, $a \le \eps < 1-\eps \le u$, so Property~\ref{eq:purity-first} of Lemma~\ref{lem:purity} already gives $w\ge1-\eps$. Applying Property~\ref{eq:block-second} of Lemma~\ref{lem:block} to $(c,z)=B(u,a,r)$ gives $c\ge1-3\eps$, and $z\le\eps$.
Therefore $c-z\ge1-4\eps>\eps$, and comparator $v=G_<(z,c)$ forces $v\ge1-\eps$.
\end{proof}

Lemmas~\ref{lem:purity} and~\ref{lem:pure-inputs} together show that Figure~\ref{fig:gadget}'s gadget simulates \PURE correctly.

\subsection{The reduction}

\begin{theorem}\label{thm:hardness}
For every constant $0<\eps<1/6$, \eps-\GC is
\PPAD-hard even when restricted to the four gate types
$\{G_\zeta,G_-,G_<,G_\vee\}$.
Moreover, only the constant parameter $\zeta=1$ is required.
\end{theorem}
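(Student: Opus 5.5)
The plan is a polynomial-time reduction from \PC, which is \PPAD-complete by Theorem~\ref{thm:pc}. Since \PC is defined by Definition~\ref{def:PC} with gate set $\{\NOT,\OR,\PURE\}$, it suffices to translate each \PC gate into a gadget of \GC gates from $\{G_\zeta,G_-,G_<,G_\vee\}$. We then show that decoding any \eps-\GC solution via~\eqref{eq:decode} yields a \PC solution.

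\textbf{Construction.} Given a \PC instance $(V,G)$, I create one \GC node $x_v$ for every $v\in V$, together with the single shared node $q=G_1()$; this is the only constant gate, so only $\zeta=1$ is used. For each gate of $G$ I add a gadget whose output node is the designated $x$-node of the \PC output:
\begin{itemize}
\item $w=\OR(u,v)$ becomes $x_w=G_\vee(x_u,x_v)$.
\item $v=\NOT(u)$ becomes the two-gate gadget $t=G_-(q,x_u)$, $x_v=G_<(x_u,t)$ of Section~\ref{sec:sim-not-or}.
\item $(v,w)=\PURE(u)$ becomes the twelve-gate gadget of Figure~\ref{fig:gadget}, with outputs $x_v,x_w$.
\end{itemize}
All auxiliary nodes are fresh. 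Every \PC node is the output of exactly one gate, so every \GC node is likewise the output of exactly one gate, as Definition~\ref{def:GC} requires. The instance has size linear in $|G|$ and is clearly computable in polynomial time.

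\textbf{Correctness.} Let $\gvalonly$ be any \eps-\GC solution, and define $\pvalonly(v)$ by decoding $\gvalonly(x_v)$ with~\eqref{eq:decode}. I check each gate type against Figure~\ref{fig:pc}.
\begin{itemize}
\item For \OR: if some input decodes to $1$, then $\max\ge1-\eps$ and $G_\vee$ forces the output to be $\ge1-\eps$. If both inputs decode to $0$, the output is forced to be $\le\eps$. In every other case there is no constraint.
\item For \NOT: the calculation in Section~\ref{sec:sim-not-or} gives the required implications, since $1-4\eps>\eps$ and $1-3\eps>\eps$ both hold for $\eps<1/6$. An input decoding to $\bbot$ imposes nothing.
\item For \PURE: Lemma~\ref{lem:pure-inputs} shows that a pure input is copied to both outputs. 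Lemma~\ref{lem:purity} shows that for an arbitrary input at least one output is pure, which covers the $\bbot$ row.
\end{itemize}
All three verifications are local to one gadget. The gadget lemmas assume only that the gadget's own gates and the gate $q=G_1()$ are \eps-satisfied, which holds in any global solution. Hence $\pvalonly$ satisfies every \PC gate.

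\textbf{Where the care is needed.} The substantive work is already in Lemmas~\ref{lem:block}, \ref{lem:purity} and~\ref{lem:pure-inputs}. What remains is bookkeeping:
\begin{itemize}
\item Gadgets share only $q$ and the designated $x$-nodes, so no constraint from one gadget interferes with another.
\item The decoding thresholds in~\eqref{eq:decode} match exactly the premises and conclusions in the lemmas, namely $\le\eps$ and $\ge1-\eps$.
\item Every inequality used is strict under $\eps<1/6$, the binding one being $1-5\eps>\eps$ in Lemma~\ref{lem:block}. The argument is therefore valid for every fixed constant $\eps\in(0,1/6)$.
\end{itemize}
Finally, the constructed instance uses only the gate types $G_1$, $G_-$, $G_<$ and $G_\vee$, so the restriction claimed in the statement holds.
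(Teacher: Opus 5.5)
Your proposal is correct and follows the paper's proof step for step. It reduces from \PC with a single shared node $q=G_1()$, uses $G_\vee$ for \OR, the $G_-/G_<$ gadget for \NOT and the twelve-gate gadget for \PURE, then decodes via~\eqref{eq:decode} and justifies correctness with Lemmas~\ref{lem:block}, \ref{lem:purity} and~\ref{lem:pure-inputs}. One small correction that does not affect the argument: $1-5\eps>\eps$ is not the only binding inequality, since $1-2\eps>4\eps$ in the proof of Lemma~\ref{lem:block} also fails exactly at $\eps=1/6$.
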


\begin{proof}
To show the hardness, we
reduce from \PC (Theorem~\ref{thm:pc}). Create the shared node
$q=G_1()$. Replace every \NOT, \OR, and \PURE gate by the gadgets detailed in Section~\ref{sec:sim-not-or} and Section~\ref{sec:sim-purify}, respectively.

Consider any solution of the resulting \GC instance and decode
its values according to~\eqref{eq:decode}. The constructions
of Sections~\ref{sec:sim-not-or} and~\ref{sec:sim-purify} satisfy the \NOT, \OR, and \PURE requirements. Hence the decoded assignment is a solution of the original \PC instance. Finally, each \PC gate is replaced by a constant number of \GC gates, so the
reduction is polynomial.
\end{proof}

\section{A polynomial-time algorithm for
\texorpdfstring{$\eps = 1/6$}{epsilon = 1/6}}
\label{sec:algorithm}

We now prove the algorithmic part of Theorem~\ref{thm:main}. Namely, we present a polynomial-time algorithm that computes a solution to $\tfrac16$-\GC. As discussed in Section~\ref{sec:GC}, there is an inherent non-monotonicity issue with \eps-\GC. This does not allow us to infer that, given a \GC instance, a satisfying assignment for $\eps=1/6$ remains satisfying for any $\eps > 1/6$. Indeed, our $\tfrac16$-\GC algorithm works only for $\eps \in [1/6, 1/3)$, and a slight variation of it has to be used for $\eps \in [1/3, 1/2)$ (notice that for $\eps \geq 1/2$, the all-$1/2$ assignment satisfies every gate constraint). For completeness, we present the $\eps > 1/6$ analysis in Appendix~\ref{app:eps-above-1/6}.

\subsection{Information states}\label{sec:inf-states}

The algorithm uses a set of \emph{(information) states} $\Sstate=\{M,L,C,D,\lst,A,H\}$, and maintains a state assignment $\sigma:V\to\Sstate$.
The states are ordered by the following \emph{information order} $\preceq$:
\begin{equation}\label{eq:order}
M\prec L\prec C\prec D,\qquad
M\prec L\prec C\prec\lst,\qquad
M\prec A\prec H.
\end{equation}
Each state $p\in\Sstate$ also has a \emph{numerical interpretation}, or \emph{value},
$\nu(p)\in[0,1]$:
\begin{equation}\label{eq:nu}
\nu(L)=\nu(\lst)=\tfrac16,\quad
\nu(M)=\nu(C)=\tfrac13,\quad
\nu(D)=\nu(A)=\tfrac12,\quad
\nu(H)=\tfrac56.
\end{equation}
The states' information order and their numerical values are depicted in Figure~\ref{fig:poset}.

Algorithm~\ref{alg:propagation} starts with every variable
in state $M$, which is the least element and represents an unresolved variable, and repeatedly applies the update rules
described below. Non-comparator outputs follow monotone
rules, while comparators commit to an output when first
forced. At termination, the algorithm assigns each variable
$x$ the value $\nu(\sigma(x))$.

At tolerance $1/6$, the only states whose numerical values trigger the
Boolean low and high premises are
\begin{equation}\label{eq:bands}
\{p\in\Sstate:\nu(p)\le\tfrac16\}=\{L,\lst\},
\qquad
\{p\in\Sstate:\nu(p)\ge\tfrac56\}=\{H\}.
\end{equation}
For simplicity, we will say that a node (or variable) of the circuit is \emph{low} if its state is in $\{L,\lst\}$, and it is \emph{high} if its state is $H$.

\begin{figure}[t]
\centering
\begin{minipage}[c]{0.61\textwidth}
\captionsetup[algorithm]{
  justification=raggedright,
  singlelinecheck=false,
  margin=0pt
}
\captionof{algorithm}{The $1/6$-\GC algorithm.}
\label{alg:propagation}
\begin{algorithmic}[1]
\State $\sigma(x)\gets M$ for every variable $x$
\Repeat
  \State $\mathit{changed}\gets\mathsf{false}$
  \For{each gate $g$ with output $z$}
    \If{$g$ is not a comparator}
      \State $p\gets$ state prescribed by $g$'s rule
      \If{$\sigma(z)\ne p$}
        \State $\sigma(z)\gets p$;
               $\mathit{changed}\gets\mathsf{true}$
      \EndIf
    \ElsIf{$\sigma(z)=M$}
      \If{the inputs of $g$ force high}
        \State $\sigma(z)\gets H$;
               $\mathit{changed}\gets\mathsf{true}$
      \ElsIf{the inputs of $g$ force low}
        \State $\sigma(z)\gets L$;
               $\mathit{changed}\gets\mathsf{true}$
      \EndIf
    \EndIf
  \EndFor
\Until{$\mathit{changed}=\mathsf{false}$}
\State \Return $x\mapsto\nu(\sigma(x))$
\end{algorithmic}
\end{minipage}\hfill
\begin{minipage}[c]{0.36\textwidth}
\centering
\begin{tikzpicture}[x=0.85cm,
                     st/.style={font=\small},
                     lab/.style={font=\scriptsize,text=gray}]
\node[st] (M) at (0,0) {$M$};
\node[st] (L) at (-1.3,1) {$L$};
\node[st] (C) at (-1.3,2) {$C$};
\node[st] (D) at (-2.1,3) {$D$};
\node[st] (Ls) at (-0.5,3) {$\lst$};
\node[st] (A) at (1.3,1) {$A$};
\node[st] (H) at (1.3,2) {$H$};
\draw (M)--(L)--(C)--(D);
\draw (C)--(Ls);
\draw (M)--(A)--(H);
\node[lab,right=1pt of M] {$1/3$};
\node[lab,left=1pt of L] {$1/6$};
\node[lab,left=1pt of C] {$1/3$};
\node[lab,left=1pt of D] {$1/2$};
\node[lab,right=1pt of Ls] {$1/6$};
\node[lab,right=1pt of A] {$1/2$};
\node[lab,right=1pt of H] {$5/6$};
\end{tikzpicture}
\captionof{figure}{The seven information states, ordered by
refinement from bottom to top, with their numerical values in gray.}
\label{fig:poset}
\end{minipage}
\end{figure}

\subsection{Update rules for non-comparator gates}\label{sec:localrules}

As a warm-up to the purpose of the information order, notice that the states $L$ and $\lst$ have the same numerical value, but only $L$ can
be refined further. This distinction allows the subtraction rule (i.e., the rule that assigns information states to the nodes of subtraction gates) to decrease its numerical output and later
increase it, while always moving upwards in the information order.

\begin{example}\label{ex:chain}
Let $z=G_-(x,y)$, where $x$ is the output of a comparator. Initially $\sigma(x)=\sigma(y)=\sigma(z)=M$.
The subtraction rule (see below) first changes $\sigma(z)$ to $L$, so the numerical
value of $z$ drops from $1/3$ to $1/6$. If the comparator producing $x$
is then forced high, $\sigma(x)$ becomes $H$ and $\sigma(z)$ becomes
$C$, raising the value of $z$ to $1/3$. If $\sigma(y)$ is later refined
to $L$, then $\sigma(z)$ becomes $D$, raising its value to $1/2$.
Every update moves upwards in the information order even though the
numerical values need not do so.
\end{example}

For every non-comparator gate we define an update rule mapping the states of its
inputs to a state of its output.

\subparagraph{Constants.}
For $G_\zeta$, choose once and for all a state $R(\zeta)$ whose numerical value is at most $1/6$ away from the gate's constant parameter $\zeta$. Formally, choose an arbitrary $R(\zeta)$ satisfying $|\nu(R(\zeta))-\zeta|\le\tfrac16$. One can easily check that such a state always exists since the intervals of radius $1/6$ around $\tfrac16,\tfrac13,\tfrac12,\tfrac56$
cover $[0,1]$. 

\subparagraph{Copy.}
For a copy gate, set $F_=(p)=p$.
Thus, the copied state, and hence its numerical value, is unchanged.

\subparagraph{Addition.}
Set
\[
F_+(p,q)=
\begin{cases}
A,&\nu(p),\nu(q)\le1/3,\\
H,&\text{otherwise.}
\end{cases}
\]

For the remaining rules, it is convenient to define the notion of an \emph{information class}, used to define monotone update rules. We will use the information classes:
\[
\mathcal I_0=\{L,C,D,\lst\},
\qquad
\mathcal I_1=\{A,H\}.
\]

\subparagraph{Subtraction.}
Set
\[
F_-(p,q)=
\begin{cases}
\lst,&p\in\mathcal I_0\text{ or }q\in\mathcal I_1,\\
C,&p=H,\ q=M,\\
D,&p=H,\ q\in\mathcal I_0,\\
L,&\text{otherwise.}
\end{cases}
\]

\subparagraph{Boolean gates.}
First interpret an input state in $\mathcal I_0$ as $0$, one in
$\mathcal I_1$ as $1$, and $M$ as $\bbot$. Apply the corresponding
three-valued \NOT, \OR, or \AND rule from Figure~\ref{fig:pc}.
If that rule forces a bit, prescribe $\lst$ for $0$ or $H$ for $1$;
otherwise prescribe $M$.
Since $\mathcal I_0$ and $\mathcal I_1$ are upward closed,
these rules are monotone in the information order.
Lemma~\ref{lem:valid} verifies that every numerically triggered
Boolean premise is satisfied.

\subparagraph{Multiplication.}
Define
\[
\begin{aligned}
P(p) &=
\begin{cases}
L, & p\in\{M,L,C,\lst\},\\
C, & p\in\{D,A\},\\
D, & p=H,
\end{cases}
\quad
&\text{and set}\quad
F_{\times c}(p) =
\begin{cases}
L, & c\le 2/5,\\
P(p), & 2/5<c<4/5,\\
p, & c\ge 4/5.
\end{cases}
\end{aligned}
\]

The following two lemmas give the properties required of these rules:
numerical validity and monotonicity in the information order.

\begin{lemma}\label{lem:valid}
For every non-comparator gate and every tuple of input states, the
numerical values of the inputs and of the state prescribed for the output
satisfy the $\tfrac16$-\GC constraint of the gate.
\end{lemma}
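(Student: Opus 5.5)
The plan is a direct finite case check, gate type by gate type. The relevant input values lie in the finite set $\{\tfrac16,\tfrac13,\tfrac12,\tfrac56\}$. For each rule I will group input states by their numerical value, or by information class where the rule is stated in those terms. Then I verify the corresponding inequality with $\eps=\tfrac16$. Throughout, I will use the band characterization \eqref{eq:bands}: a value is $\le\tfrac16$ exactly for $L,\lst$, and is $\ge\tfrac56$ exactly for $H$.

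\emph{Constants, copy, and addition.} Constants hold by the choice of $R(\zeta)$, and copy is trivial. For $F_+$, if $\nu(p),\nu(q)\le\tfrac13$, then $x+y\in[\tfrac13,\tfrac23]$ and $\nu(A)=\tfrac12$ is within $\tfrac16$ of it. Otherwise one input has value at least $\tfrac12$ and the other at least $\tfrac16$, so $\min\{x+y,1\}\ge\tfrac23$ and $\nu(H)=\tfrac56$ is within $\tfrac16$.

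\emph{Multiplication.} If $c\le\tfrac25$, then $cx\le\tfrac13$ and the output $\tfrac16$ is fine. If $c\ge\tfrac45$, then $|cx-x|\le(1-c)x\le\tfrac15\cdot\tfrac56=\tfrac16$. In the middle range I check $P$ against $cx\in(\tfrac25x,\tfrac45x)$. For $x\le\tfrac13$, the product lies in $[0,\tfrac4{15}]$, and output $\tfrac16$ works. For $x=\tfrac12$, the product lies in $(\tfrac15,\tfrac25)$, and output $\tfrac13$ works. For $x=\tfrac56$, the product lies in $(\tfrac13,\tfrac23)$, and output $\tfrac12$ works.

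\emph{Subtraction.} I go through the four branches of $F_-$ and bound $\max\{x-y,0\}$ in each.
\begin{itemize}
\item $p\in\mathcal I_0$: then $x\le\tfrac12$. The target is at most $\tfrac13$ unless $x=\tfrac12$ and $y=\tfrac16$, which gives exactly $\tfrac13$. So the output $\tfrac16$ is valid.
\item $q\in\mathcal I_1$: then $y\ge\tfrac12$, so $x-y\le\tfrac13$. The output $\tfrac16$ is valid.
\item $p=H$, $q=M$: the target is $\tfrac12$, and the output $\tfrac13$ is valid.
\item $p=H$, $q\in\mathcal I_0$: then $y\in\{\tfrac16,\tfrac13\}$, so the target lies in $\{\tfrac23,\tfrac12\}$. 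The output $\tfrac12$ is valid.
\item Otherwise: $p\in\{M,A\}$ and $q\in\{M\}\cup\mathcal I_0$ with $q$ not in $\mathcal I_1$. Here $x\le\tfrac12$ and $y\ge\tfrac16$, so the target is at most $\tfrac13$, and the output $\tfrac16$ is valid.
\end{itemize}

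\emph{Boolean gates.} These are the main subtlety. The premises are numerical, while the rules read information classes. So I must show that whenever a premise is triggered numerically, the class-based three-valued rule forces the required bit. A low premise $x\le\tfrac16$ means the state is $L$ or $\lst$, hence lies in $\mathcal I_0$ and is read as $0$. A high premise $x\ge\tfrac56$ means the state is $H\in\mathcal I_1$, read as $1$. By the three-valued tables of Figure~\ref{fig:pc}, the rule then prescribes $\lst$ or $H$ as needed, and these satisfy the output bounds $\le\tfrac16$ and $\ge\tfrac56$ respectively.

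When no premise holds, the output is unconstrained. The conflicting situation cannot occur: for example, \OR with both inputs low numerically forces $0$, and such inputs are never in $\mathcal I_1$. The one point needing care is that the rule may force a bit without any numerical premise holding, for instance reading $C$ as $0$. That is harmless, because the constraint is then vacuous.

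I expect the subtraction case analysis to be the main place where errors could hide, because of the edge case $x=\tfrac12$, $y=\tfrac16$ in the first branch.
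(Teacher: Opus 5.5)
Your proposal is correct and follows essentially the same route as the paper: a gate-by-gate finite check over the four numerical levels $\tfrac16,\tfrac13,\tfrac12,\tfrac56$. As in the paper, you handle the Boolean gates by noting that the numerically low states $L,\lst$ lie in $\mathcal I_0$ and the only numerically high state $H$ lies in $\mathcal I_1$.

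There is one small slip. In the subtraction branch $p=H$, $q\in\mathcal I_0$ you list $y\in\{\tfrac16,\tfrac13\}$, but $D\in\mathcal I_0$ has $\nu(D)=\tfrac12$, which gives target $\tfrac13$ (the paper lists $t\in\{\tfrac13,\tfrac12,\tfrac23\}$). This target is still within $\tfrac16$ of the output value $\nu(D)=\tfrac12$, so your conclusion is unaffected.
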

\begin{proof}
The statement is immediate for constants and copy gates.

\noindent\emph{Addition.}
Every state has numerical value at least $1/6$. If both input values are
at most $1/3$, their truncated sum lies in $[1/3,2/3]$, which
is within $1/6$ of $\nu(A)=1/2$. Otherwise the truncated sum lies in
$[2/3,1]$, which is within $1/6$ of $\nu(H)=5/6$.

\noindent\emph{Subtraction.}
Let $t=\max\{\nu(p)-\nu(q),0\}$.
The output $C$ occurs only when $t=1/2$, and $D$ only when
$t\in\{1/3,1/2,2/3\}$. In every other case the output is $L$ or $\lst$,
and inspection of the four numerical levels shows that
$t\in[0,1/3]$. In all cases the error is at most $1/6$.

\noindent\emph{Boolean gates.}
By~\eqref{eq:bands}, only $L,\lst$ trigger a low premise and only $H$
triggers a high premise. Negation sends the former to $H$ and the latter
to $\lst$. $G_{\vee}$ outputs $\lst$ when both inputs are numerically low and
$H$ whenever either input is numerically high. $G_{\wedge}$ outputs $\lst$
whenever either input is numerically low and $H$ when both inputs are
numerically high. Thus, every forced Boolean output is satisfied.

\noindent\emph{Multiplication.}
If $c\le2/5$, then $c \cdot \nu(p)\in[0,1/3]$,
so the fixed output $L$, of value $1/6$, has error at most $1/6$.

Suppose $2/5<c<4/5$. The four possible numerical input/output pairs
under $P$ are
\[
\tfrac16\mapsto\tfrac16,\qquad
\tfrac13\mapsto\tfrac16,\qquad
\tfrac12\mapsto\tfrac13,\qquad
\tfrac56\mapsto\tfrac12.
\]
For a fixed pair, the error is the absolute value of an affine function
of $c$, and is therefore maximized at an endpoint of $[2/5,4/5]$.
For the four pairs above, the corresponding maximum errors are $\tfrac1{10},\tfrac1{10},\tfrac2{15},\tfrac16$,
respectively. Thus, the error is at most $1/6$ throughout the middle
interval.

Finally, if $c\ge4/5$, the rule copies the input state, and
$|\nu(p)-c\nu(p)|
=(1-c) \cdot \nu(p)
\le\tfrac15\cdot\tfrac56
=\tfrac16$.
\end{proof}

\begin{lemma}\label{lem:mono}
Every non-comparator rule is monotone with respect to $\preceq$ in each
input.
\end{lemma}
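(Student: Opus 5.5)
The plan is to verify monotonicity rule by rule: for each gate type, if an input state $p$ is replaced by some $p'\succeq p$ (the other input fixed), the prescribed output can only move up or stay put in $\preceq$. Since each rule has at most two inputs, checking each input separately gives monotonicity in the product order. Constants are trivial, since the output $R(\zeta)$ does not depend on any input, and copy is the identity.

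\emph{Addition.} Note that $\{p:\nu(p)\le 1/3\}=\{M,L,C,\lst\}$, which is downward closed in $\preceq$: by \eqref{eq:order} nothing below $M$ exists, $L$ and $C$ sit below only $M$ and $L$, and $\lst$ sits above only $M,L,C$. Hence the condition ``$\nu(p),\nu(q)\le 1/3$'' is downward closed. As inputs rise, the output therefore can only switch from $A$ to $H$, and $A\prec H$.

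\emph{Multiplication.} For $c\le 2/5$ the output is constant, and for $c\ge 4/5$ it is the identity. For the middle range I will check that $P$ is monotone along every cover relation of \eqref{eq:order}. These give $M\prec L$: $L\to L$; $L\prec C$: $L\to L$; $C\prec D$: $L\to C$; $C\prec\lst$: $L\to L$; $M\prec A$: $L\to C$; $A\prec H$: $C\to D$. Each of these image pairs is comparable in the right direction.

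\emph{Boolean gates.} The text already notes that $\mathcal I_0,\mathcal I_1$ are upward closed. So the decoding $\Sstate\to\{0,1,\bbot\}$ is monotone when $\{0,1,\bbot\}$ carries the order $\bbot\prec 0$, $\bbot\prec 1$. The three-valued \NOT/\OR/\AND rules of Figure~\ref{fig:pc} are monotone in this order: refining a $\bbot$ input never un-forces a forced output, and never flips it. The re-encoding $\bbot\mapsto M$, $0\mapsto\lst$, $1\mapsto H$ is monotone because $M$ is the least element.

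\emph{Subtraction.} I expect this to be the main obstacle, since it is the rule built to let values go down and then up. I will treat the four output cases as a layered structure. The $\lst$-region $\{p\in\mathcal I_0\}\cup\{q\in\mathcal I_1\}$ is upward closed in $(p,q)$, and $\lst$ is maximal, so once it is reached the output stays. Outside that region we have $p\in\{M,A,H\}$ and $q\in\{M\}\cup\mathcal I_0$, and the output is one of $L$, $C$ (when $p=H$, $q=M$), or $D$ (when $p=H$, $q\in\mathcal I_0$). Raising $p$ within $\{M,A,H\}$ moves the output from $L$ to $L$, $C$, or $D$, all of which are $\succeq L$, or into $\lst\succeq L$. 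Raising $q$ from $M$ into $\mathcal I_0$ sends $L\mapsto L$, $C\mapsto D$, and $D$ stays $D$. Raising $q$ into $\mathcal I_1$ gives $\lst$, which lies above $L$ and $C$. The delicate transition is $D\to\lst$, which would occur only if $p$ or $q$ moved from $D$-inputs into the $\lst$-region. Here $p=H$ is maximal, and $q\in\mathcal I_0$ cannot rise into $\mathcal I_1$ because $\mathcal I_0$ is upward closed and disjoint from $\mathcal I_1$. So $D$ is never followed by $\lst$. The only possible moves from $C$ are to $D$ or $\lst$, and both are above $C$. I will write this out as a short case table over the cover relations to finish.
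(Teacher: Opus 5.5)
Your proof is correct and follows essentially the same route as the paper: a rule-by-rule check using the downward-closed set $\{p:\nu(p)\le 1/3\}$ for addition, the upward-closed classes $\mathcal I_0,\mathcal I_1$ for the Boolean rules, and the cover relations for $P$ in the multiplication rule. The only cosmetic difference is in subtraction. There you use the upward-closed $\lst$-region and directly rule out the transition $D\to\lst$ between incomparable states, whereas the paper lists which covers in each coordinate change the output; both amount to the same case analysis.
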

\begin{proof}
It suffices to check the six cover relations
\[
M\prec L,\quad
L\prec C,\quad
C\prec D,\quad
C\prec\lst,\quad
M\prec A,\quad
A\prec H.
\]
Constant rules and the copy rule are monotone.

\noindent\emph{Addition.}
The set of states with numerical value at most $1/3$ is downward closed.
Hence refining either input can only leave the output unchanged or move
it from $A$ to $H$. Since $A\prec H$, this is an upward move.

\noindent\emph{Subtraction.}
In the first coordinate, only the covers $M\prec L$ and $A\prec H$ can
change the output. The former changes $L$ to $\lst$ when
$q\notin\mathcal I_1$. The latter changes $L$ to $C$ when $q=M$ and to
$D$ when $q\in\mathcal I_0$. All are upward moves.

In the second coordinate, only $M\prec L$ and $M\prec A$ can change the
output. The former changes $C$ to $D$ when $p=H$. The latter changes
$L$ to $\lst$ when $p\in\{M,A\}$ and changes $C$ to $\lst$ when $p=H$.
Again all changes are upward. Every other cover leaves the output
unchanged.

\noindent\emph{Boolean gates.}
The classes $\mathcal I_0$ and $\mathcal I_1$ are upward closed. Hence
refinement can only replace the unresolved value $\bbot$ by $0$ or $1$.
The three-valued Boolean rules are monotone under this information
refinement, and $M\preceq\lst,H$.

\noindent\emph{Multiplication.}
For $c\le2/5$ the rule is constant, and for $c\ge4/5$ it is copying the input, so these cases are monotone. In the middle interval, the rule prescribes function $P$ which is
monotone: as $p$ gets refined to any $p'$ with $p \preceq p'$, we have $P(p) \preceq P(p')$ (in fact, the only strict changes are $L\prec C$ and $C\prec D$). 
\end{proof}

\subsection{Comparators and the ``no-flip'' lemma}

Monotone output rules do not exist for comparators. For example, both $(L,D)$ and $(D,L)$ are below $(D,D)$ in the information
order, but the first pair forces a high output and the second forces a
low output, which lie on different branches of
Figure~\ref{fig:poset}. That is why comparators require different treatment.

We say that a pair of states $(p,q)$ \emph{forces high} if $\nu(p)<\nu(q)-\tfrac16$, and \emph{forces low} if $\nu(p)>\nu(q)+\tfrac16$.
A pair forces high precisely when either
$q=H$ and $p\ne H$,
or
$p\in\{L,\lst\}$ and $q\in\{D,A\}$.
The forced-low pairs are the reversals.

Rather than updating according to a fixed monotone rule, a comparator
commits to an output when its inputs first force one and never changes
that output afterwards. The following ``no-flip'' lemma shows that this is safe.

\begin{lemma}\label{lem:noflip}
Let $p\preceq p'$ and $q\preceq q'$. If $(p,q)$ forces high, then $(p',q')$ does not force low. Symmetrically, if $(p,q)$ forces low, then $(p',q')$ does not force high.
\end{lemma}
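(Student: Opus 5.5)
By the symmetry noted just before the lemma (forced-low pairs are exactly the reversals of forced-high pairs), it suffices to prove the first statement. Applying it to the swapped pairs $(q,p)\preceq(q',p')$ then gives the second. The plan is to use the explicit characterization stated above. A pair $(p,q)$ forces high iff either (i) $q=H$ and $p\ne H$, or (ii) $p\in\{L,\lst\}$ and $q\in\{D,A\}$. A pair $(p',q')$ forces low iff either (i') $p'=H$ and $q'\ne H$, or (ii') $q'\in\{L,\lst\}$ and $p'\in\{D,A\}$. We will assume $(p,q)$ forces high, refine upward, and show that neither (i') nor (ii') can hold.

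The key tool is the shape of the poset in Figure~\ref{fig:poset}. Its upward closures are easy to read off:
\begin{itemize}
\item the up-set of $H$ is $\{H\}$;
\item the up-set of $A$ is $\{A,H\}=\mathcal I_1$;
\item the up-set of $D$ is $\{D\}$;
\item the up-set of $L$ is $\{L,C,D,\lst\}=\mathcal I_0$;
\item the up-set of $\lst$ is $\{\lst\}$.
\end{itemize}
We will also use that $\mathcal I_0$ and $\mathcal I_1$ are disjoint and upward closed, with $H\in\mathcal I_1$ and $D,L,\lst\in\mathcal I_0$.

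\emph{Case (i).} Here $q=H$, so $q'=H$. Condition (i') needs $q'\ne H$, so it fails. Condition (ii') needs $q'\in\{L,\lst\}$, which also fails.

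\emph{Case (ii).} Here $p\in\{L,\lst\}$, so $p'\in\mathcal I_0$. Hence $p'\ne H$, which rules out (i'). Condition (ii') needs $p'\in\{D,A\}$, so it could only hold with $p'=D$, and then it needs $q'\in\{L,\lst\}\subseteq\mathcal I_0$. We split on $q$:
\begin{itemize}
\item If $q=A$, then $q'\in\mathcal I_1$, which is disjoint from $\mathcal I_0$. So (ii') fails.
\item If $q=D$, then $q'=D\notin\{L,\lst\}$. So (ii') fails.
\end{itemize}

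The only genuinely delicate point is trusting the characterization of forcing pairs itself. If one wants the proof self-contained, I would verify it first by comparing the four numerical levels $1/6,1/3,1/2,5/6$. A strict gap exceeding $1/6$ occurs exactly for the pairs $(1/6,1/2)$, $(1/6,5/6)$, $(1/3,5/6)$ and $(1/2,5/6)$, read as (value of $p$, value of $q$). Mapping these back to states via $\nu$ recovers exactly (i) and (ii). After that, the argument above is pure poset bookkeeping.
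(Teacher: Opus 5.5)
Your proof is correct and follows the paper's structure: the same case split on the two forms of a forced-high pair, with case (i) closed by the maximality of $H$ and the second statement obtained by symmetry. The only difference is in case (ii): the paper argues numerically that every refinement of $p\in\{L,\lst\}$ has value at most $1/2$ and every refinement of $q\in\{D,A\}$ has value at least $1/2$, so $\nu(p')\le\nu(q')$, whereas you check the explicit forced-low characterization against the up-sets. Both arguments are valid.
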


\begin{proof}
Suppose $(p,q)$ forces high. Either $q=H$ and $p\ne H$, or
$p\in\{L,\lst\}$ and $q\in\{D,A\}$. In the first case, $q'=H$, since $H$ is maximal. The refined pair
therefore cannot force low. In the second case, every refinement of $p$ has numerical value at most
$1/2$, while every refinement of $q$ has numerical value at least $1/2$.
Thus
$\nu(p')\le\nu(q')$,
and again the refined pair cannot force low. The other direction is
symmetric.
\end{proof}

Notice that a forced pair may become unforced; for example, $(A,H)$ can refine to
$(H,H)$. This is harmless, since an unforced comparator imposes no
constraint on its output.

\subsection{The propagation algorithm}

Algorithm~\ref{alg:propagation} summarizes the propagation procedure which outputs a $\tfrac16$-\GC solution in polynomial time. We can now prove the main result of this section.

\begin{theorem}\label{thm:algo}
$\tfrac16$-\GC can be solved in polynomial time.
\end{theorem}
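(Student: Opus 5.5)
The proof has two parts: Algorithm~\ref{alg:propagation} terminates in polynomially many rounds, and the assignment $x\mapsto\nu(\sigma(x))$ it returns is a $\tfrac16$-\GC solution.

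\textbf{Monotone progress.} We show by induction on the number of updates that every change to $\sigma$ moves a state strictly upward in $\preceq$. Suppose that before some update every variable's current state dominates all its earlier states.
\begin{itemize}
\item For a comparator, this is immediate. Its output is changed only while it is in $M$, which is the least element, and it moves to $H$ or $L$, both strictly above $M$. After that the output is never touched again.
\item For a non-comparator gate, the prescribed state is $F(p,q)$, where $(p,q)$ are the current input states. The output's current state equals $F(p_0,q_0)$ for the input states $(p_0,q_0)$ at its last update, or equals $M$ if it has never been updated. By the induction hypothesis $p_0\preceq p$ and $q_0\preceq q$, so Lemma~\ref{lem:mono} gives $F(p_0,q_0)\preceq F(p,q)$. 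If there was no earlier update, $M\preceq F(p,q)$ trivially. Hence any change is strictly upward.
\end{itemize}
Every chain in the poset of Figure~\ref{fig:poset} has length at most $3$ above $M$, so each variable changes at most $3$ times. There are therefore at most $3|V|$ changes in total. Each pass of the repeat loop either makes a change or terminates, and each pass costs time linear in the circuit size. This gives polynomial running time.

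\textbf{Correctness at termination.} When no pass makes a change, every non-comparator output satisfies $\sigma(z)=F(\sigma(x),\sigma(y))$. Lemma~\ref{lem:valid} then shows that the numerical values satisfy the $\tfrac16$ constraint of each such gate. Constant gates are covered by the choice of $R(\zeta)$.

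For a comparator $z=G_<(x,y)$, note that the premise $\nu(x)<\nu(y)-\tfrac16$ holds numerically exactly when $(\sigma(x),\sigma(y))$ forces high, and symmetrically for low. We distinguish two cases.
\begin{itemize}
\item If the final pair forces neither, the gate imposes no constraint.
\item Suppose the final pair forces high. If $\sigma(z)$ were still $M$, the final pass would have set it to $H$, so $\sigma(z)\neq M$. The output $\sigma(z)$ was committed at an earlier time, when the input pair $(p,q)$ forced high or low. By monotone progress, $(p,q)\preceq(\sigma(x),\sigma(y))$ coordinatewise. Lemma~\ref{lem:noflip} then shows that $(p,q)$ cannot have forced low, because its refinement forces high. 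So $(p,q)$ forced high and $\sigma(z)=H$, and $\nu(H)=5/6\ge 1-\tfrac16$ as required.
\end{itemize}
The forced-low case is symmetric, giving $\nu(L)=\tfrac16$.

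\textbf{Main obstacle.} The main step is the induction establishing monotone progress, since it is what lets Lemma~\ref{lem:mono} and Lemma~\ref{lem:noflip} apply. In particular, comparator outputs feed into non-comparator rules, and those rules require the comparator outputs to only ever move up, which holds because they move once, out of $M$.

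The final check is that all outputs take values in $\{1/6,1/3,1/2,5/6\}\subseteq[0,1]$, which is immediate from~\eqref{eq:nu}.
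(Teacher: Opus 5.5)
Your proof is correct and follows essentially the same route as the paper. Both argue by induction that every update moves a state strictly upward in $\preceq$: comparators leave $M$ only once, and non-comparator rules are handled via Lemma~\ref{lem:mono}. Both then bound updates by three per variable for the running time, and establish correctness through Lemma~\ref{lem:valid} for non-comparator gates and Lemma~\ref{lem:noflip} for committed comparators. The only differences are organizational: you case on the final input pair of a comparator rather than on its output state, and you omit the paper's remark about a linear-time worklist implementation.
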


\begin{proof}
We first show that every update moves a state strictly upwards in
$\preceq$. A comparator changes only from $M$ to $H$ or from $M$ to $L$.

For a non-comparator gate, suppose its output was last set to
$F(\mathbf p)$ and the current input states are $\mathbf p'$. By
induction, all input states have only moved upwards, so
$\mathbf p\preceq\mathbf p'$.
Lemma~\ref{lem:mono} gives
$F(\mathbf p)\preceq F(\mathbf p')$.
Whenever an update occurs these states are different, so the update is
strictly upward. The first update is also upward because $M$ is the
least state.

The longest chain in $\Sstate$ has three strict steps. Hence every
variable is updated at most three times. A full scan that does not
terminate makes at least one update, so there are $O(|V|)$ scans, each
of which takes $O(|V|)$ time. This gives an $O(|V|^2)$ implementation.
Using a worklist containing gates whose inputs have changed gives linear
running time in the circuit size.

It remains to prove correctness. For a non-comparator gate, at
termination its output state is exactly the state prescribed by its rule
on the final input states, so Lemma~\ref{lem:valid} applies.

Consider a comparator. If its output remains in state $M$, then the final
input pair forces neither output; otherwise the final scan would have
updated it. If the comparator committed earlier, the final input states
are refinements of the states at which it committed. By
Lemma~\ref{lem:noflip}, they cannot force the opposite output. They
therefore either force the same output or force nothing. In the former
case the committed numerical value is $5/6$ or $1/6$, as required.
\end{proof}


\section{Allowing violated gates}\label{sec:robust-overview}

\begin{definition}[$(\eps,\delta)$-\GC]\label{def:delta-gc}
For fixed constants $\eps>0$ and $\delta\in[0,1]$, the
$(\eps,\delta)$-\GC problem takes as input a generalized circuit
of Definition~\ref{def:GC}, with rational constants and multipliers,
and asks for an assignment $\gvalonly:V\to[0,1]$ that
$\eps$-satisfies at least a $(1-\delta)$-fraction of its gates.
\end{definition}

\noindent The problem \eps-\GC is the special case $\delta=0$. A
solution always exists by Brouwer's fixed-point theorem, and
\eps-\GC belongs to \PPAD~\cite{CDT09,PureCircuit}.
Allowing violated gates only weakens the requirements, so
$(\eps,\delta)$-\GC also belongs to \PPAD.

We now prove that $(\eps,\delta)$-\GC is polynomial-time solvable
for all constants $\eps>0$ and $\delta>1/5$, and that fan-out-one
$\delta$-\PC (Definition~\ref{def:delta-pc}) is polynomial-time solvable for every constant
$\delta>1/8$ (Theorems~\ref{thm:robust} and~\ref{thm:pc-robust}).
Both proofs represent the circuit dependencies by a digraph of
maximum in-degree two. We delete a set of vertices so that every
remaining strongly connected component has bounded size, and then
solve the components in topological order. Only the gates
corresponding to deleted vertices may be violated.

\subsection{Decomposing digraphs of in-degree two}
\label{subsec:frag-overview}

All digraphs in this section and Appendix~\ref{sec:robust} are
simple: loops and parallel arcs are suppressed. For a digraph $D$,
we write $V(D)$ for its vertex set and $D-S$ for the subgraph induced
by $V(D)\setminus S$.

\begin{theorem}\label{thm:fragmentation}
Let $D$ be an $n$-vertex simple digraph of maximum in-degree at most two,
and let $n_2$ be the number of vertices of in-degree exactly two. For every
integer $\lambda\ge1$, one can compute in polynomial time a set
$S\subseteq V(D)$ with
\[
  |S|\le \frac{n_2}{4}+\frac{4n}{\lambda+1}
\]
such that every strongly connected component of $D-S$ has at most
$\lambda$ vertices.
\end{theorem}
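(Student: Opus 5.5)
The plan follows the two-phase outline from the techniques section: a structural deletion step that removes at most $n_2/4$ vertices and leaves a digraph with an elimination ordering of in-width at most three, and then a fragmentation step that uses this ordering to cut strongly connected components down to size $\lambda$ with $O(n/\lambda)$ more deletions.

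\textbf{Phase 1: deletion and elimination.} We eliminate vertices one at a time. Eliminating $v$ means removing $v$ and adding arcs from each current in-neighbour of $v$ to each current out-neighbour; this preserves reachability among the surviving vertices. The in-width of the ordering is the maximum current in-degree of a vertex at the moment it is eliminated. We maintain the invariant that every current vertex has in-degree at most three. At each step we choose one of the following reductions.
\begin{itemize}
\item[(i)] Eliminate a vertex of current in-degree at most one. This never raises any in-degree.
\item[(ii)] Eliminate a vertex $v$ of in-degree two whose out-neighbours each gain at most one new in-arc.
\item[(iii)] If no such vertex exists, delete a carefully chosen vertex into $S$.
\end{itemize}
We charge deletions by a potential $\Phi=\sum_v(\text{current in-degree of }v)-2\cdot(\#\text{vertices})+c$, or more simply by the number of vertices of in-degree two. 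The goal is to show that each deletion destroys at least four units of ``in-degree-two excess'' that are never recreated. This mirrors the $2$-degenerate feedback-vertex-set charging of Borowiecki et al., adapted to the directed setting.

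\textbf{Phase 1, main obstacle.} I expect this case analysis to be the hardest step. The difficulty is handling directed cycles, where no vertex has in-degree at most one, and the fact that a shortcut can raise an out-neighbour from in-degree two to three. I would first try to argue on a minimal configuration: a closed set in which every vertex has in-degree two. Either a vertex whose removal drops four in-neighbour counts exists (out-degree at least four), or local structure lets rules (i) and (ii) apply after a single deletion. In the latter case the in-degree-three vertices must be tracked so they are eliminated immediately, before their in-degree can grow further.

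\textbf{Phase 2: fragmentation.} Given the ordering $v_1,\dots,v_m$ of $D-S_1$ with in-width at most three, we build a rooted forest. Each $v_i$ points to the earliest-eliminated vertex among its neighbours at the time of elimination, giving an elimination tree in which every strongly connected component of the remaining graph lies within a subtree path structure. A standard greedy pass then deletes a vertex whenever the subtree hanging below it exceeds $\lambda$ vertices, processing bottom-up and accounting for the at most three ``ancestor'' in-neighbours. Each deletion is charged to $\lambda+1$ distinct vertices, and the factor $4$ accounts for the width bound $3$ plus one. This gives $|S_2|\le 4n/(\lambda+1)$, and every strongly connected component of $D-(S_1\cup S_2)$ then has at most $\lambda$ vertices, since reachability is preserved by shortcuts. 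Both phases are clearly polynomial, so $S=S_1\cup S_2$ satisfies the statement.
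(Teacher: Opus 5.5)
Your Phase~2 is in the right spirit, but Phase~1, which is where the $n_2/4$ term comes from, is only a plan, and as stated it does not work. The decisive configuration is the one you flag: no vertex has in-degree at most one and none has out-degree at least three. (Out-degree three already gives a cheap deletion, since the deleted vertex and its three freed out-neighbours make four removed vertices.) In that configuration every vertex has in-degree and out-degree exactly two. Deleting a vertex then only guarantees that its two out-neighbours become free, which is three removed vertices per deletion and a bound of $n_2/3$.

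Your rule~(ii) is meant to supply a deletion-free step, in the form you intend, where a shortcut may raise an out-neighbour from in-degree two to three. Performed in the actual graph, it breaks your own bookkeeping. It can increase the potential $\sum_v d^-(v)-2|V|$, so the ``never recreated'' claim fails. Eliminating an in-degree-three vertex is within the width budget, but it adds up to three shortcuts into each of its out-neighbours, which can then reach in-degree four. So the invariant ``every current vertex has in-degree at most three'' is not preserved, and eliminating such vertices immediately just moves the problem. Rule~(iii) (``a carefully chosen vertex'') is unspecified, and nothing shows that each deletion is paid for by four vertices.

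The missing idea is the paper's seed/mark/repair scheme, which keeps two graphs. In a working graph $R$, in-degrees never exceed two. In the 2-in-2-out situation, and only when the mark set $Q$ is empty, a \emph{seed} $s$ is removed from $R$ without shortcuts; it enters the ordering, not $S$, and its two in-neighbours are marked. When a marked vertex is freely eliminated, its mark passes to its unique in-neighbour, so $|Q|\le2$ always. A shadow graph $\widehat R$ performs the genuine eliminations, and one maintains that every arc of $\widehat R$ missing from $R$ has a marked tail. Hence $d^-_{\widehat R}(v)\le d^-_R(v)+|Q|\le3$ for every vertex that is free in $R$, and the seed itself is genuinely eliminated at in-degree two. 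The width is thus controlled only at the moment of elimination, not by a global invariant like yours. A marked vertex is deleted (a \emph{repair}) only when neither a free elimination nor an out-degree-$\ge3$ deletion exists.

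For the count, let $a,h,t$ be the numbers of ordinary deletions, repairs and seeds, and let $m\le n_2$ be the number of vertices left after the initial free eliminations. Rounds of these three kinds remove at least $4$, $3$ and $3$ vertices respectively, and $h\le2t$. Therefore $m\ge4a+3h+3t\ge4(a+h)=4|S|$.

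As for Phase~2, directed reachability sets are not laminar, so an elimination tree does not describe what can reach a vertex. The paper avoids trees altogether:
\begin{itemize}
\item Take the first $v_i$ whose prefix $P_i$ contains a strongly connected component with more than $\lambda$ vertices.
\item Let $W$ be the set of vertices of $P_i$ reaching $v_i$ inside $P_i$, and delete $v_i$ together with the set $B$ of outside vertices with an arc into $W$.
\item Each $z\in B$ comes after $v_i$ and reaches it through $P_{i-1}$, so by the fill-path property it is an in-neighbour of $v_i$ at elimination time; this gives $|B|\le3$.
\end{itemize}
Components stay small for two reasons: deleting $B$ cuts every arc entering $W$, and by the choice of $i$, $W\setminus\{v_i\}\subseteq P_{i-1}$ has only small components. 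The justification is not that shortcuts preserve reachability.
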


We use \emph{directed vertex elimination}~\cite{RoseTarjan78}:
eliminating a vertex removes it and adds shortcut arcs
from its current in-neighbors to its current out-neighbors,
suppressing loops and parallel arcs.  An \emph{elimination
ordering} of $D$ is an ordering of all its vertices, which are eliminated
one by one in that order; its \emph{in-width} is the largest in-degree
that a vertex has immediately before it is eliminated.
Hunter and Kreutzer~\cite{HK08} measure the out-degree instead; equivalently, we
apply their notion to the reverse digraph. Elimination should not be
confused with deletion: a deleted vertex is removed without adding
shortcuts and is counted in the deletion set.

The proof, given in Appendices~\ref{subsec:quarter}
and~\ref{subsec:fragment}, has two steps. First, we delete
at most $n_2/4$ vertices to obtain an elimination ordering
of in-width at most three. A decomposition argument then
deletes at most $4n/(\lambda+1)$ additional vertices to
bound the size of every strongly connected component by
$\lambda$. The first step uses a charging argument with
seeds, marks, and repairs to handle directed cycles; the second exploits the
reachability structure of bounded-width elimination
orderings.

A linear number of deletions can be necessary. A first-moment
argument using nearly balanced cuts (Lemma~\ref{lem:fragment-lb})
shows that there is a constant $\gamma\approx0.0819$ such that, for every
$\alpha<\gamma$, there are digraphs of maximum in-degree two for which
more than $\alpha n$ deletions are needed even when the permitted
component size is $o(n)$.

\subsection{Application to \texorpdfstring{\GC}{GCircuit}}
\label{subsec:gc-overview}

\subparagraph{Range of \eps.}
We use the standard convention of Section~\ref{sec:prelim}. For
$\eps\ge1/6$, Corollary~\ref{cor:allbig} gives a polynomial-time
algorithm that satisfies every gate. It therefore remains to consider
$0<\eps<1/6$, which we assume throughout this subsection.

\subparagraph{The dependency digraph.}
A variable is \emph{constant} if it is the output of a constant gate
$G_\zeta$. Its value depends on no other circuit variable, so we fix it to
$\zeta$ and exclude it from the dependency digraph.

\begin{definition}[Dependency digraph]\label{def:dependency}
The \emph{dependencies} of a non-constant variable $v$ are the
distinct non-constant variables other than $v$ that occur as
inputs to the gate producing $v$. The
\emph{dependency digraph} $D$ has the non-constant variables as vertices,
and an arc $uv$ whenever $u$ is a dependency of $v$.
\end{definition}

Arcs point from inputs to outputs. Since every gate has at most
two inputs, $D$ has maximum in-degree at most two.

Let $N_0$ be the number of constant gates, $N_1$ the number of non-constant
gates whose output has at most one dependency, and $N_2$ the number of gates
whose output has two dependencies. Then
\[
  N=N_0+N_1+N_2
\]
is the total number of gates. The dependency digraph $D$ has $N-N_0$
vertices, and the number of its vertices of in-degree two is
$n_2=N_2$. For an assignment $\gvalonly$, let
$\mathrm{viol}(\gvalonly)$ denote the number of gates it violates.

\subparagraph{Two assignments.}
Fix a constant $\tau>0$. We construct two assignments satisfying
\begin{equation}\label{eq:two-bounds}
  \mathrm{viol}(\gvalonly^{\mathrm{decomp}})
    \le \frac{N_2}{4}+\tau N,\qquad
  \mathrm{viol}(\gvalonly^{\mathrm{const}})
    \le N_1.
\end{equation}
These assignments are useful in complementary regimes:
$\gvalonly^{\mathrm{decomp}}$ when $N_2$ is small and
$\gvalonly^{\mathrm{const}}$ when $N_1$ is small.

To construct $\gvalonly^{\mathrm{decomp}}$, we apply
Theorem~\ref{thm:fragmentation} to $D$ with
$\lambda=\lceil4/\tau\rceil$. We set the deleted variables to zero and
process the remaining strongly connected components in topological order.
Every boundary input of the current component has already been assigned:
it is either a constant variable, a deleted variable, or belongs to an
earlier component. Lemma~\ref{lem:small-component} guarantees that, for
these fixed boundary inputs, the component has a satisfying assignment on
a fixed finite grid. Since the component has constant size, such an
assignment can be found by exhaustive search. Hence only the gates
producing deleted variables may fail.

The assignment $\gvalonly^{\mathrm{const}}$ gives every constant variable
its prescribed value and sets every other variable to zero.
Constant gates are satisfied. Every gate counted in $N_2$ has both
inputs and its output equal to zero, which satisfies each possible
gate type $G_+$, $G_-$, $G_<$, $G_\vee$, or $G_\wedge$.
Thus, only the $N_1$ remaining gates may fail.
Lemma~\ref{lem:dep-cert} proves the second bound in
\eqref{eq:two-bounds}.

Let $\gvalonly^\star$ be the better of
$\gvalonly^{\mathrm{decomp}}$ and $\gvalonly^{\mathrm{const}}$. From
\eqref{eq:two-bounds},
\[
  5\,\mathrm{viol}(\gvalonly^\star)
  \le N_2+4\tau N+N_1
  \le (1+4\tau)N.
\]
Thus, by choosing $\tau>0$ sufficiently small, we obtain every violation
tolerance strictly above $1/5$, as stated in
Theorem~\ref{thm:robust}. Appendix~\ref{subsec:balancing} gives the
precise parameter choices and a refinement that also takes the number of
negation gates into account.

\subsection{Fan-out-one \texorpdfstring{\PC}{Pure-Circuit}}
\label{subsec:pc-overview}

Deligkas et al.~\cite[Corollary~2.3]{PureCircuit} show that \PC has
several polynomially equivalent versions that use the \PURE gate
together with a universal set of Boolean gates and keep fan-out one:
each version reduces to another by simulating its Boolean gates with
constant-size gadgets. Deligkas et al.~\cite{DFHM26} establish the
equivalence between the existence of constants $\eps,\delta_{\mathrm{G}}>0$
giving PPAD-hardness of $(\eps,\delta_{\mathrm{G}})$-\GC and the
existence of a constant $\delta_{\mathrm{P}}>0$ giving PPAD-hardness
of a fan-out-one $\delta_{\mathrm{P}}$-\PC formulation. Constant-size
Boolean simulations give equivalent existential hardness conjectures
for the gate bases considered here. These reductions need not preserve
the numerical violation parameter: in particular,
$\delta_{\mathrm{P}}$ and $\delta_{\mathrm{G}}$ need not be equal.

\begin{definition}[Fan-out-one $\delta$-\PC]
\label{def:delta-pc}
An instance is a pair $(V,G)$ consisting of a set $V$ of nodes and a set $G$ of gates. Each gate is of the form $g=(T,u,v,w)$,
where $u,v,w\in V$ are distinct and
$T$ can be taken from a set that contains \PURE together with some universal set of Boolean gates from $\{\NOT, \OR, \NOR, \AND, \NAND\}$. Figure~\ref{fig:pc} depicts the input/output constraints of these gates. 
Every node is the output of exactly one gate and also input to exactly one gate (fan-out is one). 

For $\delta\in[0,1]$, a solution of
$\delta$-\PC is an assignment $\pvalonly:V\to\{0,1,\bbot\}$
that satisfies at least a $(1-\delta)$-fraction of the gates.
\end{definition}

The \emph{gate digraph} of a \PC instance with fan-out one is defined as follows. Since each output of a gate is input to exactly one other gate, we represent each gate by a vertex.  Also, every node gives an arc from the gate that \emph{produces} it to the gate that \emph{consumes} it, i.e., uses it as an input. If both outputs of a \PURE gate are inputs of the same gate with fan-in two, we keep only one of the resulting parallel arcs. A \PURE gate has in-degree at most one, while all other gates have in-degree at most two.

For any of the $\delta$-\PC versions of Definition~\ref{def:delta-pc}, let $m_P$, $m_N$, $m_R$, be the numbers of \PURE, \NOT, and the rest of the gates, respectively; notice that in the latter group gates have two inputs and one output. Let $m=m_P+m_N+m_R$ be the total number of gates. Since every arc has a unique producer and a unique consumer, double counting of the arcs (considering the graph before suppressing parallel arcs) by enumerating gates first as producers and then as consumers gives
$2m_P + m_N + m_R = m_P + m_N + 2m_R$, and hence $m_P = m_R \leq \frac m2$.
Thus, at most $m/2$ gates are \PURE gates, and the same holds for gates of in-degree two. Applying
Theorem~\ref{thm:fragmentation} to the gate digraph therefore yields
a set of at most
\[
  \frac{m}{8}+\frac{4m}{\lambda+1}
\]
gates whose deletion leaves every strongly connected component
with at most $\lambda$ gates.

We assign arbitrary values to the outputs of the deleted gates
and process the remaining components in topological order.
Once the values entering a component have been fixed, all gates
in the component can be satisfied simultaneously
(Lemma~\ref{lem:pc-boundary-totality}); since the component has
constant size, a satisfying assignment can be found by exhaustive
search. Thus, only the deleted gates may fail. Choosing $\lambda$
sufficiently large gives every violation tolerance $\delta>1/8$,
proving Theorem~\ref{thm:pc-robust}.

\section{Applications to other problems}\label{sec:positive-applications}

Our positive results yield polynomial-time algorithms for several equilibrium problems. The applications use either the numerical $\tfrac16$-\GC guarantee, the existence of any \eps-\GC algorithm for fixed $\eps<1/2$, direct state propagation, or our decomposition theorem.
Table~\ref{tab:applications-summary} summarizes the guarantees, while
definitions, assumptions, and proofs are deferred to
Appendix~\ref{app:positive-applications}.

\begin{table}[ht]
\centering
\small
\caption{Selected applications of our algorithms.
Errors use the model-specific approximation notions
defined in Appendix~\ref{app:positive-applications}.
The ``Previous'' column reports published guarantees
or elementary baselines, as indicated.}
\label{tab:applications-summary}
\renewcommand{\arraystretch}{1.5}
\begin{tabularx}{\linewidth}{@{}
  >{\raggedright\arraybackslash}X
  >{\raggedright\arraybackslash}p{0.14\linewidth}
  >{\raggedright\arraybackslash}p{0.17\linewidth}
  >{\raggedright\arraybackslash}p{0.23\linewidth}
  @{}}
\toprule
Model and restrictions & Previous & Our guarantee & Method \\
\midrule
Divisible public goods; strict preference games;
maximum in-degree two (two players above self)
& $1/2$ (elementary)
& $7/19$; $1/5$
& Numerical $1/6$-\GC;
direct state propagation
\\\hline

Restricted CDS networks; at most two held unit CDSs
per core bank
& $1/2$ (trivial~\cite{AppDHK25})
& $7/19$-residual;
$1/5$-approximate clearing
& Numerical $1/6$-\GC;
direct state propagation
\\\hline

Binary best-shot public goods;
maximum in-degree two, price $1/2$
& $1/2$-WSNE~\cite{AppDDH24};
$1/4$-NE (elementary)
& $\approx0.119$-WSNE;
$\approx0.058$-NE
& Any fixed $\eps$-\GC solver
with $\eps<1/2$ (WSNE);
direct state propagation (NE)
\\\hline

Normalized two-action polymatrix games;
two-action Bayesian games with two players
or pairwise-additive payoffs
& $1/5$-NE~\cite{AppEASSS23};
$1/5$-interim Bayesian NE$^\dagger$
& $0.191$-NE;
$0.191$-interim Bayesian NE
& Direct state propagation;
type-agent reduction
\\\hline

Threshold games; arbitrary in-degree
& $1/6$~\cite{PureCircuit} (tight)
& $1/6$ (alternative algorithm)
& Direct state propagation
\\\hline

Constant-action polymatrix and graphical games;
maximum in-degree two
& ---
& $3/4-o(1)$ of players satisfy
equilibrium conditions
& Graph decomposition
\\
\bottomrule
\end{tabularx}

\vspace{2pt}
\parbox{\linewidth}{\footnotesize
$^\dagger$ The $1/5$ interim Bayesian NE baseline
follows by applying the type-agent representation
to the earlier $1/5$ polymatrix NE algorithm;
it is not claimed as a separately obtained
Bayesian guarantee.
}
\end{table}

\subparagraph*{Divisible public goods games, preference games with strict preferences, and financial networks.}
A direct reduction to $\tfrac16$-\GC gives a $7/19$ approximation for
 divisible public goods with maximum
in-degree two, preference games with strict preferences
and at most two players
ranked above self, and residual clearing in restricted CDS networks.
The latter have zero-endowment core banks, funded writers, fixed outside
debt, and at most two held unit CDSs per core bank. Direct use of
Algorithm~\ref{alg:propagation} improves all three guarantees to $1/5$
and additionally ensures $\eps$-approximate clearing. These improve elementary
$1/2$ baselines for the game models~\cite{AppPP21,AppKPRST13} and the
trivial clearing bound noted in~\cite{AppDHK25}. Without degree
restrictions, propagation gives $1/3$ approximations for divisible
public goods and preference games with strict preferences. See Appendix~\ref{app:positive-goods-preferences-clearing}.

\subparagraph*{Binary best-shot public goods games.}
Any \eps-\GC algorithm for fixed $\eps<1/2$ improves the general well-supported
Nash equilibrium guarantee of~\cite{AppDDH24} for every fixed price and finite
in-degree bound, yielding exact equilibria at maximum in-degree one.
At price $1/2$ and maximum in-degree two, the error decreases from
$1/2$ to approximately $0.119$. Direct state propagation gives a
stronger ordinary Nash equilibrium guarantee: approximately $0.058$ in this
degree-two setting, improving the elementary $1/4$ baseline, and at
most $4/27$ for arbitrary in-degree and price. See Appendix~\ref{app:positive-binary}.

\subparagraph*{Polymatrix and Bayesian games.}
Direct state propagation computes $0.191$-Nash equilibria in two-action
polymatrix games, improving the $1/5$ guarantee of~\cite{AppEASSS23}.\footnote{The algorithm that computes $1/3$-well-supported Nash equilibria in~\cite{PureCircuit} has an easy adaptation to compute $1/5$-Nash equilibria~\cite{AppEASSS23}.} Type-agent reductions transfer both the earlier
bounds and our improvement to interim equilibria in two-player Bayesian
games and pairwise-additive multiplayer Bayesian games with two actions
per type. See Appendix~\ref{app:positive-polymatrix-bayesian}.

\subparagraph*{Threshold games.}
The same propagation framework gives an alternative polynomial-time
algorithm attaining the tight $1/6$ approximation of~\cite{PureCircuit},
with no in-degree restriction. See Appendix~\ref{app:positive-threshold}.

\subparagraph*{Violating equilibrium conditions for a fraction of players.}
The decomposition results of Section~\ref{sec:robust-overview} apply to
constant-action polymatrix and graphical games with maximum
in-degree two. They ensure exact best responses for every supported
action in polymatrix games, and approximate best responses to any fixed
positive tolerance in graphical games, at all but
$n_2/4+O(n/\log n)$ players, where $n_2$ counts players of in-degree two.
The satisfied fraction is therefore at least $3/4-o(1)$.
To the best of our knowledge, in the current literature there is no other such guarantee for these succinct game representations. See Appendix~\ref{app:positive-fragment}.

\section{Conclusions}
We have established the exact approximation threshold for the
standard generalized circuit problem. We have shown that the problem is \PPAD-complete for every constant $0<\eps<1/6$, and provided a polynomial-time algorithm for every  $\eps\ge1/6$.

We have also established polynomial-time algorithms for
$(\eps,\delta)$-\GC when $\delta>1/5$ and for the fan-out-one
formulations of $\delta$-\PC when $\delta>1/8$. Neither bound is known to be tight; nevertheless, we narrow the range of values for which the \pcppad conjecture can be true. Establishing PPAD-hardness for an explicit
positive constant $\delta$ in the \GC or \PC formulation,
and determining the boundary between tractable and intractable parameter
regimes, remain major open problems.

\paragraph*{Use of generative AI tools.}
ChatGPT 6 (OpenAI) and Claude 5.5 (Anthropic) assisted with proof
exploration, checking mathematical arguments, simplifying
and reorganizing proofs, reference searches, and \LaTeX{}
editing. AI-assisted Python computations verified finite
gate tables, monotonicity and no-flip properties, and
stress-tested the graph decomposition algorithms and their
applications. The authors independently checked all
mathematical claims and proofs and take full responsibility
for the paper.

\newpage

\bibliographystyle{plainurl}
\bibliography{references}

\clearpage
\newpage

\appendix

\section{Other four-type bases that maintain inapproximability for $\eps < 1/6$}
\label{app:alt-four-gates}


The same hardness result as that of Theorem~\ref{thm:hardness} holds for the family of \GC with available gates
\[
\{G_{\neg},G_-,G_<,G_\vee\}.
\]
Indeed, the proof only uses $G_\zeta$ to obtain the shared node
$q\in[1-\eps,1]$. Without constant gates, introduce instead
\[
z_0=G_-(q,q),
\qquad
q=G_{\neg}(z_0).
\]
The subtraction gate gives $z_0\le\eps$, and hence the negation
gate forces $q\ge1-\eps$. The remainder of the construction is
unchanged.
 
Also, in our set of gate types we can use $G_\wedge$ instead of $G_\vee$: simply use De Morgan's law $x \vee y = \neg (\neg x \wedge \neg y)$, and the \NOT gadget from our reduction.\footnote{However, notice that this implementation might increase the necessary number of gates participating in each gadget.} Therefore,
for every fixed $0<\eps<1/6$, \eps-\GC is
\PPAD-hard for each of the four bases
\[
\{G_\zeta,G_-,G_<,G_\vee\},\qquad
\{G_\zeta,G_-,G_<,G_\wedge\},
\]
\[
\{G_{\neg},G_-,G_<,G_\vee\},\qquad
\{G_{\neg},G_-,G_<,G_\wedge\}.
\]

\section{An alternative hardness proof with fewer gates per gadget}
\label{app:hardness}

We give an alternative reduction to that of Section~\ref{sec:hardness}, which used 4 gate types and each of its gadgets was constructed with at most 12 gates. The reduction we show here uses more gate types, i.e., the following 6: $G_{\zeta}, G_-, G_+, G_<, G_{\neg}$, and $ G_{\vee}$, but fewer gates per gadget, namely, 6. These changes come solely from a different \PURE gadget construction that we present in this section.

We reduce \PC to \eps-\GC, with the same decoding as in the reduction of Section~\ref{sec:hardness} (see \eqref{eq:decode}).
Under this decoding, $G_\neg$ and $G_\vee$ implement \NOT and \OR directly. It
therefore remains only to implement \PURE.

First, similarly to the aforementioned construction, we create node $q=G_1()$, which is shared among all \PURE gadgets that need it. Therefore, $1-\eps \le q \le 1$.

\subsection{Alternative simulation of \PURE}\label{app:sim-alt-pure}

Given the input $u$ of a \PURE gate, introduce 
\begin{equation}\label{eq:macro1}
r=G_-(q,u),\quad
s=G_+(u,u),\quad
a=G_<(u,r),\quad
d=G_+(r,a),
\end{equation}
and let the two outputs be
\begin{equation}\label{eq:macro2}
v=G_<(a,s),\qquad
w=G_<(d,u).
\end{equation}
See Figure~\ref{fig:gadget-alt} for a depiction. 


\begin{figure}[ht]
\centering
\begin{tikzpicture}[
  >=Stealth,
  gate/.style={
    draw,rounded corners,inner sep=4pt,font=\small
  },
  outgate/.style={
    gate,double,double distance=1.2pt
  },
  ingate/.style={
    outgate,minimum width=8mm,minimum height=7mm
  },
  wire/.style={->,semithick}
]
\node[ingate] (u) at (0,0) {$u$};
\node[draw,circle,inner sep=2pt,font=\small] (q) at (0,2) {$q$};

\node[gate] (r) at (2.9,2) {$r=G_-(q,u)$};
\node[gate] (s) at (2.9,-1.6) {$s=G_+(u,u)$};

\node[gate] (a) at (5.6,0) {$a=G_<(u,r)$};

\node[gate] (d) at (8.3,2) {$d=G_+(r,a)$};
\node[outgate] (v) at (8.3,-1.6) {$v=G_<(a,s)$};

\node[outgate] (w) at (11.0,2) {$w=G_<(d,u)$};

\draw[wire] (q) -- (r);
\draw[wire] (u) -- (r);
\draw[wire] (u) to[bend right=12] (s);
\draw[wire] (u) to[bend left=12] (s);
\draw[wire] (u) -- (a);
\draw[wire] (r) -- (a);
\draw[wire] (r) -- (d);
\draw[wire] (a) -- (d);
\draw[wire] (a) -- (v);
\draw[wire] (s) -- (v);
\draw[wire] (d) -- (w);
\draw[wire] (u.south) -- ++(0,-2.4) -| (w.south);
\end{tikzpicture}
\caption{The six-gate \PURE gadget with input $u$ and outputs
$v$ and $w$. Node $q=G_1()$ is not created specifically for this gadget; it is created independently and shared among all gadgets that need it.}
\label{fig:gadget-alt}
\end{figure}

\subsection{Correctness of the \PURE gadget}

The constant gate $q = G_1()$ implies that $q\in[1-\eps,1]$. Hence the subtraction
gate satisfies
\begin{equation}\label{eq:r}
1-u-2\eps\le r\le1-u+\eps.
\end{equation}
Indeed, $r\le\max\{q-u,0\}+\eps\le1-u+\eps$, while $r\ge q-u-\eps\ge1-u-2\eps$.

We first show that pure inputs are copied to both outputs.

\begin{lemma}
\label{lem:zero}
Consider the \PURE gadget $(v,w)=\PURE(u)$. If $u\le\eps$, then $v\le\eps$ and $w\le\eps$.
\end{lemma}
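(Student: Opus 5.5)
The plan is to trace the gadget gates forward from the hypothesis $u\le\eps$. Each intermediate node gets pinned to a pure range, and then both output comparators end up with premise $x>y+\eps$. Throughout, only $\eps<1/6$ and the bound \eqref{eq:r} on $r$ are used.

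\emph{The comparator $a$.} By \eqref{eq:r}, $r\ge 1-u-2\eps\ge 1-3\eps$. Hence $r-u\ge 1-4\eps>\eps$, since $\eps<1/5$. So the premise $u<r-\eps$ of $a=G_<(u,r)$ holds, and the gate forces $a\ge 1-\eps$.

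\emph{The output $v$.} The addition gate gives $s\le\min\{2u,1\}+\eps\le 3\eps$. Therefore $a-s\ge 1-4\eps>\eps$, so $a>s+\eps$. The comparator $v=G_<(a,s)$ is then forced low, giving $v\le\eps$.

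\emph{The output $w$.} For $d=G_+(r,a)$ we have $r+a\ge (1-3\eps)+(1-\eps)=2-4\eps\ge 1$, so the truncated sum equals $1$ and $d\ge 1-\eps$. Then $d-u\ge 1-2\eps>\eps$, so $w=G_<(d,u)$ is forced low and $w\le\eps$.

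There is no real obstacle here. The only point needing care is the truncation in $G_+$: we need $2-4\eps\ge1$ so that $d$ sits near $1$. The tightest margin is the first step, which uses $1-4\eps>\eps$; this holds comfortably because $\eps<1/6$.
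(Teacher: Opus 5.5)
Your proof is correct and follows the paper's argument step for step: it uses \eqref{eq:r} to force $a\ge1-\eps$, bounds $s\le3\eps$ to force $v\le\eps$, and uses $r+a\ge2-4\eps\ge1$ to get $d\ge1-\eps$ and hence $w\le\eps$. One small correction to your closing remark: the margin $1-4\eps>\eps$ is also the one used for $v$, so it is not confined to the first step.
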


\begin{proof}
By~\eqref{eq:r}, we have $r\ge1-3\eps$, so $r-u\ge1-4\eps>\eps$. The comparator $a=G_<(u,r)$ therefore forces $a\ge1-\eps$.
Since $s\le\min\{2u,1\}+\eps\le3\eps$, we have $a-s\ge1-4\eps>\eps$, which forces the output $v\le\eps$ due to the comparator $v=G_<(a,s)$. For the second output,
\[
r+a\ge2-4\eps>1,\qquad
d\ge1-\eps,\qquad
d-u\ge1-2\eps>\eps.
\]
The comparator $w=G_<(d,u)$ therefore forces output $w\le\eps$.
\end{proof}

\begin{lemma}\label{lem:one}
Consider the \PURE gadget $(v,w)=\PURE(u)$. If $u\ge1-\eps$, then $v\ge1-\eps$ and $w\ge1-\eps$.
\end{lemma}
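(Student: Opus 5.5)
Assume $u\ge1-\eps$. The plan mirrors the proof of Lemma~\ref{lem:zero}: first pin down $r$ and $a$ from the pure input, then push these bounds through $s$ and $d$ into the two output comparators. Throughout we use $\eps<1/6$.

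\emph{Step 1: bound $r$ and $a$.} By~\eqref{eq:r}, $r\le1-u+\eps\le2\eps$. Hence
\[
u-r\ge1-3\eps>\eps,
\]
so the comparator $a=G_<(u,r)$ forces $a\le\eps$.

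\emph{Step 2: first output.} The gate $s=G_+(u,u)$ gives
\[
s\ge\min\{2u,1\}-\eps\ge1-\eps,
\]
because $2u\ge2-2\eps\ge1$. Then
\[
s-a\ge1-2\eps>\eps,
\]
so the comparator $v=G_<(a,s)$ forces $v\ge1-\eps$.

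\emph{Step 3: second output.} The gate $d=G_+(r,a)$ gives
\[
d\le\min\{r+a,1\}+\eps\le2\eps+\eps+\eps=4\eps.
\]
Therefore
\[
u-d\ge1-\eps-4\eps=1-5\eps.
\]
We have $1-5\eps>\eps$ exactly when $\eps<1/6$. So the comparator $w=G_<(d,u)$ sees $d<u-\eps$ and forces $w\ge1-\eps$.

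The only delicate point is Step~3. The errors from the subtraction, the comparator and the addition accumulate there, and the resulting margin $1-5\eps>\eps$ is exactly where the threshold $\eps<1/6$ is needed. The other steps have slack.
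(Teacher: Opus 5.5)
Your proof is correct and matches the paper's argument step for step. You use \eqref{eq:r} to get $r\le 2\eps$, which forces $a\le\eps$. Then $s\ge 1-\eps$ forces $v$ high, and $d\le 4\eps$ with the margin $1-5\eps>\eps$ forces $w$ high; you also correctly identify this last margin as the one place where $\eps<1/6$ is needed.
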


\begin{proof}
By~\eqref{eq:r}, we have  $r\le2\eps$, so $u-r\ge1-3\eps>\eps$. Thus, the comparator $a=G_<(u,r)$ forces $a\le\eps$. Since $2u\ge1$, the addition gate $s=G_+(u,u)$ gives $s\ge1-\eps$, and therefore
$s-a\ge1-2\eps>\eps$.
Hence, $v=G_<(a,s)$ forces output $v\ge1-\eps$. For the second output,
\[
d\le r+a+\eps\le4\eps,
\qquad
u-d\ge1-5\eps>\eps.
\]
The final inequality is precisely where $\eps<1/6$ is needed.
Thus, the comparator $w=G_<(d,u)$ forces output $w\ge1-\eps$.
\end{proof}

We next show that the gadget always produces at least one pure output.

\begin{lemma}\label{lem:purity-alt}
Consider the \PURE gadget $(v,w)=\PURE(u)$. For every $u\in[0,1]$ and every assignment satisfying
the gadget, either $v\ge1-\eps$ or $w\le\eps$.
\end{lemma}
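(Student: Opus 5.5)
The plan is to argue by contraposition. Assume the comparator $v=G_<(a,s)$ is \emph{not} forced high, that is, $a\ge s-\eps$. From this we will show that the comparator $w=G_<(d,u)$ is forced low, that is, $d>u+\eps$, so that $w\le\eps$. The tools are the bound~\eqref{eq:r} on $r$, the addition constraints $s\ge\min\{2u,1\}-\eps$ and $d\ge\min\{r+a,1\}-\eps$, and the standing assumption $\eps<1/6$. Combining $a\ge s-\eps$ with the bound on $s$ gives
\[
a\ \ge\ \min\{2u,1\}-2\eps .
\]
We then split on the size of $u$.

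\emph{Case $u\le 1/2$.} Here $a\ge 2u-2\eps$. Together with $r\ge 1-u-2\eps$ this gives $r+a\ge 1+u-4\eps$.
\begin{itemize}
\item If the truncation in $d$ is not active, then $d\ge 1+u-5\eps$. This exceeds $u+\eps$ because $6\eps<1$.
\item If the truncation is active, then $d\ge 1-\eps$. This exceeds $u+\eps$ because $u\le 1/2<1-2\eps$.
\end{itemize}
Either way $w\le\eps$.

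\emph{Case $u$ large.} Here $a\ge 1-2\eps$. We use the comparator $a=G_<(u,r)$ to reach a contradiction. By~\eqref{eq:r}, $u-r\ge 2u-1-\eps$, which exceeds $\eps$ as soon as $u>1/2+\eps$. In that case $a\le\eps$, contradicting $a\ge 1-2\eps>\eps$. So the assumption $a\ge s-\eps$ is impossible, and $v\ge 1-\eps$ is forced. In particular this settles pure-high inputs, consistently with Lemma~\ref{lem:one}.

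\emph{Middle window $1/2<u\le 1/2+\eps$.} This window is the main obstacle. There $a\ge 1-2\eps$ and $r\ge 1-u-2\eps$, so $r+a\ge 2-u-4\eps$. This gives
\[
d\ \ge\ \min\{2-u-5\eps,\ 1-\eps\}.
\]
The second term exceeds $u+\eps$ comfortably. The first term exceeds $u+\eps$ only when $u<1-3\eps$. With $u\le 1/2+\eps$ this needs $\eps<1/8$, so the crude bounds fall short for $\eps\in[1/8,1/6)$. To close this window I would first try sharpening the lower bound on $a$. Since $u>1/2$, truncation gives $s\ge 1-\eps$ directly, so $a\ge 1-2\eps$ exactly. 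I would then recheck $r+a$ against $1$: the truncation $d\ge 1-\eps$ applies as soon as $r+a\ge 1$, and $r+a\ge 2-u-4\eps\ge 3/2-5\eps$. If that still leaves a gap, I would refine the split of $u$ at the threshold $r+\eps$ of the comparator producing $a$, using $r\le 1-u+\eps$. For every $u$ at which $a$ is not forced low, this pins $u$ and $r$ together tightly enough to bound $d-u$ away from $\eps$.

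I expect this window near $u\approx 1/2$ to be the step where the precise threshold $\eps<1/6$ has to be used carefully.
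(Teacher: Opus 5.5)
Your cases $u\le 1/2$ and $u>1/2+\eps$ are correct. The first is even slightly more direct than the paper, which first establishes $a>\eps$ for every $u$. However, the argument is incomplete. In the window $1/2<u\le 1/2+\eps$ you rightly note that the bound $r\ge 1-u-2\eps$ from~\eqref{eq:r} yields $d>u+\eps$ only for $\eps<1/8$, and none of your proposed repairs closes the gap for $\eps\in[1/8,1/6)$. Re-deriving $a\ge 1-2\eps$ gives nothing new. The bound $r+a\ge 3/2-5\eps$ is at least $1$ only when $\eps\le 1/10$. Your last suggestion names the right gate but the wrong inequality: combining the comparator threshold with $r\le 1-u+\eps$ only reproduces $u\le 1/2+\eps$, which is the definition of the window. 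Also, $1-\eps>u+\eps$ in the window is not ``comfortable'': with $u$ up to $1/2+\eps$ it holds exactly because $\eps<1/6$.

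The missing idea is the \emph{lower} bound on $r$ supplied by the comparator $a=G_<(u,r)$. In the window you have $a\ge 1-2\eps>\eps$, so this comparator is not forced low. That means $u\le r+\eps$, i.e.\ $r\ge u-\eps$. For $u>1/2$ this is stronger than $r\ge 1-u-2\eps$, and it gives $r+a\ge(u-\eps)+(1-2\eps)=1+u-3\eps>1$, because $u>1/2>3\eps$. So the truncation in $d=G_+(r,a)$ is active and $d\ge 1-\eps$. Since $\eps<1/6$, we get $1-\eps>1/2+2\eps\ge u+\eps$, which forces $w\le\eps$.

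This is exactly the paper's key step. It first shows $a>\eps$ for all $u$: if $a\le\eps$, then $\min\{2u,1\}\le 3\eps$, so $u\le 3\eps/2$, and the comparator would force $a\ge 1-\eps$. It then deduces $r\ge u-\eps$, and hence $u\le 1/2+\eps$. Finally, in $a\ge\min\{2u,1\}-2\eps$ it uses $r\ge 1-u-2\eps$ on the $2u$ branch and $r\ge u-\eps$ on the $1$ branch, obtaining $\min\{r+a,1\}>u+2\eps$ in a single chain with no case split on $u$.
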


\begin{proof}
Assume $v<1-\eps$. We show that $w$ is pure low. Since the comparator $v=G_<(a,s)$ is not forced high,
\begin{equation}\label{eq:star}
a\ge s-\eps
  \ge\min\{2u,1\}-2\eps.
\end{equation}

For the sake of contradiction, assume $a \leq \eps$. Then \eqref{eq:star} implies that $\min\{2u,1\}\le3\eps<1$, and hence $u\le3\eps/2$. But then, by the lower bound of $r$ from \eqref{eq:r} we get
$r-u\ge1-2u-2\eps \ge1-5\eps >\eps$,
so the comparator $a=G_<(u,r)$ would force
$a\ge1-\eps>\eps$, a contradiction.

Since $a>\eps$, that comparator cannot be forced low. Hence $r\ge u-\eps$.
Together with the upper bound $r\le1-u+\eps$ from
\eqref{eq:r}, this also gives $u\le\tfrac12+\eps$.

Using~\eqref{eq:star} and the two lower bounds on $r$, namely $r \geq 1 - u - 2\eps$ (from \eqref{eq:r}) and $r \geq u - \eps$, we get
\begin{align*}
r+a
&\ge\min\{r+2u,r+1\}-2\eps\\
&\ge\min\{1+u-2\eps,1+u-\eps\}
      -2\eps\\
&=1+u-4\eps
 >u+2\eps,
\end{align*}
where the last inequality is due to the fact that $\eps < 1/6$.
Moreover, $u+2\eps\le\tfrac12+3\eps<1$.
Thus, both arguments of $\min\{r+a,1\}$ exceed
$u+2\eps$, and therefore gate $d=G_+(r,a)$ gives $d\ge\min\{r+a,1\}-\eps>u+\eps$. Finally, comparator $w=G_<(d,u)$ forces $w\le\eps$.
\end{proof}

\subsection{The reduction}

\begin{theorem}\label{thm:hardness-alt}
For every constant $0<\eps<1/6$, \eps-\GC is
\PPAD-hard even when restricted to the six gate types
$\{G_{\zeta}, G_-, G_+, G_<, G_{\neg}, G_{\vee}\}$.
Moreover, only the constant parameter $\zeta=1$ is required.
\end{theorem}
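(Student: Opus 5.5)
We reduce from \PC as in Theorem~\ref{thm:hardness}, using Theorem~\ref{thm:pc} for the \PPAD-hardness of \PC with gate set $\{\NOT,\OR,\PURE\}$. Given a \PC instance, we create a single shared node $q=G_1()$, so $q\in[1-\eps,1]$. This is the only constant gate, and it uses only $\zeta=1$. We then replace each \PC gate by a constant-size gadget built from the six gate types $\{G_\zeta,G_-,G_+,G_<,G_\neg,G_\vee\}$:
\begin{itemize}
\item each \NOT gate $v=\NOT(u)$ becomes a single gate $v=G_\neg(u)$;
\item each \OR gate $w=\OR(u,v)$ becomes $w=G_\vee(u,v)$;
\item each \PURE gate $(v,w)=\PURE(u)$ becomes the six-gate gadget of \eqref{eq:macro1}--\eqref{eq:macro2}, whose internal nodes $r,s,a,d$ are fresh for each gadget.
\end{itemize}
Every \PC node keeps its role as the output of exactly one gate, so the result is a valid generalized circuit. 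Its size is linear in the \PC instance, and it uses only the six listed types.

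For correctness, we take any \eps-\GC solution and decode every node that corresponds to a \PC node using \eqref{eq:decode}. We then check each gate of the original instance.

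\NOT: the premises of $G_\neg$ are exactly ``input decodes to $0$'' and ``input decodes to $1$''. In these cases the output is forced into $[1-\eps,1]$ or $[0,\eps]$ respectively, so it decodes to the negated bit. When the input is $\bbot$, \PC imposes no constraint.

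\OR: $\max\{x,y\}\ge1-\eps$ holds exactly when some input decodes to $1$, and then the output decodes to $1$. $\max\{x,y\}\le\eps$ holds exactly when both inputs decode to $0$, and then the output decodes to $0$. In all other cases the \OR table allows $*$. This matches Figure~\ref{fig:pc}.

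\PURE: if $u$ decodes to $0$, Lemma~\ref{lem:zero} gives that $v,w$ both decode to $0$. If $u$ decodes to $1$, Lemma~\ref{lem:one} gives that both decode to $1$. For arbitrary $u$, Lemma~\ref{lem:purity-alt} shows that either $v\ge1-\eps$ or $w\le\eps$, so at least one output is pure. This covers the case where $u$ decodes to $\bbot$.

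Hence the decoded assignment satisfies every \PC gate, and it is computed in polynomial time from the \GC solution.

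The technical difficulty is entirely in the gadget lemmas, which are already proved. In particular, Lemma~\ref{lem:purity-alt} needs $\eps<1/6$ through the bound $1+u-4\eps>u+2\eps$. What remains for the theorem is bookkeeping. One point to confirm is that the gadget's correctness never relies on fan-out restrictions, since the shared node $q$ and the input $u$ are used several times. This is harmless because \GC allows arbitrary fan-out. The other point is to state explicitly that $G_\neg$ and $G_\vee$ need no extra gadget under the decoding \eqref{eq:decode}, since their premises are defined with the same thresholds $\eps$ and $1-\eps$.
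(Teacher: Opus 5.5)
Your proposal is correct and is essentially the paper's proof: the same reduction from \PC uses a shared $q=G_1()$, implements \NOT and \OR directly by $G_\neg$ and $G_\vee$, and replaces each \PURE gate by the six-gate gadget of \eqref{eq:macro1}--\eqref{eq:macro2}. Correctness then follows from Lemmas~\ref{lem:zero}, \ref{lem:one}, and~\ref{lem:purity-alt} under the decoding~\eqref{eq:decode}, and your extra bookkeeping (fresh internal nodes, harmless fan-out of $q$ and $u$, premises of $G_\neg$ and $G_\vee$ matching the decoding thresholds) just spells out what the paper leaves implicit.
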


\begin{proof}
We reduce from \PC (Theorem~\ref{thm:pc}). Create one \GC variable
for every \PC variable. Implement every \NOT gate by $G_\neg$, every
\OR gate by $G_\vee$, and every \PURE gate by the gadget construction described in Appendix~\ref{app:sim-alt-pure}.

Consider any solution of the resulting \GC instance and decode all
values according to~\eqref{eq:decode}. The gates $G_\neg$ and $G_\vee$
satisfy exactly the requirements of \NOT and \OR on pure inputs. For
a \PURE gate, Lemmas~\ref{lem:zero} and~\ref{lem:one} show that a
pure input is copied to both outputs, while Lemma~\ref{lem:purity-alt} shows
that at least one output is always pure. Thus, the decoded assignment is a
solution of the original \PC instance.
\end{proof}

\section{An alternative convention}
\label{app:conventions}

The main body uses the standard definition of \eps-\GC (Definition~\ref{def:GC}) found in the great majority of the literature, which we call \emph{closed tolerance}. Here, for completeness, we consider another \eps-\GC convention, called \emph{open tolerance}, in which all gates' input and output constraints become strict inequalities. The latter definition has been used in~\cite{Rubinstein18}. 

We show that by considering open instead of closed tolerance, the complexity exactly at the threshold $\eps=1/6$ crosses sides, i.e., \eps-\GC becomes \PPAD-hard.

\begin{theorem}\label{thm:conventions}
For either tolerance convention,
\eps-\GC is \PPAD-complete for every constant
$0<\eps<1/6$ and polynomial-time solvable for every constant
$\eps>1/6$. At $\eps=1/6$, the problem is polynomial-time solvable under closed tolerance, and \PPAD-complete under open tolerance.
\end{theorem}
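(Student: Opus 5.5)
The proof splits into four parts: membership, hardness below $1/6$, tractability above $1/6$, and the open-tolerance hardness exactly at $1/6$. Throughout I write the open convention as follows. Arithmetic outputs satisfy $z\in(t-\eps,t+\eps)$. Comparator premises are $x<y-\eps$ and $x>y+\eps$, with outputs $z>1-\eps$ and $z<\eps$. Boolean premises are $x<\eps$ and $x>1-\eps$, with outputs $z>1-\eps$ and $z<\eps$. Decoding uses $[0,\eps)\mapsto0$, $(1-\eps,1]\mapsto1$, and $\bbot$ otherwise.

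\emph{Membership.} For the open convention I would argue membership as in~\cite{CDT09,Rubinstein18}. An approximate Brouwer fixed point of the circuit's continuous relaxation, computed with precision much finer than $\eps$, satisfies every open constraint with slack. This part is routine.

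\emph{Hardness.} For $\eps<1/6$ under closed tolerance, this is Theorem~\ref{thm:hardness}. For both $\eps<1/6$ and $\eps=1/6$ under open tolerance, I would reuse the same reduction from \PC and recheck Lemmas~\ref{lem:block}, \ref{lem:purity}, \ref{lem:pure-inputs} and the \NOT gadget. Under open tolerance every derived bound becomes strict: $q>1-\eps$, $t<2\eps$, $d>1-3\eps$, $x+t>1-2\eps$, and so on. Each premise check is therefore an inequality of the form ``strict bound $>1-k\eps\ \ge\ \eps$'', which holds for $k\le5$ whenever $\eps\le1/6$.

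The delicate spots are the places where $\eps<1/6$ was used essentially.
\begin{itemize}
\item In Lemma~\ref{lem:block}, Property~\ref{eq:block-second}, we need $d-t>(1-3\eps)-2\eps=1-5\eps\ge\eps$.
\item In Property~\ref{eq:block-first}, we need $y+t\ge x+t>1-2\eps\ge4\eps$. Hence one of $y,t$ exceeds $2\eps$, and $d<\eps$ gives the strict premise.
\item In Lemma~\ref{lem:pure-inputs}, we need $c-z>1-4\eps$.
\end{itemize}
Once these are checked, hardness at $\eps=1/6$ under open tolerance follows, and the case $\eps<1/6$ follows a fortiori. I expect this line-by-line verification, which confirms that strictness of outputs supplies exactly the missing slack at $\eps=1/6$, to be the main obstacle. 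In particular, the hypothesis $x\le y$ in Lemma~\ref{lem:block} is non-strict and must be combined with a strict bound on $t$.

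\emph{Algorithms.} For $\eps=1/6$ under closed tolerance, this is Theorem~\ref{thm:algo}. For $\eps>1/6$ under either convention, I would run the state-propagation algorithm of Section~\ref{sec:algorithm} for $\eps\in(1/6,1/3)$, and the variant from Appendix~\ref{app:eps-above-1/6} for $[1/3,1/2)$. Its output uses the values $1/6,1/3,1/2,5/6$.
\begin{itemize}
\item By Lemma~\ref{lem:valid}, all arithmetic errors are at most $1/6<\eps$, so they satisfy the open constraints too.
\item For $\eps\in(1/6,1/3)$, the Boolean premise bands $\{\nu<\eps\}$ and $\{\nu>1-\eps\}$ contain exactly the same states as in~\eqref{eq:bands}. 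Hence the Boolean analysis is unchanged.
\item Comparator premises with threshold $\eps>1/6$ are triggered only by pairs that already force under the $1/6$ threshold. By Lemma~\ref{lem:noflip} and the commitment rule, the committed outputs $1/6$ and $5/6$ satisfy $<\eps$ and $>1-\eps$.
\item For $[1/3,1/2)$ I would do the analogous recheck of the appendix variant's value set against the open bands.
\item For $\eps\ge1/2$, the all-$1/2$ assignment works: every output error is $0<\eps$, and $1/2$ lies in both $(1-\eps,1]$ and $[0,\eps)$ when premises trigger.
\end{itemize}
Combining these four parts yields every claim in the statement.
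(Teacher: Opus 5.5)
Your hardness and membership parts are sound and follow the paper's route. The paper rechecks the six-gate gadget of Appendix~\ref{app:hardness} under open tolerance, while you recheck the twelve-gate gadget and the \NOT gadget of Section~\ref{sec:hardness}; all of your strict inequalities do hold for every $\eps\le1/6$. The gap is in the algorithmic part. Your case split ($\nu$ on $(1/6,1/3)$, the $\mu$-readout on $[1/3,1/2)$, all-$1/2$ on $[1/2,\infty)$) is correct for closed tolerance, where it is Corollary~\ref{cor:allbig}. You apply it ``under either convention'', however, and under open tolerance it fails at exactly the two endpoints $\eps=1/3$ and $\eps=1/2$.

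\emph{At $\eps=1/3$.} You use $\mu$, but its arithmetic errors can equal $1/3$ exactly. The constant rule forces $R(0)\in\{L,\lst\}$, whose $\mu$-value is $1/3$, so a $G_0$ gate violates the strict constraint $z<1/3$. The same happens for $G_1$ (value $\mu(H)=2/3$), for $G_+$ with target $1$, and for $G_-$ with second input in $\mathcal I_1$. The fix is to keep $\nu$ on $(1/6,1/3]$: its errors are at most $1/6<1/3$, and your own band argument still applies, since $\{\nu<1/3\}=\{L,\lst\}$ and $\{\nu>2/3\}=\{H\}$.

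\emph{At $\eps=1/2$.} Your justification for the all-$1/2$ assignment, that ``every output error is $0$'', is false. For example, $G_+(1/2,1/2)$ has target $1$, $G_-(1/2,1/2)$ has target $0$, and $G_0$, $G_1$, $G_{\times0}$ also have error exactly $1/2$. Open tolerance forbids this at $\eps=1/2$, so the all-$1/2$ assignment is valid only for $\eps\ge1/2$ closed or $\eps>1/2$ open. At $\eps=1/2$ open you should use $\mu$ instead. Its arithmetic errors are at most $1/3<1/2$. The open premises $x<1/2$ and $x>1/2$ are triggered only by the values $1/3$ (states $L,\lst\in\mathcal I_0$) and $2/3$ (states $A,H\in\mathcal I_1$), which is what the Boolean rules expect. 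No comparator is forced, since no two $\mu$-values differ by more than $1/3$.

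With the open-tolerance split $(1/6,1/3]$, $(1/3,1/2]$, $(1/2,\infty)$, which is exactly Proposition~\ref{prop:alternative-algorithms}, your proof goes through. The theorem is precisely about how the two conventions behave at boundary points, so these endpoints cannot be waved through with the closed-tolerance split.
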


The notation and terminology of this section follow those of Definition~\ref{def:GC} of \eps-\GC. 
We first extend the algorithm of Section~\ref{sec:algorithm} to the
open tolerance convention.

\begin{proposition}
\label{prop:alternative-algorithms}
Under open tolerance, \eps-\GC can be solved in polynomial time for every $\eps>1/6$.
\end{proposition}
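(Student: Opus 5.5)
The plan is to reuse the propagation framework of Section~\ref{sec:algorithm}, and its extension in Appendix~\ref{app:eps-above-1/6}, and check that the closed-to-open change only costs slack that disappears once $\eps>1/6$. Under open tolerance, an arithmetic gate requires $|z-t|<\eps$. A comparator premise reads $x<y-\eps$; it is not obvious whether the premise or only the output part becomes strict, so I would treat the worst case, in which the premises and the outputs $z>1-\eps$, $z<\eps$ are all strict. The Boolean premises become $x<\eps$ and $x>1-\eps$, with strict outputs.

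I would split into the same two regimes as the closed case, $\eps\in(1/6,1/3)$ and $[1/3,1/2)$, and dispose of $\eps\ge1/2$ separately. For $\eps\ge1/2$ the all-$1/2$ assignment gives arithmetic error at most $1/2$. This is strict for $\eps>1/2$, but at exactly $\eps=1/2$ one must check $G_{\times c}$ and the constants. I would instead take the all-$1/2$ assignment and fix constant outputs to their exact $\zeta$, using $\eps=1/2$ only where the errors are strictly below $1/2$. I expect a small separate argument here, for example a constant with $\zeta\in\{0,1\}$ feeding an addition, so I might simply reuse the $[1/3,1/2)$ algorithm extended to $[1/3,1/2]$.

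For $\eps\in(1/6,1/3)$ the key step is to perturb the four numerical levels $1/6,1/3,1/2,5/6$. Write $\eps=1/6+\eta$ with $\eta>0$. In Lemma~\ref{lem:valid} every closed bound was at most $1/6<\eps$, so all arithmetic and multiplication constraints hold strictly with the same levels. For the Boolean gates, the low-premise set is now $\{p:\nu(p)<\eps\}$. If $\eps\le1/3$ this set is still $\{L,\lst\}$, and the high set is $\{p:\nu(p)>1-\eps\}=\{H\}$ as long as $\eps<1/3$ (since $5/6>1-\eps$ but $1/2<1-\eps$). Forced outputs $1/6<\eps$ and $5/6>1-\eps$ are strict. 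For comparators, define forcing as $\nu(p)<\nu(q)-\eps$. Because the state gaps are multiples of $1/6$, this coincides with the closed definition $\nu(p)<\nu(q)-1/6$ only when no gap lies in $(1/6,\eps]$; a gap of $1/3$ would cease to force when $\eps\ge1/3$, which is outside this range. So for $\eps\in(1/6,1/3)$ the forced pairs are those with gap at least $1/3$, which is exactly the closed case. Lemmas~\ref{lem:mono} and~\ref{lem:noflip} and the termination argument of Theorem~\ref{thm:algo} then go through unchanged, and Algorithm~\ref{alg:propagation} works verbatim.

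For $[1/3,1/2)$ I would take the modified algorithm of Appendix~\ref{app:eps-above-1/6} and redo the same check. The boundary $\eps=1/3$ is the main obstacle, because Boolean premises and comparator forcing change exactly at the level gaps $1/3$ and $2/3$. Under open tolerance these premises are strictly harder to trigger than under closed tolerance, while every required output error must be strictly less than $\eps$. I would verify that the appendix's rules produce arithmetic errors at most $1/6<\eps$ or at most some bound $<1/3\le\eps$. Where a closed-case bound equals $\eps$ exactly, I would shift the level values slightly, by an amount below $\eps-1/6$ in the first regime, or choose the levels as functions of $\eps$ in the second. The aim is to make all inequalities strict while keeping the premise sets and the information order, so that monotonicity and no-flip still hold.
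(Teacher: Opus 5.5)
You follow the same route as the paper: run the propagation algorithm at tolerance $1/6$, read the final states through $\nu$ for the lower range and through $\mu$ for the upper range, and use the all-$1/2$ assignment for large $\eps$. The gap is at $\eps=1/3$, which you place in the $\mu$ regime. There the step you plan to check (``arithmetic errors \ldots at most some bound $<1/3\le\eps$'') is false, because under $\mu$ the errors reach exactly $1/3$:
\begin{itemize}
\item a constant $G_0$ has $R(0)\in\{L,\lst\}$, which $\mu$ reads as $1/3$;
\item $F_+(H,H)=H$ has target $1$ but value $2/3$;
\item $F_-(L,L)=\lst$ has target $0$ but value $1/3$.
\end{itemize}
Each of these violates the open constraint $|z-t|<1/3$. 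Your fallback of shifting the levels is left unspecified, and it is not free. Moving the low level below $1/3$ makes it trigger the open Boolean-low premise and enlarges the set of forced comparator pairs. Consistency with the Boolean rules and with the committed comparator outputs would then have to be re-proved.

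The missing observation is that no shift is needed, because $\nu$ already works at $\eps=1/3$. You excluded $\eps=1/3$ from the $\nu$ regime because the forced comparator pairs stop coinciding with the closed-$1/6$ ones. Coincidence is not required. It suffices that every pair forced at open tolerance $\eps$ has gap $>\eps\ge1/6$, so it was forced in the same direction at tolerance $1/6$. The committed value $5/6>1-\eps$ or $1/6<\eps$ is therefore correct, and pairs that stop being forced impose no constraint. At $\eps=1/3$ the open premises $x<1/3$ and $x>2/3$ are still triggered only by $L,\lst$ and by $H$, and all arithmetic errors are at most $1/6$. So $\nu$ covers $(1/6,1/3]$.

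Then $\mu$ covers $(1/3,1/2]$: its errors are at most $1/3<\eps$, only the values $1/3$ and $2/3$ trigger open Boolean premises, and no comparator is forced. This includes $\eps=1/2$, which resolves your hesitation there. The all-$1/2$ assignment does fail at $\eps=1/2$, and not only for $G_{\times c}$ and constants but also for $G_+$ and $G_-$. You should drop the idea of fixing constants to their exact $\zeta$ inside it. For $\eps>1/2$ the all-$1/2$ assignment works. Note that both Boolean premises may then hold simultaneously, and $1-\eps<1/2<\eps$ satisfies both conclusions.
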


\begin{proof}
Run Algorithm~\ref{alg:propagation} at tolerance $1/6$,
and let $\sigma$ be its final state assignment.

For $1/6<\eps\le1/3$, use the original numerical
interpretation $\nu$. Every arithmetic error is at most
$1/6<\eps$. The only values satisfying an open Boolean-low
or Boolean-high premise are $1/6$ and $5/6$, respectively.
The Boolean rules give the corresponding outputs, which
satisfy the open output constraints. Moreover, every comparator
forced at tolerance $\eps$ was already forced in the same
direction at tolerance $1/6$, so its committed output
satisfies the required open output constraint.

For $1/3<\eps\le1/2$, use the interpretation $\mu$
of~\eqref{eq:mu}. By Lemma~\ref{lem:readout}, every
arithmetic error is at most $1/3<\eps$. Only the values
$1/3$ and $2/3$ trigger open Boolean-low and Boolean-high
premises, respectively, and the Boolean rules give the
corresponding outputs. No comparator is forced, since
the largest difference between two values under $\mu$
is $1/3<\eps$.

Finally, for $\eps>1/2$, the all-$1/2$ assignment works.
Every arithmetic error is at most $1/2<\eps$, no comparator
is forced, and $1-\eps<1/2<\eps$ satisfies every possible
open Boolean output constraint.
\end{proof}

We next show \PPAD membership of \eps-\GC for constant \eps. For the closed tolerance (standard) convention, this is known from \cite{CDT09}. We give here a standalone proof of membership in \PPAD for both the closed (standard) and the open tolerance convention.

\begin{lemma}\label{lem:membership}
For every constant $\eps>0$, \eps-\GC belongs to \PPAD under the closed or open tolerance convention.
\end{lemma}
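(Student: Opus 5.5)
We reduce both conventions to a single PPAD problem: finding an approximate fixed point of a Lipschitz, polynomial-time computable map $F:[0,1]^V\to[0,1]^V$. We build $F$ from the circuit and then show that every sufficiently precise approximate fixed point, after a simple rounding, is an \eps-\GC solution. The point to observe first is that open tolerance at \eps is implied by closed tolerance at every slightly smaller $\eps'<\eps$. This is not automatic, since \GC is non-monotone in \eps (the Boolean premises get easier to trigger as the tolerance grows). So we avoid passing through closed tolerance and instead build a single construction with margins that work for both conventions.

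\emph{Step 1: the map.} For each gate with output $z$, define a continuous target $f_g(\mathbf x)\in[0,1]$.
\begin{itemize}
\item For arithmetic gates, use the exact formula (e.g.\ $\max\{x-y,0\}$).
\item For a comparator, use the piecewise-linear function equal to $1$ when $y-x\ge \eps/2$, equal to $0$ when $x-y\ge\eps/2$, and interpolating linearly in between.
\item For $G_\neg$, $G_\vee$, $G_\wedge$, use piecewise-linear functions that equal the forced value exactly on the closed premise regions $\{\cdot\le\eps\}$ and $\{\cdot\ge1-\eps\}$. These regions are disjoint because $\eps<1/2$; for $\eps\ge1/2$ the problem is trivial, so we assume $\eps<1/2$. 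Interpolate linearly across the gap $(\eps,1-\eps)$, and for $\vee,\wedge$ compose with $\max$ or $\min$.
\end{itemize}
Each $f_g$ is $K$-Lipschitz in the $\ell_\infty$ norm, with $K=O(1/\min\{\eps,1-2\eps\})$, a constant, and is computable by a linear arithmetic circuit. Set $F(\mathbf x)_z=f_g(\mathbf x)$.

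\emph{Step 2: membership.} Finding $\mathbf x$ with $\|F(\mathbf x)-\mathbf x\|_\infty\le\eta$ for a constant $\eta$ is in \PPAD by the standard Sperner/simplicial argument, or by noting that $F$ is piecewise linear, so the problem lies in \linearFIXP, which equals \PPAD. Choose $\eta=\eps/(4K)$, which is less than $\eps/4$.

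\emph{Step 3: verification.} Arithmetic gates have error at most $\eta<\eps$, strictly, so both the closed and open output constraints hold. For a comparator whose premise holds under open tolerance, say $x<y-\eps$ (which is implied by the closed premise $x<y-\eps$; note the paper's closed definition already uses strict premises for comparators), the target is exactly $1$, so $z\ge 1-\eta>1-\eps$. For a Boolean premise, the closed version $x\le\eps$ is weaker than the open one, so handling the closed premise covers both; there $f_g$ equals the forced value exactly, and the output is within $\eta$ of it, strictly within \eps.

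The main obstacle is making the Boolean targets both continuous and exact on the closed premise sets, while keeping the Lipschitz constant bounded when $1-2\eps$ is small. This is where $\eps<1/2$ is used, and the edge case $\eps=1/2$ would be handled by the all-$1/2$ assignment. A second subtlety is rational bit-complexity of the output. We handle it by computing the approximate fixed point on a grid of mesh $\eta/K$, which the \PPAD algorithm returns exactly.
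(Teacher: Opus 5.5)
\textbf{Gap at $\eps=1/2$ under open tolerance.} You say the all-$1/2$ assignment handles $\eps\ge1/2$. That is true under closed tolerance, and for $\eps>1/2$ under open tolerance. It fails at exactly $\eps=1/2$ under open tolerance. There a $G_0$ or $G_1$ gate, or any $G_-$ gate (target $0$) or $G_+$ gate (target $1$), has arithmetic error exactly $1/2$, which violates the strict constraint $|z-t|<1/2$. The paper covers $1/3<\eps\le1/2$ under open tolerance with the readout $\mu$ of~\eqref{eq:mu} and uses the all-$1/2$ assignment only for $\eps>1/2$.

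Your map cannot simply be reused at this point either. The closed premise regions $\{x\le1/2\}$ and $\{x\ge1/2\}$ overlap, and with strict premises there is no margin to absorb the residual of an approximate fixed point. A clean repair is to take an \emph{exact} fixed point, via $\linearFIXP=\PPAD$, of the following map: exact arithmetic targets, comparator target $\min\{1,\max\{0,y-x+1/2\}\}$, and Boolean targets $1-x$, $\max\{x,y\}$, $\min\{x,y\}$. Each strict open premise then yields the strict open conclusion. Alternatively, invoke Proposition~\ref{prop:alternative-algorithms}.

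\textbf{Comparison with the paper.} Apart from this edge case, your argument is correct, and it takes a different route from the paper's. The paper first disposes of every $\eps>1/6$ using its own polynomial-time algorithms (Corollary~\ref{cor:allbig} and Proposition~\ref{prop:alternative-algorithms}). Only for $0<\eps\le1/6$ does it build essentially your map, with the comparator interpolating over $|y-x|\le\eps$ rather than $\eps/2$, and take an exact fixed point through \linearFIXP.

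You instead apply the map to every $\eps<1/2$, which needs only $1-2\eps>0$. Your membership proof is therefore independent of the algorithmic sections. The paper's exact-fixed-point route, in turn, needs no Lipschitz constants, grids, or bit-complexity bookkeeping.

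Two small remarks. First, Step~3 does not need $\eta=\eps/(4K)$. Each target is evaluated at the same point $\mathbf x$ and equals the forced value exactly on the premise region, so any $\eta<\eps$ suffices; $K$ enters only through the grid mesh in Step~2. Second, your opening assertion that closed tolerance at $\eps'<\eps$ implies open tolerance at $\eps$ does not hold. A Boolean input in $(\eps',\eps)$ triggers the open premise but not the closed one. You rightly do not rely on it.
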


\begin{proof}
For $\eps>1/6$, membership follows from the fact that the problem is polynomial-time solvable for both conventions, i.e., due to  Corollary~\ref{cor:allbig} for the standard convention and Proposition~\ref{prop:alternative-algorithms} for the open tolerance. It therefore suffices to consider $0<\eps\le1/6$.

For $t\in\mathbb R$, write
\[
  [t]_{[0,1]}=\min\{1,\max\{0,t\}\}.
\]
For arithmetic gates, use the exact truncated output target. For a comparator,
use 

\[
  \left[
    \frac{y-x+\eps}{2\eps}
  \right]_{[0,1]}.
\]
For negation, use
\[
  \left[
    \frac{1-\eps-x}{1-2\eps}
  \right]_{[0,1]},
\]
and for $G_{\vee}$ and $G_{\wedge}$, use
\[
  \left[
    \frac{\max\{x,y\}-\eps}{1-2\eps}
  \right]_{[0,1]}
  \qquad\text{and}\qquad
  \left[
    \frac{\min\{x,y\}-\eps}{1-2\eps}
  \right]_{[0,1]},
  \text{respectively.}
\]

These functions are continuous and piecewise linear.
Every arithmetic target is satisfied exactly, while
the comparator and Boolean selectors output exactly
$0$ or $1$ whenever an output constraint is forced, under
either tolerance convention. Consequently, every fixed point of the resulting continuous piecewise-linear map satisfies all gates. Membership follows from
$\linearFIXP=\PPAD$~\cite{EY10}.
\end{proof}

We now show that the \PURE gadget continues to work correctly for the open tolerance convention not only when $0 < \eps < 1/6$ but also at $\eps = 1/6$. For simplicity, we show these analyses for the most concise of our \PURE gadget constructions, i.e., that of Appendix~\ref{app:hardness}, and note that the proofs for the construction of Section~\ref{sec:hardness} have similar arguments.

\begin{lemma}
\label{lem:convention-gadget}
Given a solution to the constructed \eps-\GC instance under open tolerance, decode a value $v < \eps$ as $0$, a value $v > 1-\eps$ as $1$, and a value $v \in [\eps, 1-\eps]$ as $\bbot$. Then, for every $0< \eps \leq 1/6$ the purification gadget implements \PURE correctly.
\end{lemma}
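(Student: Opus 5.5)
We redo the three correctness lemmas of Appendix~\ref{app:hardness} (Lemmas~\ref{lem:zero}, \ref{lem:one} and~\ref{lem:purity-alt}) under open tolerance. Under this convention every gate constraint is strict. The constant gate gives $1-\eps<q\le1$. The subtraction gate gives $1-u-2\eps<r<1-u+\eps$, which replaces~\eqref{eq:r} with strict inequalities. A comparator forces its output only under the strict premise $x<y-\eps$ or $x>y+\eps$, and a forced output satisfies $z>1-\eps$ or $z<\eps$. Pure low now means $v<\eps$ and pure high means $v>1-\eps$. All previous chains of inequalities remain valid, but each earlier final step of the form ``$1-k\eps>\eps$'' used $\eps<1/6$. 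The point is that at $\eps=1/6$ the strictness coming from the hypotheses supplies the needed margin, so such a step only requires $1-k\eps\ge\eps$.

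\emph{Pure low input} ($u<\eps$). We get $r>1-u-2\eps>1-3\eps$, so $r-u>1-4\eps\ge\eps$ (using $\eps\le1/5$). Hence $a>1-\eps$. Next, $s<\min\{2u,1\}+\eps<3\eps$, so $a-s>1-4\eps\ge\eps$, which forces $v<\eps$. Also $r+a>2-4\eps\ge1$, so $d>1-\eps$. Then $d-u>1-2\eps\ge\eps$, which forces $w<\eps$.

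\emph{Pure high input} ($u>1-\eps$). We get $r<2\eps$, so $u-r>1-3\eps\ge\eps$, hence $a<\eps$. Since $2u>1$, we have $s>1-\eps$, so $s-a>1-2\eps$ and $v>1-\eps$. For the critical step, $d<r+a+\eps<4\eps$, and therefore $u-d>1-5\eps\ge\eps$ exactly when $\eps\le1/6$. This forces $w>1-\eps$.

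\emph{At least one output pure.} We mirror Lemma~\ref{lem:purity-alt} and show that either $v>1-\eps$ or $w<\eps$. Assume $v\le1-\eps$. Then the comparator $v=G_<(a,s)$ was not forced high, so $a\ge s-\eps>\min\{2u,1\}-2\eps$.
\begin{itemize}
\item Suppose $a<\eps$. Then $\min\{2u,1\}<3\eps\le1/2$, so $u<3\eps/2$. This gives $r-u>1-2u-2\eps>1-5\eps\ge\eps$, which forces $a>1-\eps$, a contradiction. Hence $a\ge\eps$.
\item Since $a\ge\eps$, the comparator producing $a$ was not forced low, so $r\ge u-\eps$. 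Combined with $r<1-u+\eps$, this gives $u<\tfrac12+\eps$.
\item The bound $r+a>1+u-4\eps$ follows as in the original proof, now with strict inequality inherited from $a$ and $r$. We need $1+u-4\eps\ge u+2\eps$, i.e.\ $\eps\le1/6$. We also need $u+2\eps<\tfrac12+3\eps\le1$.
\end{itemize}
Thus $\min\{r+a,1\}>u+2\eps$, where the strictness survives even at equality $\eps=1/6$. The open addition constraint then gives $d>\min\{r+a,1\}-\eps>u+\eps$, which forces $w<\eps$.

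\emph{Main obstacle.} The work is bookkeeping: every step that was tight at $\eps=1/6$ must inherit a strict inequality from some earlier open constraint. The two places needing care are the bound $u-d>1-5\eps$ and the chain leading to $\min\{r+a,1\}>u+2\eps$. In each we track that at least one link in the chain is strict, namely the bound on $r$, on $a$, or on $q$.
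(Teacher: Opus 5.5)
Your proof is correct and follows the paper's approach. You re-run the three correctness arguments for the six-gate gadget with strict inequalities, and the strictness inherited from the open constraints supplies the missing margin at $\eps=1/6$, notably in $u-d>1-5\eps\ge\eps$ and $r+a>1+u-4\eps\ge u+2\eps$. The only difference is cosmetic and lies in the purity part: the paper splits on $u<1/2$ versus $u\ge 1/2$, whereas you first rule out $a<\eps$ by contradiction, as the closed-tolerance Lemma~\ref{lem:purity-alt} does. Both routes reach the same bounds.
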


\begin{proof}
Under open tolerance, the constant and subtraction gates give $1-u-2\eps<r<1-u+\eps$.

If $u<\eps$, the argument of Lemma~\ref{lem:zero}
applies: all relevant comparator gaps are greater
than $\eps$, so $v,w<\eps$.

If $u>1-\eps$, the argument of Lemma~\ref{lem:one}
forces $a<\eps$ and $v>1-\eps$. Open tolerance also
gives $r<2\eps$ and $d<4\eps$, so $u-d>1-5\eps\ge\eps$.
Hence the final comparator forces $w>1-\eps$.

It remains to show that at least one output is pure
for every input. Suppose $v\le1-\eps$. The comparator
producing $v$ and the addition gate producing $s$ give
$a\ge s-\eps>\min\{2u,1\}-2\eps$.
If $u<1/2$, then $r+a >(1-u-2\eps)+(2u-2\eps) =1+u-4\eps$.
If $u\ge1/2$, then $a>1-2\eps>\eps$.
Therefore the comparator producing $a$ cannot be
forced low, and hence $r\ge u-\eps$.
Consequently,
$r+a>u+1-3\eps>1+u-4\eps$.
Moreover, combining $r\ge u-\eps$ with
$r<1-u+\eps$ gives $u<1/2+\eps$.
Thus, in both cases, $r+a>1+u-4\eps\ge u+2\eps$
and $u+2\eps<\frac12+3\eps\le1$.
It follows that $d>\min\{r+a,1\}-\eps>u+\eps$.
The final comparator therefore forces $w<\eps$.
Hence at least one output is pure, completing
the proof.
\end{proof}

\begin{proposition}
\label{prop:open}
Under open tolerance,
\eps-\GC is \PPAD-complete for every constant
$0<\eps\le1/6$.
\end{proposition}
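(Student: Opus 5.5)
Membership in \PPAD for every constant $\eps>0$ under open tolerance is exactly Lemma~\ref{lem:membership}, so it remains to prove \PPAD-hardness for every constant $0<\eps\le 1/6$. I will reduce from \PC (Theorem~\ref{thm:pc}). I reuse the reduction of Appendix~\ref{app:hardness}, which uses the six-gate \PURE gadget, and change only the decoding. A value is decoded as $0$ if it is $<\eps$, as $1$ if it is $>1-\eps$, and as $\bbot$ otherwise, as in Lemma~\ref{lem:convention-gadget}. As before, a single shared node $q=G_1()$ is created. Under open tolerance this gives $1-\eps<q\le 1$, which suffices for all the bounds used in Lemma~\ref{lem:convention-gadget}.

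First I check the \NOT and \OR gates, which are implemented directly by $G_\neg$ and $G_\vee$. Under open tolerance, the premises of these gates are strict: $x<\eps$ or $x>1-\eps$. Their conclusions are also strict: $z>1-\eps$ or $z<\eps$. These are exactly the decoded statements ``input $0$/$1$'' and ``output $1$/$0$'', so the Boolean truth tables of Figure~\ref{fig:pc} are respected on pure decoded inputs. The \PURE gates are handled by Lemma~\ref{lem:convention-gadget}, which covers all $0<\eps\le1/6$. That lemma shows that pure inputs are copied to both outputs and that at least one output is always pure. Hence every open-tolerance \eps-solution of the constructed instance decodes to a \PC solution. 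The construction is clearly polynomial, since each gate is replaced by a constant-size gadget.

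The main point needing care is the boundary case $\eps=1/6$. There the closed-tolerance arguments fail only through the inequality $1-5\eps>\eps$, which becomes an equality. I expect this to be resolved exactly as in Lemma~\ref{lem:convention-gadget}: strict upper bounds such as $d<4\eps$ yield $u-d>1-5\eps\ge\eps$, and this strictness is precisely what the open comparator premise requires. I will also verify that nothing else in the reduction silently uses a strict inequality that degenerates at $\eps=1/6$. The Boolean gates impose no numeric condition on $\eps$ beyond $\eps<1/2$, so they are unaffected. Combining hardness with membership from Lemma~\ref{lem:membership} gives \PPAD-completeness.
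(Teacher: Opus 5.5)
Your proposal is correct and matches the paper's proof: the six-gate reduction of Appendix~\ref{app:hardness} with the open decoding, $G_\neg$ and $G_\vee$ implementing \NOT and \OR directly, Lemma~\ref{lem:convention-gadget} handling \PURE for all $0<\eps\le1/6$, and Lemma~\ref{lem:membership} giving membership. Your extra discussion of $\eps=1/6$ is not needed beyond citing Lemma~\ref{lem:convention-gadget}, which already covers the boundary. In particular, it covers the ``at least one output is pure'' case, where the lemma uses a case split on $u<1/2$ rather than a strict bound like $d<4\eps$.
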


\begin{proof}
Use the open decoding from Lemma~\ref{lem:convention-gadget}. Under this
decoding, $G_\neg$ and $G_\vee$ implement \NOT and \OR directly,
while Lemma~\ref{lem:convention-gadget} implements \PURE throughout
the stated range. Membership follows from Lemma~\ref{lem:membership}.
\end{proof}

\begin{proof}[Proof of Theorem~\ref{thm:conventions}]
For the standard convention, Theorem~\ref{thm:main} gives hardness for $\eps < 1/6$ and a polynomial-time algorithm for $\eps \geq 1/6$.

For the open tolerance convention, completeness for $\eps \leq 1/6$ follows from
Proposition~\ref{prop:open}, and
Proposition~\ref{prop:alternative-algorithms} gives polynomial-time
algorithms for $\eps > 1/6$. 
\end{proof}

\section{Algorithm analysis for any $\eps \geq 1/6$}
\label{app:eps-above-1/6}

We first run Algorithm~\ref{alg:propagation} using the state system of Section~\ref{sec:inf-states}, and
let $\sigma$ be its final state assignment. First retain the original
numerical interpretation and define
\[
\gvalonly^{\nu}(v)=\nu(\sigma(v)).
\]

\begin{lemma}\label{lem:extend7}
For every $1/6\le\eps<1/3$, the assignment
$\gvalonly^{\nu}$ is an \eps-\GC solution.
\end{lemma}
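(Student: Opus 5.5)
The plan is to reuse the final state assignment $\sigma$ of Algorithm~\ref{alg:propagation}, which is computed with the tolerance-$1/6$ rules and commitments, and check every gate type at tolerance $\eps\in[1/6,1/3)$. Values stay on the four levels $1/6,1/3,1/2,5/6$, so each premise and output condition only needs to be checked at these levels.

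\emph{Arithmetic gates.} Constant gates, copy gates, addition, subtraction and multiplication need no new work. Lemma~\ref{lem:valid} gives error at most $1/6\le\eps$, and arithmetic constraints only weaken as $\eps$ grows.

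\emph{Comparators.} Suppose a comparator is forced at tolerance $\eps$, say $\nu(p)<\nu(q)-\eps$. Since $\eps\ge1/6$, the pair is also forced high at tolerance $1/6$, so the final input pair forces high in the $1/6$ sense. The proof of Theorem~\ref{thm:algo} then applies: the output was set from $M$, it could not have committed low by Lemma~\ref{lem:noflip}, and the final scan would have committed it high had it still been $M$. So the output is $H$ with value $5/6$, which is at least $1-\eps$ because $\eps\ge1/6$. The forced-low case is symmetric and gives $1/6\le\eps$.

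\emph{Boolean gates.} This is the main obstacle, because premises become easier to trigger as $\eps$ grows. For $\eps<1/3$:
\begin{itemize}
\item the low premise $x\le\eps$ is still satisfied only by the value $1/6$, since the next level is $1/3>\eps$;
\item the high premise $x\ge1-\eps$ is still satisfied only by $5/6$, since $1-\eps>2/3>1/2$.
\end{itemize}
So the numerically triggered premises are exactly those of~\eqref{eq:bands}, namely the states $L,\lst$ and $H$. These lie in $\mathcal I_0$ and $\mathcal I_1$ respectively. The Boolean rule therefore prescribes $\lst$ (value $1/6\le\eps$) or $H$ (value $5/6\ge1-\eps$) whenever a premise fires, exactly as in Lemma~\ref{lem:valid}. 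This is also precisely where $\eps<1/3$ is needed: at $\eps=1/3$ the states $M$ and $C$ would trigger the low premise while possibly sitting in unresolved or conflicting classes.

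Every gate is satisfied, so $\gvalonly^{\nu}$ is an $\eps$-\GC solution.
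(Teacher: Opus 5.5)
Your proposal is correct and follows essentially the paper's proof. Arithmetic errors are at most $1/6$ by Lemma~\ref{lem:valid}. For $\eps<1/3$ the Boolean-triggering states remain exactly $L,\lst$ and $H$. Any comparator forced at tolerance $\eps\ge1/6$ was already forced in the same direction at $1/6$, so by the no-flip argument (Lemma~\ref{lem:noflip}) its committed output is correct. One small inaccuracy in your closing aside: at $\eps=1/3$ only $M$ causes trouble (e.g.\ a negation with input $M$), since $C\in\mathcal I_0$ is already read as $0$ and nothing there is ``conflicting''.
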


\begin{proof}
Every arithmetic error under $\gvalonly^{\nu}$ is at most $1/6$, and
every forced Boolean output has value $1/6$ or $5/6$. For
$\eps<1/3$, the Boolean-low states remain $L,\lst$, and the only
Boolean-high state remains $H$.

Finally, any comparator forced at tolerance
$\eps\ge1/6$ has an input gap greater than \eps, hence
greater than $1/6$. It was therefore forced in the same direction at
tolerance $1/6$, and its committed output satisfies the required output constraint.
\end{proof}

For larger tolerances, we keep the same information states with the same information order \eqref{eq:order} but change their numerical interpretation. Define
\begin{equation}\label{eq:mu}
\mu(L)=\mu(\lst)=\tfrac13,\qquad
\mu(M)=\mu(C)=\mu(D)=\tfrac12,\qquad
\mu(A)=\mu(H)=\tfrac23,
\end{equation}
and let $\gvalonly^{\mu}(v)=\mu(\sigma(v))$.

\begin{lemma}\label{lem:readout}
The assignment $\gvalonly^{\mu}$ is a $\tfrac13$-\GC solution.
\end{lemma}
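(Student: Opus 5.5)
We verify the statement gate by gate, mirroring Lemma~\ref{lem:valid} and the comparator part of Theorem~\ref{thm:algo}, now with tolerance $\tfrac13$ and readout $\mu$. The values $\mu$ lie in $\{\tfrac13,\tfrac12,\tfrac23\}$. So at tolerance $\tfrac13$ the Boolean low premise $x\le\tfrac13$ holds exactly for states $L,\lst$. The high premise $x\ge\tfrac23$ holds exactly for $A,H$. Moreover, no comparator is ever forced, since two values differ by at most $\tfrac13$ and a strict gap $>\tfrac13$ is needed. Thus comparators impose nothing, whatever state they committed to, and only non-comparator gates remain. For these the final output state equals $F(\mathbf p)$ evaluated on the final input states, because the algorithm has terminated.

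\emph{Arithmetic gates.} For the arithmetic gates, note first that any target value $t\in[0,1]$ within $\tfrac13$ of some output value is fine. Hence:
\begin{itemize}
\item \emph{Copy.} Trivial.
\item \emph{Constant.} $R(\zeta)$ has $|\nu(R(\zeta))-\zeta|\le\tfrac16$. I will check that the $\mu$-value is within $\tfrac13$. Since $\mu$ moves $\tfrac16\mapsto\tfrac13$, $\tfrac13\mapsto\tfrac12$, $\tfrac12\mapsto\tfrac12$ or $\tfrac23$, and $\tfrac56\mapsto\tfrac23$, each value shifts by at most $\tfrac16$, giving total error at most $\tfrac13$.
\item \emph{Addition.} Output $A$ or $H$ has value $\tfrac23$, so any target in $[\tfrac13,1]$ works. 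The truncated sum of two values, each at least $\tfrac13$, is at least $\tfrac23$.
\item \emph{Multiplication.} The outputs have values in $\{\tfrac13,\tfrac12,\tfrac23\}$. Targets $c\,\mu(p)\in[0,\tfrac23]$ fail only if the target is below $\mu(\mathrm{out})-\tfrac13$. For output $L$ (value $\tfrac13$) nothing fails. For $C$ or $D$ (value $\tfrac12$) I need $c\mu(p)\ge\tfrac16$. This occurs for $c>\tfrac25$ with $p\in\{D,A,H\}$, so $c\mu(p)\ge\tfrac25\cdot\tfrac12=\tfrac15$, which is fine. In the copy case $c\ge\tfrac45$, the output has the same state; the error is $(1-c)\mu(p)\le\tfrac15\cdot\tfrac23$, which is fine.
\item \emph{Subtraction.} Outputs $L,\lst$ (value $\tfrac13$) accept any target in $[0,\tfrac23]$, and every target is at most $\tfrac23-\tfrac13=\tfrac13$. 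Outputs $C,D$ (value $\tfrac12$) accept any target in $[\tfrac16,\tfrac56]$, and every target lies in $[0,\tfrac13]$. So I must check that $C$ or $D$ arises only when the target is at least $\tfrac16$. $C$ needs $p=H$ and $q=M$, giving target $\tfrac23-\tfrac12=\tfrac16$. $D$ needs $p=H$ and $q\in\mathcal I_0$, giving target at least $\tfrac23-\tfrac12=\tfrac16$. Both are fine.
\end{itemize}

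\emph{Boolean gates.} This is the main step where care is needed, because under $\mu$ the high premise now also fires for state $A$, not just $H$. I will use the fact that the Boolean rules interpret inputs via the classes $\mathcal I_0\ni L,\lst$ and $\mathcal I_1=\{A,H\}$. Any numerically triggered low premise then reads $0$, and any triggered high premise reads $1$. The three-valued rule therefore forces the corresponding bit, and the output is $\lst$ (value $\tfrac13\le\tfrac13$) or $H$ (value $\tfrac23\ge\tfrac23$). For $G_\vee$, both inputs low forces $0$, and one input high forces $1$. For $G_\wedge$ and $G_\neg$ the argument is analogous. Inputs $C,D$ read as $0$ but trigger no premise; the rule's output in that case is unconstrained and harmless.

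Combining the three cases yields the lemma. The only delicate point is that the Boolean class structure must align with the new $\mu$ bands, and it does precisely.
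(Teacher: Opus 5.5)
Your proof is correct and follows essentially the same gate-by-gate verification as the paper. The shared steps are: the at-most-$\tfrac16$ shift between $\nu$ and $\mu$ for constant gates; addition outputs all having value $\tfrac23$ with truncated sums at least $\tfrac23$; subtraction outputs $C,D$ arising only when the first input is $H$; the alignment of the $\mu$-bands $\{L,\lst\}$ and $\{A,H\}$ with $\mathcal I_0$ and $\mathcal I_1$ for Boolean gates; and comparators never being forced because all $\mu$-values lie within $\tfrac13$ of each other. Your multiplication and subtraction checks lower-bound the target instead of computing the paper's maximal errors ($3/10$, and the looser ``at most $1/2$'' for subtraction), which is only a cosmetic difference.
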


\begin{proof}
We verify the gates without changing the state assignment.

\noindent\emph{Constants and copy.}
For every state $p$,
\[
|\mu(p)-\nu(p)|\le\tfrac16.
\]
Since a constant gate had error at most $1/6$ under $\nu$, its error under
$\mu$ is at most $1/3$. Copy gates remain exact because equal states have
equal numerical interpretations.

\noindent\emph{Addition.}
Every addition output is $A$ or $H$, both of value $2/3$ under $\mu$.
Every input has value at least $1/3$, so the truncated sum lies in
$[2/3,1]$ and is within $1/3$ of the output.

\noindent\emph{Subtraction.}
An output in $\{L,\lst\}$ has value $1/3$. Such an output occurs whenever
the first input belongs to $\mathcal I_0$ or the second belongs to
$\mathcal I_1$, or in the remaining cases not covered by the rules for
$C$ and $D$; direct inspection gives a subtraction target at most
$1/2$, and hence error at most $1/3$. The outputs $C,D$ have value
$1/2$ and occur only when the first input is $H$ and the second is $M$
or belongs to $\mathcal I_0$. Their subtraction targets are $1/6$ or
$1/3$, so the error is again at most $1/3$.

\noindent\emph{Multiplication.}
If $c\le2/5$, the output has value $1/3$ and the target lies in
$[0,4/15]$. If $c\ge4/5$, copying the input gives error at most
$\tfrac15\cdot\tfrac23=\tfrac2{15}$.
In the middle interval, the possible inputs under $\mu$ are
$1/3,1/2,2/3$, while the corresponding outputs under $P$ are
$1/3$ or $1/2$. Directly over $2/5<c<4/5$, the maximum error is
$3/10<1/3$.

\noindent\emph{Boolean gates.}
At tolerance $1/3$, the numerical low states are $L,\lst$, and the
numerical high states are $A,H$. Every numerically low state belongs to \(\mathcal I_0\), and every numerically high state belongs to \(\mathcal I_1\). Their output values
are $1/3$ and $2/3$, so all forced output constraints are satisfied.

\noindent\emph{Comparators.}
At tolerance $1/3$, a comparator is never forced, since no two
values under $\mu$ differ by more than $1/3$.
\end{proof}

\begin{corollary}\label{cor:allbig}
For every $\eps\ge1/6$, \eps-\GC can be solved in
polynomial time.
\end{corollary}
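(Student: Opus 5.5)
The plan is a case split on $\eps$, since the problem is not monotone in $\eps$ and no single assignment serves the whole range. In every case I first run Algorithm~\ref{alg:propagation} at tolerance $1/6$. By Theorem~\ref{thm:algo} it terminates in polynomial time (linear with a worklist) and returns a final state assignment $\sigma$. I then choose a numerical readout of $\sigma$ that depends on which interval $\eps$ lies in.

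\emph{Case $1/6\le\eps<1/3$.} I output $\gvalonly^{\nu}$. Lemma~\ref{lem:extend7} states directly that this is an $\eps$-\GC solution. The points to recall are these:
\begin{itemize}
\item all arithmetic errors are at most $1/6\le\eps$;
\item the Boolean premise sets at $\eps<1/3$ are still $\{L,\lst\}$ and $\{H\}$, so the Boolean outputs $1/6$ and $5/6$ satisfy the output bounds;
\item any comparator forced at tolerance $\eps$ was already forced in the same direction at $1/6$, so its committed value is correct.
\end{itemize}

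\emph{Case $1/3\le\eps<1/2$.} I output $\gvalonly^{\mu}$, which Lemma~\ref{lem:readout} shows is a $\tfrac13$-\GC solution. This does not transfer automatically to larger $\eps$, so I recheck each gate type at tolerance $\eps$.
\begin{itemize}
\item \emph{Arithmetic gates.} Their constraints only weaken as $\eps$ grows, so errors of at most $1/3$ remain admissible.
\item \emph{Comparators.} The values of $\mu$ lie in $\{1/3,1/2,2/3\}$, so no two inputs differ by more than $1/3\le\eps$. Hence no comparator premise is ever triggered.
\item \emph{Boolean gates.} This is the main obstacle, because enlarging $\eps$ makes the premises easier to trigger. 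Since $\eps<1/2$, a low premise $x\le\eps$ is met only by the value $1/3$, i.e.\ the states $L,\lst\in\mathcal I_0$. A high premise $x\ge1-\eps>1/2$ is met only by $2/3$, i.e.\ the states $A,H\in\mathcal I_1$. The value $1/2$ triggers neither premise, because $\eps<1/2$ and $1-\eps>1/2$. So the triggered premises are exactly those at tolerance $1/3$. The three-valued rule then prescribes $\lst$ (value $1/3\le\eps$) or $H$ (value $2/3\ge1-\eps$), so the output constraints hold.
\end{itemize}

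\emph{Case $\eps\ge1/2$.} The all-$1/2$ assignment works, and no propagation is needed. Every arithmetic target lies in $[0,1]$, so the error is at most $1/2\le\eps$. Every comparator gap is $0$. Boolean outputs of $1/2$ satisfy both $z\le\eps$ and $z\ge1-\eps$, so any triggered premise is satisfied.

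Since each case runs in polynomial time, the corollary follows. The only delicate step is the Boolean check for $\eps\in[1/3,1/2)$, and it goes through because $1/2$ stays strictly between the two premise thresholds for every $\eps<1/2$.
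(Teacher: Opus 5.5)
Your proposal is correct and follows the paper's proof essentially step for step. It uses the same three-way split at $1/3$ and $1/2$, and the same readouts: $\gvalonly^{\nu}$ via Lemma~\ref{lem:extend7} for $1/6\le\eps<1/3$, and $\gvalonly^{\mu}$ for $1/3\le\eps<1/2$, with the Boolean and comparator premises rechecked at the larger tolerance. For $\eps\ge1/2$ it likewise uses the all-$1/2$ assignment.
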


\begin{proof}
For $1/6\le\eps<1/3$, use $\gvalonly^{\nu}$ and
Lemma~\ref{lem:extend7}. For $1/3\le\eps<1/2$, use
$\gvalonly^{\mu}$. Its arithmetic errors are at most $1/3$. The only
values below $1/2$ are $1/3$, and the only values above $1/2$ are
$2/3$, so every Boolean premise that becomes active in this range is
consistent with the information class used by the corresponding Boolean
rule. Moreover, increasing the comparator tolerance cannot force a new
comparator pair.

For $\eps\ge1/2$, the Boolean gates essentially break down. By assigning every variable the value $1/2$ all the gates are satisfied: arithmetic errors are at most $1/2$, comparator inputs are equal, and the value $1/2$ satisfies every forced Boolean output.
\end{proof}

Theorem~\ref{thm:main} follows from Theorem~\ref{thm:hardness} and
Corollary~\ref{cor:allbig}.

\section{Proofs for violated gates}\label{sec:robust}
Algorithm~\ref{alg:overview} summarizes both violated-gate algorithms.
The graph argument of Section~\ref{subsec:frag-overview} is proved in
Appendices~\ref{subsec:quarter} (quarter deletion) and~\ref{subsec:fragment}
(decomposition), and Appendix~\ref{subsec:lower} gives the lower bound.
The remaining arguments are circuit-specific:
Appendices~\ref{subsec:components} and~\ref{subsec:balancing} treat \GC,
and Appendix~\ref{subsec:pure-circuit} treats fan-out-one \PC.
We use the standard definition of \GC throughout.
By Corollary~\ref{cor:allbig}, every gate can be
satisfied when $\eps\ge1/6$. Thus, it suffices
to consider $0<\eps<1/6$.

\begin{algorithm}[t]
\caption{The violated-gate algorithms of Theorems~\ref{thm:robust}
and~\ref{thm:pc-robust}.}\label{alg:overview}
\begin{algorithmic}[1]
\Function{ViolatedGCircuit}{circuit, constants $\eps,\tau>0$}
    \If{$\eps\ge1/6$}
        \State \Return the solution of Corollary~\ref{cor:allbig}
    \EndIf
    \State build the dependency digraph $D$;
         $\lambda\gets\lceil4/\tau\rceil$
  \State $(S_0,\pi)\gets\Call{QuarterDeletion}{D}$
         \Comment{Algorithm~\ref{alg:quarter}}
  \State $X\gets$ decomposition set for $D-S_0$, $\pi$, and $\lambda$
         \Comment{Lemma~\ref{lem:fragment}}
  \State give constant variables their prescribed values, and the
         variables in $S_0\cup X$ the value $0$
  \For{each strongly connected component $C$ of $D-(S_0\cup X)$, in
       topological order}
    \State assign $C$ by exhaustive search over the fixed grid
           \Comment{Lemma~\ref{lem:small-component}}
  \EndFor
  \State let $\gvalonly^{\mathrm{decomp}}$ be the resulting assignment
  \State construct $\gvalonly^{\mathrm{const}}$ and
         $\gvalonly^{\mathrm{zero}}$ as in
         Lemma~\ref{lem:dep-cert}
  \State \Return whichever of $\gvalonly^{\mathrm{decomp}}$,
         $\gvalonly^{\mathrm{const}}$, and
         $\gvalonly^{\mathrm{zero}}$ violates the fewest gates
\EndFunction
\Statex
\Function{FanOutOnePureCircuit}{instance, integer $\lambda\ge1$}
  \State build the gate digraph $D$
  \State $(S_0,\pi)\gets\Call{QuarterDeletion}{D}$
  \State $X\gets$ decomposition set for $D-S_0$, $\pi$, and $\lambda$
         \Comment{Lemma~\ref{lem:fragment}}
  \State give the outputs of the gates in $S_0\cup X$ arbitrary values
  \For{each strongly connected component $C$ of $D-(S_0\cup X)$, in
       topological order}
    \State assign the nodes produced by $C$ by exhaustive search over
           $\{0,1,\bbot\}$
           \Comment{Lemma~\ref{lem:pc-boundary-totality}}
  \EndFor
  \State \Return the resulting assignment
\EndFunction
\end{algorithmic}
\end{algorithm}

\subsection{Quarter deletion and bounded-width elimination orderings}
\label{subsec:quarter}

We first show that deleting at most one quarter of the vertices
of in-degree two suffices to obtain an elimination ordering
of bounded in-width.

   We use directed vertex elimination~\cite{RoseTarjan78}, with the
   width notion of Hunter and Kreutzer~\cite{HK08} applied to the
   reverse digraph.
   
\begin{definition}[Directed elimination]\label{def:elimination}
Eliminating a vertex $v$ removes it and adds an arc $uw$
for every current in-neighbor $u$ and out-neighbor $w$
of $v$, suppressing loops and parallel arcs.
An \emph{elimination ordering} is an ordering of the vertices,
whose \emph{in-width} is the maximum in-degree of a vertex
immediately before its elimination.
An elimination is \emph{free} if the vertex has in-degree
at most one.
\end{definition}

Unlike elimination, \emph{deletion} removes a vertex without
adding shortcuts. Deleted vertices do not appear in the final
elimination ordering.

We collect the elimination properties needed below.
Item~(3) is one direction of the fill-path characterization of Rose and Tarjan~\cite{RoseTarjan78}. Item~(4) follows from the support characterization of
Hunter and Kreutzer~\cite{HK08}, applied to
the reverse digraph.
For completeness, we include standalone proofs even if some of the results needed come from prior work.

\begin{lemma}\label{lem:replay}
\leavevmode
\begin{enumerate}
\item Eliminating a vertex of in-degree at most one does not
increase any remaining in-degree. Neither deletion nor
elimination removes an arc between remaining vertices.

\item Moving a deletion earlier, while preserving the order
of all other operations, cannot increase the in-degree at
any subsequent elimination.

\item If a path from $u$ to $w\ne u$ has all its internal
vertices eliminated before either endpoint is removed,
then the arc $uw$ is present after these eliminations.

\item Let $v_1,\ldots,v_n$ be an elimination ordering of
in-width at most $k$, and let $P_i=\{v_1,\ldots,v_i\}$.
For every $i$, at most $k$ vertices $z\notin P_i$
can reach $v_i$ along a path whose internal vertices
belong to $P_{i-1}$. This property is preserved when
restricting the ordering to an induced subgraph.
\end{enumerate}
\end{lemma}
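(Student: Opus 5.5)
I will prove the four items of Lemma~\ref{lem:replay} separately, since each follows from the definitions of elimination and deletion (Definition~\ref{def:elimination}).

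\emph{Item 1.} When a vertex $v$ is eliminated, only out-neighbors $w$ of $v$ can gain in-arcs. If $v$ has no in-neighbor, then no arc is added and $w$ simply loses $vw$. If $v$ has a unique in-neighbor $u$, then $w$ loses $vw$ and gains at most $uw$, so its in-degree does not increase. The second claim is immediate: elimination only removes arcs incident to $v$ and adds arcs, and deletion only removes arcs incident to the deleted vertex. Hence no arc between two remaining vertices is removed.

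\emph{Item 2.} I would argue by comparing arc sets. Deleting a vertex $x$ earlier gives, at every later time, an arc set that is a subset of the arc set in the original schedule restricted to vertices other than $x$. The key step is an induction over the operations: if the current graph $H'$ is a subgraph of $H-x$, then eliminating the same vertex $v$ in both graphs preserves the inclusion. This holds because the in- and out-neighbors of $v$ in $H'$ are subsets of those in $H$, so the shortcut arcs added in $H'$ are a subset of those added in $H$. At the point where $x$ is deleted in the original schedule, we get $H'\subseteq H-x$. Therefore every in-degree at an elimination is at most the original one.

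\emph{Item 3.} I would induct on the number of internal vertices. With none, the arc $uw$ exists and is never removed, by Item 1. Otherwise, let $y$ be the first internal vertex eliminated, with predecessor $a$ and successor $b$ on the path. Just before $y$ is eliminated, the arcs $ay$ and $yb$ are present, by the induction hypothesis applied to the shorter subpaths (they have fewer internal vertices, all eliminated earlier, and in fact they are original arcs because $y$ is first). Eliminating $y$ adds the arc $ab$, unless $a=b$. If $a=b$, I shortcut the path to remove the cycle, which is possible since $u\ne w$. This yields a path with one fewer internal vertex whose remaining internal vertices are eliminated later, and the induction applies. A cleaner formulation inducts on time: I maintain that the sequence of surviving path vertices forms a walk in the current graph, shortening any closed subwalks.

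\emph{Item 4.} By Item 3, each such $z$ is an in-neighbor of $v_i$ immediately before $v_i$ is eliminated, since the internal vertices lie in $P_{i-1}$ and $z,v_i$ are not yet removed. There are at most $k$ such in-neighbors, which gives the bound. For the preservation claim, a path in an induced subgraph with internal vertices in the restricted prefix is also a path in the original graph with internal vertices in $P_{i-1}$, so the set of such $z$ only shrinks. The main obstacle is the bookkeeping in Item 3 when the path is not simple or repeated vertices arise, and I handle this by working with walks and shortcutting.
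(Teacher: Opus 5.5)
Your proof is correct and follows essentially the same route as the paper. The parallels are: the one-shortcut-per-out-arc observation for (1), the arc-set inclusion induction for (2), successive contraction of the path for (3), and (3) combined with the in-width bound at $v_i$ for (4). The only difference is your extra handling of $a=b$ (and of walks) in (3), which is harmless but unnecessary: the surviving vertices of a simple path always form a subsequence of distinct vertices, so consecutive ones are never equal.
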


\begin{proof}
For (1), eliminating a vertex with a unique in-neighbor
replaces each outgoing arc by at most one shortcut.
Thus, no remaining in-degree increases.

For (2), compare the two executions on their common
remaining vertices. Moving a deletion earlier only
removes arcs, and every shortcut requires two existing
arcs. Inductively, the execution with the earlier deletion
has a subset of the arcs of the original execution.
Hence its in-degrees cannot be larger.

For (3), eliminate the internal vertices of the path
one by one. Each elimination replaces the two incident
path arcs by a shortcut, eventually producing $uw$.
This arc remains until an endpoint is removed.

For (4), eliminate $v_1,\ldots,v_{i-1}$.
By (3), every stated path produces an arc $zv_i$.
Since $v_i$ then has in-degree at most $k$, there
are at most $k$ such vertices $z$.
Restricting the ordering to an induced subgraph
cannot introduce new paths of this form.
\end{proof}

Our ordinary deletion follows the charging of~\cite{BorowieckiDS14,Meyer26}
(there, on acyclic orientations); seeds, marks and repairs handle
directed cycles. 

Our algorithm maintains a working digraph $R$, a deletion
set $S$, an ordering $\pi$, and a set $Q$ of at most two
marked vertices. It repeatedly eliminates vertices of
in-degree at most one. When none remain, it performs
one of three operations:

\begin{itemize}
\item An \emph{ordinary deletion} removes a vertex of
out-degree at least three.

\item A \emph{repair} deletes a marked vertex if no
ordinary deletion is possible.

\item A \emph{seed} removes an arbitrary vertex without
adding shortcuts, appends it to $\pi$, and marks its
in-neighbors. A seed is introduced only when $Q$ is empty.
\end{itemize}

When a marked vertex is freely eliminated, its mark is
transferred to its unique current in-neighbor, if one exists.
The marks will account for the shortcuts omitted when
seeds are removed.

\begin{algorithm}[t]
\caption{Quarter deletion.}\label{alg:quarter}
\begin{algorithmic}[1]
\Require simple digraph $D$ of maximum in-degree at most two
\Ensure deletion set $S$ and elimination ordering $\pi$ of $D-S$
\Function{QuarterDeletion}{$D$}
  \State $R\gets D$; $S,Q\gets\varnothing$; $\pi\gets$ empty ordering
  \Loop
    \While{some $v\in V(R)$ has $d_R^-(v)\le1$}
      \State $I\gets N_R^-(v)$
      \State eliminate $v$ in $R$; append $v$ to $\pi$
             \label{line:free-elim}
      \If{$v\in Q$}
        \State $Q\gets(Q\setminus\{v\})\cup I$
               \label{line:transfer}
      \EndIf
    \EndWhile
    \If{$R$ is empty}
      \State \Return $(S,\pi)$
    \EndIf
    \If{some $x$ has $d_R^+(x)\ge3$}
      \State choose such an $x$
      \State delete $x$ from $R$; $S\gets S\cup\{x\}$
      \State $Q\gets Q\setminus\{x\}$
             \label{line:ordinary}
    \ElsIf{$Q\ne\varnothing$}
      \State choose $x\in Q$
      \State delete $x$ from $R$; $S\gets S\cup\{x\}$
      \State $Q\gets Q\setminus\{x\}$
             \label{line:repair}
    \Else
      \State choose any $s\in V(R)$
      \State $Q\gets N_R^-(s)$
             \label{line:seed-mark}
      \State remove $s$ from $R$ without shortcuts;
             append $s$ to $\pi$
             \label{line:seed-elim}
    \EndIf
  \EndLoop
\EndFunction
\end{algorithmic}
\end{algorithm}

The working digraph always has maximum in-degree at most two:
free eliminations do not increase in-degrees by
Lemma~\ref{lem:replay}(1), and the other operations add
no arcs.

\begin{theorem}\label{thm:quarter-del}
Let $D$ be a simple digraph of maximum in-degree at most
two, with $n_2$ vertices of in-degree two. In polynomial
time one can find a set $S\subseteq V(D)$ with
$|S|\le n_2/4$ and an elimination ordering of $D-S$
of in-width at most three.
\end{theorem}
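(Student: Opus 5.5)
We analyse Algorithm~\ref{alg:quarter} directly and prove two things: the returned ordering $\pi$ of $D-S$ has in-width at most three, and $|S|\le n_2/4$. The algorithm runs in polynomial time because every loop iteration removes at least one vertex from $R$.

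\emph{Width.} Free eliminations happen at working in-degree at most one. The difficulty is that seeds and deletions are removed from $R$ without the shortcuts a true elimination of $D-S$ along $\pi$ would add. We compare the run of the algorithm with the genuine elimination of $D-S$ in the order $\pi$. By Lemma~\ref{lem:replay}(2), moving every deletion to the very beginning cannot increase any in-degree. So it suffices to bound, at each vertex $v$ of $\pi$, how many extra in-arcs the genuine elimination has compared with $R$. These extra arcs come only from shortcuts through seeds. A seed $s$ has at most two in-neighbours, and these are exactly the marked vertices. We prove an invariant by induction over the operations: every vertex $w$ of $R$ whose genuine in-neighbourhood differs from its $R$-in-neighbourhood has all its additional in-neighbours inside $Q$, and at most one additional in-neighbour exists per surviving mark. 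Free eliminations transfer marks to the unique in-neighbour, which matches how a shortcut $u\to x$ is inherited when $x$ is eliminated via its unique in-neighbour $u$. Repairs and ordinary deletions remove marked vertices. A new seed is introduced only when $Q=\varnothing$, so at most one seed's worth of omitted shortcuts is alive at any time. A seed itself has in-degree at most two in $R$, and a freely eliminated vertex has $R$-in-degree at most one plus at most two marked extra in-neighbours. Hence the genuine in-degree is at most three at elimination time. The main care needed is to show that an extra in-neighbour is never counted twice through a transfer.

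\emph{Counting.} We use a potential function in the spirit of~\cite{BorowieckiDS14,Meyer26}. Let $\Phi$ be the number of vertices of $R$ with in-degree two plus a bonus for each mark. We show that $\Phi$ starts at most $n_2$ and that each deletion decreases $\Phi$ by at least four, counting charges carried by marks. For an ordinary deletion, $x$ has out-degree at least three and every vertex has in-degree two when no free elimination is possible. Deleting $x$ therefore drops at least three out-neighbours to in-degree one, and $x$ itself had in-degree two, giving a drop of four. A seed creates at most two marks but also drops its out-neighbours' in-degrees, and we prepay the marks from this drop. For a repair, out-degrees are all at most two, so we argue that the repaired vertex together with its stored charge accounts for four units. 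Making the accounting of seeds and repairs balance, so that each repair is paid exactly four units, is the main obstacle. We first try assigning each mark a credit of one and each seed a credit of two drawn from its out-arcs, and handle the case where a seed's out-neighbour is itself marked separately.
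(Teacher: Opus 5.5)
Your width argument is essentially the paper's. The paper compares $R$ with a graph $\widehat R$ that performs the genuine eliminations, and maintains that every arc of $\widehat R$ missing from $R$ has its tail in $Q$, with $|Q|\le 2$. Seeds occur only when $Q=\varnothing$, so $\widehat R=R$ at that moment, and deletions are moved to the front by Lemma~\ref{lem:replay}(2). The double-counting worry is unnecessary: the extra in-neighbours of a vertex are distinct vertices of $Q$, so a freely eliminated vertex has genuine in-degree at most $1+|Q|\le 3$. At a free elimination of $v$, the only check needed is for each new shortcut $uw$ absent from $R$:
\begin{itemize}
\item if $uv\notin E(R)$, then $u\in Q$ already;
\item otherwise $uv\in E(R)$ and $vw\notin E(R)$, so $v\in Q$ and its mark moves to $u$.
\end{itemize}

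The genuine gap is in the counting, which you leave unresolved. The one fact you cite for repairs, that out-degrees are all at most two, points the wrong way. To charge four units to a repair, and to keep a seed from increasing your potential, you need the removed vertex to have \emph{at least} two out-neighbours. The missing observation is the degree-sum identity. When no free elimination is possible, every vertex of $R$ has in-degree exactly two, so $\sum_v d_R^+(v)=2|V(R)|$. If in addition no ordinary deletion applies, every out-degree is at most two, hence exactly two.

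With this fact, your scheme with a unit credit per mark closes at once, with no special case for marked out-neighbours of a seed. Take $\Phi=|\{v: d_R^-(v)=2\}|+|Q|$. Then:
\begin{itemize}
\item an ordinary deletion lowers $\Phi$ by at least $4$;
\item a repair lowers it by $1+2+1=4$ (the vertex, its two out-neighbours, its mark);
\item a seed changes it by $-3+2=-1$;
\item free eliminations and mark transfers never increase it, by Lemma~\ref{lem:replay}(1) and because a transfer does not increase $|Q|$.
\end{itemize}
Since $\Phi$ starts at $n_2$ and ends at $0$, we get $4|S|\le n_2$.

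The paper uses the same fact but counts rounds instead of a potential. An ordinary-deletion round removes at least four vertices, and a repair or seed round at least three. Each seed creates at most two marks and each repair consumes one, so $h\le 2t$. Hence $m\ge 4a+3h+3t\ge 4(a+h)$, where $m\le n_2$ is the number of vertices surviving the initial free eliminations.
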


\begin{proof}
We prove the in-width and deletion bounds separately.

\smallskip
\noindent\emph{The in-width bound.}
The working graph $R$ omits shortcuts when seeds are removed.
To account for them, consider a second digraph $\widehat R$
that performs the genuine eliminations of all vertices
appended to $\pi$, while making the same deletions as $R$.

We maintain the invariant
\begin{equation}\label{eq:marked-invariant}
\begin{gathered}
V(\widehat R)=V(R),\qquad E(R)\subseteq E(\widehat R),\\
E(\widehat R)\setminus E(R)\subseteq Q\times V(R),
\qquad Q\subseteq V(R),\quad |Q|\le2.
\end{gathered}
\end{equation}
Thus, every omitted arc has a marked tail.

Initially $\widehat R=R=D$ and $Q=\varnothing$, so the
invariant holds.

Suppose first that $v$ is freely eliminated in $R$.
Every additional in-neighbor of $v$ in $\widehat R$
belongs to $Q$. Hence
\[
d_{\widehat R}^-(v)
\le d_R^-(v)+|Q|
\le3.
\]

To preserve the invariant, consider a shortcut $uw$
created in $\widehat R$ but absent from $R$ after
eliminating $v$. If $uv\notin E(R)$, then $u\in Q$.
Otherwise, $uv\in E(R)$ and $vw\notin E(R)$,
so $v\in Q$. Since $v$ has at most one in-neighbor
in $R$, transferring its mark to that in-neighbor
marks $u$. Thus, every omitted shortcut has a marked
tail. Previously omitted arcs whose endpoints survive
retain their marked tails, and transferring a mark
does not increase $|Q|$.
The invariant is therefore preserved.

An ordinary deletion or repair removes the same vertex
from both graphs and creates no shortcuts, so it also
preserves the invariant.

Finally, a seed is chosen only when $Q=\varnothing$,
and hence $\widehat R=R$. At this point every remaining
vertex has in-degree two. Thus, the seed is eliminated
in $\widehat R$ at in-degree two. Its two in-neighbors
are marked in $R$, accounting for every shortcut omitted
there and preserving the invariant.

Consequently, every vertex appended to $\pi$ is
eliminated in $\widehat R$ at in-degree at most three.
Moving all deletions to the beginning cannot increase
any subsequent in-degree by Lemma~\ref{lem:replay}(2).
Hence $\pi$ is an elimination ordering of $D-S$
of in-width at most three.

\smallskip
\noindent\emph{The deletion bound.}
The initial free eliminations remove every vertex
whose original in-degree is at most one. Let $m$
be the number of remaining vertices. Then
\[
m\le n_2.
\]

Call each ordinary deletion, repair, or seed, together
with the free eliminations that follow it, a \emph{round}.
At the beginning of each round, every remaining vertex
has in-degree two.

An ordinary deletion removes a vertex of out-degree at
least three. Its deletion makes at least three
out-neighbors eligible for free elimination, so the
round removes at least four vertices.

When no ordinary deletion is possible, every out-degree
is at most two. Since every in-degree is exactly two,
the equality of the sums of in-degrees and out-degrees
implies that every out-degree is also two.
A repair or seed therefore makes two out-neighbors
eligible for free elimination, so its round removes
at least three vertices.

Let $a,h,t$ be the numbers of ordinary deletions,
repairs, and seeds, respectively. Each seed creates
at most two marks, transferring marks does not
increase their number, and each repair consumes one.
Hence $h\le2t$.
Since the rounds remove disjoint sets of vertices, and $h\le 2t$ gives $3t\ge h$,
\[
m\ge4a+3h+3t\ge4(a+h).
\]
Only ordinary deletions and repairs contribute to $S$.
Therefore,
\[
|S|=a+h\le\frac m4\le\frac{n_2}{4}.
\]

Every operation removes a vertex from $R$, and all
graph operations can be performed in polynomial time.
\end{proof}

\begin{remark}
Making the digraph acyclic can require deleting
$2n/3$ vertices, as shown by disjoint bidirected
triangles. Our weaker goal of in-width at most $3$
requires at most $n_2/4$ deletions and suffices
for the decomposition in the next subsection.
\end{remark}

\subsection{Decomposition from an elimination ordering}
\label{subsec:fragment}

We now use the reachability property of
Lemma~\ref{lem:replay}(4) to turn an elimination ordering
of bounded in-width into a decomposition with small
strongly connected components. The argument is analogous
to the bounded-treewidth fragmentation lemma of
Gupta et al.~\cite{GLLMW19}.

\begin{lemma}\label{lem:fragment}
Given an $n$-vertex simple digraph $D$, an elimination ordering of
$D$ of in-width at most $k$, and an integer $\lambda\ge1$,
one can compute in polynomial time a set $X\subseteq V(D)$ with
\[
  |X|\le\frac{(k+1)n}{\lambda+1}
\]
such that every strongly connected component of $D-X$
has at most $\lambda$ vertices.
\end{lemma}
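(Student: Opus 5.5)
Let $v_1,\dots,v_n$ be the given ordering and $P_i=\{v_1,\dots,v_i\}$. The plan is a greedy sweep along the ordering that maintains blocks of bounded size, in the spirit of the treewidth fragmentation of Gupta et al. Process the vertices in order. For each undeleted processed vertex $v_i$, define its \emph{reach-set} $T_i$: the set of processed, undeleted vertices $u\in P_i\setminus X$ from which $v_i$ can be reached, or which can be reached from $v_i$, by paths inside $D[P_i\setminus X]$. We need $v_i$'s strongly connected component in $D[P_i\setminus X]$, which is contained in the part of $P_i\setminus X$ that reaches $v_i$. The greedy rule is as follows. When $v_i$ is processed, compute the set $W_i\subseteq P_i\setminus X$ of vertices that reach $v_i$ within $D[P_i\setminus X]$. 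If $|W_i|\le\lambda$, keep $v_i$. Otherwise, put into $X$ the vertex $v_i$ together with the at most $k$ vertices $z\notin P_{i-1}$ that reach $v_{i}$ through $P_{i-1}$ (Lemma~\ref{lem:replay}(4)). The aim of the second ingredient is to seal off the $\lambda$ or more vertices just absorbed, so that they can never again join a large strong component.

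Correctness comes next. Any strongly connected component $K$ of $D-X$ has a last vertex $v_j$ in the ordering, and $K\subseteq P_j$. All of $K$ reaches $v_j$ inside $D[P_j\setminus X]$, and when $v_j$ was processed the set $X$ could only have been smaller afterwards. This needs care, because vertices added to $X$ later only shrink reach-sets, so $K\subseteq W_j$ measured at time $j$ after later deletions; since $W_j$ can only shrink, $|K|\le\lambda$ holds provided $v_j$ was kept. If $v_j$ was deleted, it is not in $K$. So correctness is essentially immediate from the rule, and the real content is the counting.

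For the bound $|X|\le (k+1)n/(\lambda+1)$, I want to charge each deletion event (which costs at most $k+1$ vertices) to at least $\lambda+1$ distinct vertices, never charged again. At a deletion event at $v_i$, charge the set $W_i$ ($|W_i|\ge\lambda+1$, counting $v_i$). The key claim is that after this event, no vertex of $W_i$ lies in any future $W_j$. Suppose $u\in W_i$ reaches a later $v_j$ via a path in $D[P_j\setminus X]$. Follow the path; consider the first vertex on it outside $P_i$... Here is the main obstacle. The path starts in $P_i$ and must leave $P_i$ to arrive at $v_j$, $j>i$. The segment within $P_i$ then ends at some $y\in P_i$ followed by an arc to $z\notin P_i$. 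I would like to show $z$ (or an earlier exit point) was among the $\le k$ deleted boundary vertices of $v_i$. Since $u$ reaches $v_i$ through $P_{i}$, I would instead use the separator of all of $W_i$: vertices outside $P_i$ adjacent from $W_i$. Lemma~\ref{lem:replay}(4) bounds only those reaching $v_i$, so I would modify the rule to delete, at the event, the vertices $z\notin P_i$ that are reachable from $W_i$... but the property is stated for in-reach. Taking the reversed view, eliminating $P_{i-1}$ yields arcs into $v_i$ from $z\notin P_i$; I expect the right fix is to delete the out-boundary of $W_i$ by arguing through elimination: after eliminating $P_{i-1}\setminus X$, each exit $y\to z$ from $W_i$, combined with $y$'s path into $v_i$, does not directly give an arc $zv_i$. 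So I would first try to reorganize the sweep to use the reversed reachability (vertices reachable from $v_i$), matching the in-width bound orientation. If that fails, I would fall back to charging via the fill arcs of Lemma~\ref{lem:replay}(3). Making this separator argument match the orientation of Lemma~\ref{lem:replay}(4) is the step I expect to be hardest.
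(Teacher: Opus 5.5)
Your correctness argument is fine, but the counting step is not merely unfinished: with the trigger you chose, it is false. Deleting $v_i$ together with the vertices that reach $v_i$ through $P_{i-1}$ removes every arc \emph{entering} $W_i$ from the undeleted rest of the graph. It does nothing about arcs \emph{leaving} $W_i$, so vertices of $W_i$ can reappear in later reach-sets, and your rule can delete far more than the bound allows.

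For example, take an out-star with centre $u$ and leaves $w_1,\dots,w_m$, ordered $u,w_1,\dots,w_m$. This ordering has in-width $0$, so the target is $|X|\le n/(\lambda+1)$. The digraph is acyclic, so $X=\varnothing$ would do. With $\lambda=1$, every leaf has $W_{w_j}=\{u,w_j\}$ and an empty boundary set, so every leaf is deleted and $|X|=n-1$. Prefixing a path $u_1\to\dots\to u_\lambda=u$ gives $|X|=n-\lambda$ for any $\lambda$. The same example shows that the out-boundary of $W_i$ can be unbounded. Hence the out-separator you consider cannot be controlled by Lemma~\ref{lem:replay}(4), and reversing the orientation does not match the in-width bound.

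Two ideas are missing.
\begin{itemize}
\item Trigger on strong components, not reach-sets.
\item For strong connectivity, sealing one side suffices, so no out-separator is needed.
\end{itemize}

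The paper takes the first $v_i$ whose prefix $H[P_i]$ contains a strong component $C$ with more than $\lambda$ vertices, so $v_i\in C\subseteq W$. It lets $B$ be the vertices outside $W$ with an arc into $W$. These lie after $v_i$ and reach it through $P_{i-1}$, so $|B|\le k$; note that $B$ is contained in the set you already delete. It then adds $B\cup\{v_i\}$ to $X$, sets $W\setminus\{v_i\}$ aside, and recurses on $H-(W\cup B)$.

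This works for two reasons. First, nothing outside $W\setminus\{v_i\}$ has an arc into it any more. So no strong component of $D-X$ can meet both this set and a later vertex, since that would need a path back into $W$. Arcs leaving $W$ are harmless. Second, by minimality of $i$, the strong components inside $W\setminus\{v_i\}$ already have at most $\lambda$ vertices. Each iteration consumes at least $\lambda+1$ fresh vertices of $H$ and costs at most $k+1$ deletions, which gives the bound.

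In your sweep formulation, the repair is to keep $v_i$ unless its strong component $C_i$ in $D[P_i\setminus X]$ has more than $\lambda$ vertices, and to charge the event to $C_i$ rather than to $W_i$. The sealing argument shows that every later $C_j$ avoids $W_i\supseteq C_i$, so the charged sets are pairwise disjoint.
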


\begin{proof}
We maintain an induced subgraph $H$ of $D$, initially $D$,
and a deletion set $X$, initially empty. The ordering on $H$
is always the restriction of the original ordering, so
Lemma~\ref{lem:replay}(4) continues to apply.

If every strongly connected component of $H$ has at most
$\lambda$ vertices, we stop. Otherwise, let $v_i$ be the
first vertex in the restricted ordering whose prefix $P_i$
contains a strongly connected component $C$ of more than
$\lambda$ vertices. Necessarily $v_i\in C$. Let
\[
  W=\{u\in P_i:u\leadsto v_i\text{ in }H[P_i]\},
\]
where reachability includes paths of length zero.
Since $C\subseteq W$, we have $|W|\ge\lambda+1$.

Let $B$ be the vertices outside $W$ with an arc into $W$.
Every $z\in B$ occurs after $v_i$ in the ordering, since
otherwise $z$ would belong to $W$. Moreover, $z$ can reach
$v_i$ along a path whose internal vertices precede $v_i$.
By Lemma~\ref{lem:replay}(4), $|B|\le k$.

Add $B\cup\{v_i\}$ to $X$, set aside
$W\setminus\{v_i\}$, and continue on $H-(W\cup B)$.
By the choice of $v_i$, every strongly connected
component of $H[W\setminus\{v_i\}]$ has at most
$\lambda$ vertices. Furthermore, deleting $B$ removes
every arc entering $W$ from the remaining graph.
Thus, no strongly connected component of $D-X$ can
intersect both a set-aside part and vertices processed
later. Consequently, every strongly connected component
of $D-X$ has at most $\lambda$ vertices.

Each iteration removes at least $\lambda+1$ vertices
from $H$ and adds at most $k+1$ vertices to $X$.
Hence there are at most $n/(\lambda+1)$ iterations, giving
\[
  |X|\le\frac{(k+1)n}{\lambda+1}.
\]
All operations can be performed in polynomial time.
\end{proof}

\begin{proof}[Proof of Theorem~\ref{thm:fragmentation}]
By Theorem~\ref{thm:quarter-del}, we can compute a set
$S_0$ with $|S_0|\le n_2/4$ such that $D-S_0$
admits an elimination ordering of in-width at most three.
Applying Lemma~\ref{lem:fragment} to $D-S_0$ with
$k=3$ gives a set $X$ with
$|X|\le4n/(\lambda+1)$ such that every strongly
connected component of $D-(S_0\cup X)$ has at most
$\lambda$ vertices. Taking $S=S_0\cup X$ completes
the proof.
\end{proof}

\subsection{A linear lower bound}\label{subsec:lower}
The next lemma shows that a linear number of deletions can be
necessary. Its proof uses the first-moment method~\cite{AlonSpencer}
to rule out nearly balanced cuts with no arcs in one direction.

\begin{lemma}
\label{lem:fragment-lb}
With natural logarithms, let $\gamma\approx 0.0819$ be the unique
root in $(0,1/10)$ of
\[
  \Phi(x)
  =-(1-x)\log(1-x)+2(1+x)\log(1+x)
   -(1+3x)\log2-3x\log x.
\]
For every nonnegative constant $\alpha<\gamma$ and every function
$\lambda(n)=o(n)$, there are arbitrarily large simple $n$-vertex
digraphs $D$ of maximum in-degree at most two such that every set
$S\subseteq V(D)$ for which all strongly connected components of $D-S$
have at most $\lambda(n)$ vertices satisfies $|S|>\alpha n$.
\end{lemma}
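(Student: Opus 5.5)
The plan is a first-moment argument over a random digraph of in-degree two. Let $D$ be the digraph on $[n]$ in which every vertex independently picks an unordered pair of distinct in-neighbors, uniformly among the vertices other than itself. Then $D$ is simple and has maximum in-degree at most two. It suffices to show that, for fixed $\alpha<\gamma$, the following holds with probability $1-o(1)$: no set $S$ with $|S|\le\alpha n$ leaves all strongly connected components of $D-S$ of size at most $\lambda(n)$.

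The first step is a combinatorial reduction to cuts. Suppose $S$ has the property that every strongly connected component of $D-S$ has at most $\lambda$ vertices. Order the components topologically in the condensation of $D-S$, and take the shortest prefix of components whose union $A$ has $|A|\ge(n-|S|)/2$. Then $|A|$ exceeds $(n-|S|)/2$ by at most $\lambda=o(n)$. Let $B=V\setminus(S\cup A)$. There is no arc from $B$ to $A$. So every vertex of $A$ chooses both in-neighbors inside $A\cup S$.

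Note that enlarging $S$ by arbitrary vertices preserves the premise, so we may assume $|S|=\lfloor\alpha n\rfloor$ exactly, and $|A|=(1-\alpha)n/2+o(n)$. The bad event is therefore contained in the following event: there exists a triple $(S,A,B)$ with these prescribed sizes such that every vertex of $A$ has both of its in-neighbors in $A\cup S$.

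The second step is the union bound. For a fixed triple, independence over the vertices of $A$ gives a probability of at most
\[
\left(\tfrac{|A|+|S|}{n-1}\right)^{2|A|}\approx\left(\tfrac{1+\alpha}{2}\right)^{(1-\alpha)n}.
\]
The number of triples is a multinomial coefficient, which Stirling's formula estimates as $e^{nH+o(n)}$, where $H$ is the entropy of $(\alpha,\frac{1-\alpha}2,\frac{1-\alpha}2)$. We sum over the $o(n)$ admissible sizes of $A$, and the slack $\lambda$ only costs a factor $e^{o(n)}$. The logarithm of the expected number of bad triples, divided by $n$, is thus an explicit function of $\alpha$. We then check that it is negative precisely below the root $\gamma$ of $\Phi$.

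If the crude estimate above does not reproduce $\Phi$ exactly, I would refine the count. One option is to also use the arcs out of $S$ and the exact without-replacement probabilities. Another is to let $|A|$ be a free fraction $\beta$ of $V\setminus S$ and optimize $\beta$. The terms $2(1+x)\log(1+x)$ and $-3x\log x$ in $\Phi$ suggest that the tight version counts the sets in a slightly different parametrization, for instance one that tracks $S\cup A$ jointly. In every variant the structure of the argument is the same: (entropy of the choice of sets) $+$ (log-probability of no $B\to A$ arcs) $<0$.

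The main obstacle is the calculus in this last step. We must show that the exponent is strictly negative for every $\alpha<\gamma$, uniformly over the nearly balanced sizes allowed by the $o(n)$ slack, and we must identify the threshold with the root of $\Phi$ in $(0,1/10)$. I would do this by checking that $\Phi(0)=0$, that $\Phi$ has the appropriate sign on $(0,\gamma)$, and that the exponent is continuous in the size parameters, so the $o(n)$ perturbation is harmless. Markov's inequality then shows that the expected number of bad triples is $e^{-\Omega(n)}$, so a suitable $D$ exists for every large $n$.
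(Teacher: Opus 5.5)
Your reduction to cuts is correct and is exactly the paper's: pad $S$ to size $\lfloor\alpha n\rfloor$, take the shortest topological prefix $A$ of components with $|A|\ge(n-|S|)/2$, and observe that no arc goes from $B$ to $A$. The genuine gap is the random model. When each vertex independently picks its two in-neighbours, the event ``no arc from $B$ to $A$'' constrains only the choices made by vertices of $A$. Your own estimate then yields the exponent
\[
 h(\alpha)+(1-\alpha)\log2+(1-\alpha)\log\tfrac{1+\alpha}{2}
 = h(\alpha)+(1-\alpha)\log(1+\alpha),
\]
where $h(\alpha)=-\alpha\log\alpha-(1-\alpha)\log(1-\alpha)$. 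This is $0$ at $\alpha=0$ and strictly positive on $(0,1)$. The $2^{(1-\alpha)n}$ ways to halve $V\setminus S$ are exactly cancelled by the probability, and nothing pays for the choice of $S$. So the union bound fails for every $\alpha>0$, and your final ``check'' cannot succeed. Incidentally, $\Phi(0)=-\log2$, not $0$.

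Your suggested refinements do not repair this. Arcs out of $S$ are irrelevant to the event, without-replacement corrections are subexponential, and re-parametrising the count cannot change the probability. Letting $|A|$ be a fraction $\beta<1/2$ of $V\setminus S$ does make the exponent negative near $\alpha=0$. However, optimising over $\beta$ only reaches $\alpha\approx0.02$, well short of $\gamma\approx0.0819$, so it proves a weaker statement.

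The missing idea is to bound out-degrees as well, so that the cut event constrains both sides. The paper takes two independent uniform permutations $\phi_1,\phi_2$ of $[n]$ with arcs $v\to\phi_j(v)$; loops and parallel arcs are suppressed, which does not affect reachability. Then $\phi_j(B)$ is a uniform $b$-subset, which avoids $A$ with probability $\binom{n-a}{b}/\binom{n}{b}$. The expected number of bad triples is therefore
\[
 \binom{n}{a,b,s}\left(\frac{\binom{n-a}{b}}{\binom{n}{b}}\right)^{2}
 =\frac{(a+s)!^2\,(b+s)!^2}{n!\,a!\,b!\,(s!)^3}
 =\exp\bigl(n(\Phi(\alpha)+o(1))\bigr).
\]
The terms $2(1+x)\log(1+x)$ and $-3x\log x$ of $\Phi$ come from these factorials, not from a different parametrisation of the sets. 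Since $\Phi(0)=-\log2$, $\Phi(1/10)>0$, and $\Phi'(x)=\log\frac{(1-x)(1+x)^2}{8x^3}>0$ on $(0,1/10)$, one has $\Phi(\alpha)<0$ for all $\alpha<\gamma$. Summing over the at most $n+1$ values of $a$ and applying Markov's inequality finishes the proof.
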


\begin{proof}
Let $\phi_1,\phi_2$ be independent uniformly random permutations of
$[n]$, and let $D$ have the arcs $v\to\phi_j(v)$ for $j\in\{1,2\}$, with
loops and parallel arcs suppressed; this does not affect reachability.
Then $D$ is simple and has maximum in-degree at most two.

Suppose that some set of at most $\alpha n$ vertices leaves only strongly
connected components with at most $\lambda(n)$ vertices. Enlarging the set
preserves this property, so we may take $S$ with $|S|=s=\lfloor\alpha
n\rfloor$. List the strongly connected components of $D-S$ in topological
order, let $A$ be the union of the shortest prefix with at least
$(n-s)/2$ vertices, and let $B=[n]\setminus(A\cup S)$. Then no arc goes
from $B$ to $A$, and $|A|,|B|=(n-s)/2+O(\lambda(n))$.

Fix part sizes $a,b,s$. For a uniformly random permutation $\phi$, the set
$\phi(B)$ is a uniformly random $b$-subset of $[n]$, so $\phi$ creates no
arc from $B$ to $A$ with probability $\binom{n-a}{b}/\binom{n}{b}$. By
independence, the expected number of such partitions $(A,B,S)$ is
\[
  \binom{n}{a,b,s}
  \left(\frac{\binom{n-a}{b}}{\binom{n}{b}}\right)^2
  =
  \frac{(a+s)!^2(b+s)!^2}{n!\,a!\,b!\,(s!)^3}
  =\exp\bigl(n(\Phi(\alpha)+o(1))\bigr),
\]
by Stirling's formula, with $0\log0=0$; since $a/n,b/n\to(1-\alpha)/2$ and
$s/n\to\alpha$, the error term is uniform over the permitted sizes.

Since $\Phi(0)=-\log2$, $\Phi(1/10)>0$, and
$\Phi'(x)=\log\frac{(1-x)(1+x)^2}{8x^3}>0$ on $(0,1/10)$, the constant
$\gamma$ is well defined and $\Phi(\alpha)<0$. Summing over the at most
$n+1$ values of $a$, the expected number of such partitions is at most
$(n+1)\exp\bigl(n(\Phi(\alpha)+o(1))\bigr)=o(1)$. By Markov's inequality,
for all sufficiently large $n$ some outcome of $\phi_1,\phi_2$ yields a
digraph with no such partition, and hence with the required property.
\end{proof}

\subsection{Solving the small components}\label{subsec:components}

We return to the dependency digraph of Definition~\ref{def:dependency}.
The decomposition makes its strongly connected components small, but it
does not make the graph acyclic. We therefore cannot, in general, assign
the variables of a component one gate at a time. Instead, once the inputs
entering a component have been fixed, we solve all its variables
simultaneously. Suppressing loops and parallel arcs affects only the
graph: all original gate constraints, including those of gates with
self-inputs, are retained.

For a set $C$ of variables, a \emph{boundary input} is an input outside
$C$ to a gate producing a variable in $C$. The next lemma shows that, for
any fixed boundary inputs, a satisfying assignment exists on a fixed
finite grid. Brouwer's theorem is used only to prove that it exists; the
algorithm itself enumerates the grid.

\begin{lemma}\label{lem:small-component}
Let $0<\eps\le1/6$, let $g=\lceil\eps^{-2}\rceil$, and let
\[
  \Gamma=\{0,1/g,2/g,\ldots,1\}.
\]
Let $C$ be a set of $k$ non-constant variables, and fix rational values in
$[0,1]$ for all boundary inputs of the gates producing variables of $C$.
Under the standard definition of \GC, some assignment in
$\Gamma^C$ satisfies all these gates. One can be found by testing at most
$(g+1)^k$ assignments, each test taking time polynomial in the encoding
length of the gates and the boundary values.
\end{lemma}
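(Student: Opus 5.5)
The plan is to show existence by a perturbed Brouwer argument followed by rounding to $\Gamma$; the algorithmic claim is then immediate. Set $\eta=1/(2g)\le\eps^2/2$, the maximum distance from a point of $[0,1]$ to its nearest grid point. I would define a continuous map $F:[0,1]^C\to[0,1]^C$ with the boundary inputs held at their fixed rational values. Each coordinate of $F$ is a "selector" for the gate producing that variable, similar to those in the proof of Lemma~\ref{lem:membership}, but with thresholds shifted by $O(\eta)$ to absorb the rounding error.

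The selectors are as follows.
\begin{itemize}
\item \emph{Arithmetic gates.} Use the exact truncated target: $\zeta$, $cx$, $x$, $\min\{x+y,1\}$, or $\max\{x-y,0\}$.
\item \emph{Comparator.} Use $[(y-x+\eps')/(2\eps')]_{[0,1]}$ with $\eps'=\eps-2\eta>0$. This outputs exactly $1$ when $y-x\ge\eps'$ and exactly $0$ when $x-y\ge\eps'$.
\item \emph{Negation, $G_\vee$, $G_\wedge$.} Use the formulas of Lemma~\ref{lem:membership} with $\eps$ replaced by $\eps''=\eps+\eta<1/2$. These output exactly $1$ (respectively $0$) once the relevant quantity is at most $\eps''$ or at least $1-\eps''$.
\end{itemize}
$F$ is continuous on a convex compact set, so Brouwer gives a fixed point $x^\ast$. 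Gates with self-inputs are harmless here, since the map simply reads its own coordinate.

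Let $\hat x$ be $x^\ast$ rounded coordinatewise to the nearest point of $\Gamma$; boundary inputs are left unchanged. I would verify each gate at $\hat x$ in turn.
\begin{itemize}
\item \emph{Arithmetic gates.} The targets are $2$-Lipschitz in the inputs in the $\ell_\infty$ sense, and truncation is $1$-Lipschitz. So the error at $\hat x$ is at most $\eta$ from rounding the output plus $2\eta$ from rounding the inputs, i.e.\ $3\eta\le\tfrac32\eps^2\le\eps$.
\item \emph{Comparators.} If the premise $\hat x_1<\hat x_2-\eps$ holds after rounding, then $x^\ast_2-x^\ast_1>\eps-2\eta=\eps'$. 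The selector therefore output exactly $1$, which is a grid point and is unchanged by rounding, so $\hat z=1\ge1-\eps$. The low case is symmetric.
\item \emph{Boolean gates.} If a premise such as $\max\{\hat x,\hat y\}\le\eps$ holds, then $\max\{x^\ast,y^\ast\}\le\eps+\eta=\eps''$. The output was exactly $0$ and stays $0$. The high premises and $G_\neg$ are handled the same way.
\end{itemize}
Hence $\hat x\in\Gamma^C$ satisfies all gates producing variables of $C$.

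The main obstacle is exactly this premise-boundary issue. Rounding can newly trigger a Boolean or comparator premise that the continuous fixed point did not satisfy, which is the same non-monotonicity phenomenon noted in Section~\ref{sec:GC}. The fix is to widen the selectors' saturation regions by the rounding slack, so that any premise true after rounding already forced an exact $0/1$ output before rounding. This requires $\eps-2\eta>0$ and $\eps+\eta<1/2$, which both hold since $\eta\le\eps^2/2$ and $\eps\le1/6$.

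For the algorithm, enumerate all $(g+1)^k$ points of $\Gamma^C$. Each test evaluates at most $k$ gate constraints on rationals whose encoding is polynomial in the gate parameters and boundary values, so each test takes polynomial time.
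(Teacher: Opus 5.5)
Your proof is correct, but it handles the key step differently from the paper.

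\textbf{The paper's route.} The paper keeps the unmodified selectors of Lemma~\ref{lem:membership} and bounds the Lipschitz constant of the resulting map $\Psi$ by $1/\eps$; this constant comes from the comparator slope. From this it derives the residual bound $\|\hat x-\Psi(\hat x)\|_\infty\le(1+1/\eps)/(2g)<\eps$. It then checks every premise at the rounded point $\hat x$ itself. There, a triggered premise makes $\Psi(\hat x)$ equal exactly the target or the forced bit, so the residual bound gives $\eps$-satisfaction directly. Because $\Psi$ is evaluated at $\hat x$ rather than at $x^\ast$, the premise-boundary issue you identified never arises, and no threshold shifts are needed.

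\textbf{Your route.} You instead widen the saturation regions, using thresholds $\eps-2\eta$ and $\eps+\eta$. Then any premise true at $\hat x$ already saturated the selector at $x^\ast$. Since $0$ and $1$ are grid points, rounding leaves those outputs unchanged.

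\textbf{What each buys.} Your version needs re-tuned selectors, but it never uses the steep $1/\eps$ comparator slope. The only quantitative requirement comes from the arithmetic gates ($3\eta\le\eps$), together with $2\eta<\eps$ and $\eps+\eta<1/2$. So your argument would already work with grid resolution $\Theta(1/\eps)$ instead of $\lceil\eps^{-2}\rceil$, shrinking the search to $(O(1/\eps))^k$. It also makes forced outputs exactly $0$ or $1$. The paper's version is shorter and reuses the membership selectors verbatim, at the price of the finer grid.
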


\begin{proof}
Fix the boundary inputs. The continuous gate functions used in the proof
of Lemma~\ref{lem:membership} define a map
$\Psi:[0,1]^C\to[0,1]^C$. Since the arithmetic, comparator, and
Boolean gate functions have Lipschitz constants at most $2$, $1/\eps$,
and $1/(1-2\eps)$, respectively, $\Psi$ is $1/\eps$-Lipschitz in
the sup norm.

By Brouwer's theorem, $\Psi$ has a fixed point $\gvalonly^*$.
Round each coordinate of $\gvalonly^*$ to its nearest point in
$\Gamma$, obtaining $\widehat{\gvalonly}$. Since
$g=\lceil\eps^{-2}\rceil$,
\[
\begin{aligned}
\|\widehat{\gvalonly}
  -\Psi(\widehat{\gvalonly})\|_\infty
&\le \frac{1+1/\eps}{2g}\\
&\le \frac{\eps(1+\eps)}2
<\eps.
\end{aligned}
\]

The gate functions return the exact target for arithmetic gates and
the required Boolean value whenever a comparator or Boolean premise
is triggered. Hence the residual bound implies that every gate is
$\eps$-satisfied.

Thus, a satisfying assignment exists in $\Gamma^C$ and can be found
by enumerating at most $(g+1)^k$ assignments.
\end{proof}

\begin{lemma}\label{lem:struct-cert}
For every fixed $0<\eps\le1/6$ and $\tau>0$, an assignment
$\gvalonly^{\mathrm{decomp}}$ with
\[
  \mathrm{viol}(\gvalonly^{\mathrm{decomp}})
  \le\frac{N_2}{4}+\tau N
\]
can be computed in polynomial time.
\end{lemma}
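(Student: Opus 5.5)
We follow the construction already sketched in Section~\ref{subsec:gc-overview}. Build the dependency digraph $D$ of Definition~\ref{def:dependency}; it has $n=N-N_0\le N$ vertices, maximum in-degree at most two, and exactly $n_2=N_2$ vertices of in-degree two. Apply Theorem~\ref{thm:fragmentation} with $\lambda=\lceil 4/\tau\rceil$. This gives, in polynomial time, a set $S$ with $|S|\le N_2/4+4n/(\lambda+1)\le N_2/4+\tau N$, because $4/(\lambda+1)<\tau$. Moreover, every strongly connected component of $D-S$ has at most $\lambda$ vertices, and $\lambda$ is a constant depending only on $\tau$.

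Next we define $\gvalonly^{\mathrm{decomp}}$. Each constant variable gets its prescribed $\zeta$, which satisfies every constant gate exactly. Each variable in $S$ gets the value $0$. We then compute the strongly connected components of $D-S$ and process them in a topological order of the condensation. For a component $C$, every boundary input of a gate producing a variable of $C$ is one of three things: a constant variable, a variable in $S$, or a variable in an earlier component. The last case holds because any arc into $C$ from a non-deleted, non-constant variable outside $C$ comes from a component earlier in the topological order. A gate may also take as input its own output, or some other variable of $C$. Such inputs are internal to $C$, since they are not boundary inputs. Self-inputs are exactly why we retain the original gate constraints even though loops are suppressed in $D$.

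With all boundary inputs fixed, Lemma~\ref{lem:small-component} (valid since $\eps\le1/6$) gives an assignment in $\Gamma^C$ satisfying every gate that produces a variable of $C$. We find it by testing at most $(g+1)^{\lambda}$ candidates, which is a constant number. One point needs care: the boundary values must be rational with polynomially bounded encoding. All assigned values lie in $\Gamma\cup\{0\}\cup\{\zeta\}$, so they are drawn from a fixed finite set plus input rationals, and there is no blow-up. Summing over components, the total time is $O(N)\cdot(g+1)^{\lambda}\cdot\mathrm{poly}$, which is polynomial.

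Finally we count violations. Every gate produces exactly one variable. The gates producing constant variables are satisfied. The gates producing variables in components are satisfied when their component is processed, and later assignments never change their inputs or outputs. This holds because the inputs are boundary values or variables of the same component, and those are fixed at that moment. Hence only gates producing variables in $S$ can fail, so $\mathrm{viol}(\gvalonly^{\mathrm{decomp}})\le|S|\le N_2/4+\tau N$.

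The main subtle step is confirming that the topological processing really fixes every boundary input before its component is solved. This relies on the arcs of $D$ capturing all non-constant, non-self dependencies. We will check it directly from Definition~\ref{def:dependency}.
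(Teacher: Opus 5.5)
Your proposal is correct and follows the paper's proof essentially step for step. You apply Theorem~\ref{thm:fragmentation} to the dependency digraph with $\lambda=\lceil4/\tau\rceil$, set the deleted variables to zero, and solve the small components in topological order via Lemma~\ref{lem:small-component}, so only the gates producing deleted variables can be violated. Your checks of boundary inputs and encoding length only spell out what the paper states briefly.
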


\begin{proof}
We apply Theorem~\ref{thm:fragmentation} to the dependency digraph $D$
with $\lambda=\lceil4/\tau\rceil$. Since $D$ has $N-N_0$ vertices, of
which $N_2$ have in-degree two, this gives a set $S$ with
\[
  |S|\le\frac{N_2}{4}+\frac{4(N-N_0)}{\lambda+1}
     \le\frac{N_2}{4}+\tau N
\]
such that every strongly connected component of $D-S$ has at most
$\lambda$ vertices.

We give the constant variables their prescribed values and the
variables in $S$ the value zero. We then process the components of
$D-S$ in topological order. Every boundary input is a constant
variable, a variable in $S$, or a variable in an earlier component,
so its value has already been fixed.

By Lemma~\ref{lem:small-component}, each component can be assigned
by testing at most $(g+1)^\lambda$ assignments. Since previously
assigned values are never changed, only gates producing variables
in $S$ may be violated. For fixed $\eps$ and $\tau$, the algorithm
runs in polynomial time, and all assigned values have polynomial
encoding length.
\end{proof}

The term $\tau N$ is measured against all gates, since the deletions
made by Lemma~\ref{lem:fragment} may also include outputs of gates
with at most one dependency.

\subsection{Complementary assignments and the final bound}
\label{subsec:balancing}

Section~\ref{subsec:gc-overview} obtains the bound $1/5$ by
comparing $\gvalonly^{\mathrm{decomp}}$ with
$\gvalonly^{\mathrm{const}}$. We now use a third assignment,
$\gvalonly^{\mathrm{zero}}$, which sets every variable to zero.
Comparing these three assignments gives a bound that also
depends on the number of negation gates.

Let $N_\neg$ denote the number of negation gates $G_\neg$ in
the circuit. Every negation gate has one input and hence at
most one dependency, so $N_\neg\le N_1$.

\begin{lemma}\label{lem:dep-cert}
Let $\gvalonly^{\mathrm{const}}$ give every constant variable its
prescribed value and every other variable the value zero, and let
$\gvalonly^{\mathrm{zero}}$ give every variable the value zero. Then
\[
  \mathrm{viol}(\gvalonly^{\mathrm{const}})\le N_1
  \qquad\text{and}\qquad
  \mathrm{viol}(\gvalonly^{\mathrm{zero}})\le N_0+N_\neg.
\]
\end{lemma}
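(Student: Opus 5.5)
We check each gate type against the two assignments by going through the list of gates in Definition~\ref{def:GC}. We classify gates by their number of dependencies, and use the fact that under both assignments every non-constant variable equals $0$. Throughout, $0<\eps<1/6$, so the value $0$ satisfies both $0\le\eps$ and $0<1-\eps$.

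\emph{The assignment $\gvalonly^{\mathrm{const}}$.} Constant gates are satisfied exactly. It then suffices to show that every gate counted in $N_2$ is satisfied. Such a gate has two distinct non-constant inputs, both different from its output, so both inputs and the output take value $0$.
\begin{itemize}
\item A gate with two inputs is one of $G_+,G_-,G_<,G_\vee,G_\wedge$. For $G_+$ and $G_-$, the target is $0$, which the output matches exactly.
\item For $G_<$, the inputs are equal, so neither premise holds and the output is unconstrained.
\item For $G_\vee$ and $G_\wedge$, the low premise $\max\{x,y\}\le\eps$ (respectively $\min\{x,y\}\le\eps$) holds. It requires $z\le\eps$, which $z=0$ satisfies. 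The high premise fails.
\end{itemize}
Hence only gates in $N_1$ can be violated, giving $\mathrm{viol}(\gvalonly^{\mathrm{const}})\le N_1$.

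\emph{The assignment $\gvalonly^{\mathrm{zero}}$.} Every variable equals $0$, and every input of every gate also equals $0$, since constant variables are now set to $0$ too. We check each non-constant, non-negation gate with all inputs and output equal to $0$:
\begin{itemize}
\item $G_=$ and $G_{\times c}$ have target $0$.
\item $G_+$ and $G_-$ have target $0$.
\item $G_<$ has equal inputs, so it is unforced.
\item $G_\vee$ and $G_\wedge$ trigger only the low premise, which $z=0$ satisfies.
\end{itemize}
The only gates that can fail are therefore the constant gates $G_\zeta$ with $\zeta>\eps$, at most $N_0$ of them, and the negation gates, whose input $0\le\eps$ forces $z\ge1-\eps>0$, at most $N_\neg$ of them. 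This gives $\mathrm{viol}(\gvalonly^{\mathrm{zero}})\le N_0+N_\neg$.

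\emph{Main obstacle: self-inputs and repeated inputs.} The delicate point is which gates belong to $N_2$ in the first bound. A gate whose inputs coincide, or include its own output, or include a constant variable, has fewer dependencies and so is counted in $N_1$ or handled separately. Gates in $N_2$ have two distinct non-constant inputs other than $v$, so all three variables are $0$ as used above, and no one-input gate type such as $G_\neg$, $G_=$ or $G_{\times c}$ can occur among them.

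The second bound needs no such care, since every variable, including self-inputs, is $0$.
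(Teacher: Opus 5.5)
Your proof is correct and follows essentially the same route as the paper. Under $\gvalonly^{\mathrm{const}}$, every two-dependency gate is a two-input gate whose inputs and output are all zero, so it is satisfied. Under $\gvalonly^{\mathrm{zero}}$, only constant and negation gates can fail; your gate-by-gate check and your remark on self-inputs and repeated inputs simply spell out what the paper leaves implicit.
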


\begin{proof}
The assignment $\gvalonly^{\mathrm{const}}$ satisfies every constant
gate. A gate whose output has two dependencies uses both input positions
for distinct non-constant variables, so it is a $G_+$, $G_-$, $G_<$,
$G_\vee$ or $G_\wedge$ gate, and both its inputs and its output are zero.
Addition and subtraction then have target zero, the comparator is
unforced, and zero satisfies the forced low conclusion of $G_\vee$ and
$G_\wedge$.
Thus, only the $N_1$ other non-constant gates may fail.

Under $\gvalonly^{\mathrm{zero}}$, every non-constant arithmetic gate has
target zero, every comparator is unforced, and zero satisfies every forced
conclusion of $G_\vee$ and $G_\wedge$. Thus, only constant and negation
gates may fail.
\end{proof}

\begin{lemma}
\label{lem:robust-rho}
Let $\eps>0$ and $\rho\in[0,1]$ be constants, and suppose that at
most a $\rho$-fraction of the gates are negation gates. $(\eps,\delta)$-\GC can be solved in
polynomial time for every constant
\[
  \delta>\min\left\{\frac15,\frac{1+\rho}{6}\right\}.
\]
\end{lemma}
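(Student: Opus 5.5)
Our plan is to run all three assignments of Lemmas~\ref{lem:struct-cert} and~\ref{lem:dep-cert} and return the one violating the fewest gates, exactly as in Algorithm~\ref{alg:overview}. For $\eps\ge1/6$, Corollary~\ref{cor:allbig} already satisfies every gate, so we assume $0<\eps<1/6$. Fix $\delta$ above the stated minimum. Let $\eta=\delta-\min\{1/5,(1+\rho)/6\}>0$, and choose a constant $\tau>0$ small enough; we expect $\tau=\eta$ to suffice. With $\gvalonly^\star$ the best of $\gvalonly^{\mathrm{decomp}}$, $\gvalonly^{\mathrm{const}}$, $\gvalonly^{\mathrm{zero}}$, we have
\[
\mathrm{viol}(\gvalonly^\star)\le\min\Bigl\{\tfrac{N_2}{4}+\tau N,\; N_1,\; N_0+N_\neg\Bigr\},
\qquad N=N_0+N_1+N_2,\quad N_\neg\le\rho N .
\]
All three assignments are computable in polynomial time for fixed $\eps,\tau$.

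The core is to bound this minimum by convex combinations of the three upper bounds. For the $1/5$ bound we follow Section~\ref{subsec:gc-overview}: take weights $4/5$ and $1/5$ on the first two bounds. This gives $\mathrm{viol}\le\frac{N_2+N_1}{5}+\frac{4\tau}{5}N\le(\frac15+\tau)N$. For the $(1+\rho)/6$ bound we use weights $4/6,1/6,1/6$ on the three bounds. The result is
\[
\tfrac46\bigl(\tfrac{N_2}{4}+\tau N\bigr)+\tfrac16N_1+\tfrac16(N_0+N_\neg)
=\tfrac{N_0+N_1+N_2}{6}+\tfrac{N_\neg}{6}+\tfrac{2\tau}{3}N
\le\bigl(\tfrac{1+\rho}{6}+\tau\bigr)N .
\]
Since a minimum is at most any convex combination, $\mathrm{viol}(\gvalonly^\star)\le(\min\{1/5,(1+\rho)/6\}+\tau)N<\delta N$. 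So at least a $(1-\delta)$-fraction of the gates is $\eps$-satisfied.

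The main thing to check is that the hypotheses of the earlier lemmas really apply to the instances in question. Lemma~\ref{lem:small-component} is stated for $\eps\le1/6$, which holds here. The partition $N=N_0+N_1+N_2$ must count every gate exactly once, including gates with self-inputs or repeated inputs; these gates fall into $N_1$, consistent with Definition~\ref{def:dependency}. Finally, $N_\neg\le\rho N$ is exactly the hypothesis on the negation fraction. The only genuine design choice is finding the weights $(4/6,1/6,1/6)$: they must make the coefficients of $N_0$, $N_1$ and $N_2$ equal. Once this is done, the rest is bookkeeping of $\tau$.
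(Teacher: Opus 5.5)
Your proposal is correct and follows the paper's proof essentially verbatim: you use the same reduction to $0<\eps<1/6$ via Corollary~\ref{cor:allbig}, the same three assignments, and the same $4{:}1$ and $4{:}1{:}1$ combinations of the three violation bounds. One cosmetic point: with $\tau=\eta$, your displayed chain yields only $\mathrm{viol}(\gvalonly^\star)\le\delta N$ rather than $<\delta N$ (the paper instead takes $\tau=\Delta/2$), but this is harmless, since only $\mathrm{viol}\le\delta N$ is required and your unrounded coefficients $4\tau/5$ and $2\tau/3$ already give the strict inequality.
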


\begin{proof}
For $\eps\ge1/6$, use Corollary~\ref{cor:allbig}.
It therefore suffices to consider $0<\eps<1/6$.
The empty circuit is trivial.

\smallskip
\noindent\emph{The violation bound.}
Set
\[
  \theta_\rho=\min\{1/5,(1+\rho)/6\},\qquad
  \Delta=\delta-\theta_\rho>0,\qquad
  \tau=\Delta/2.
\]
Let $\gvalonly^\star$ be the assignment with the fewest violations among
the three assignments of Lemmas~\ref{lem:struct-cert}
and~\ref{lem:dep-cert}, and write $v=\mathrm{viol}(\gvalonly^\star)$.
Then
\[
  v\le\frac{N_2}{4}+\tau N,\qquad
  v\le N_1,\qquad
  v\le N_0+N_\neg.
\]
Four times the first bound plus the second gives
\[
  5v\le N_2+N_1+4\tau N\le(1+4\tau)N.
\]
Adding the third bound as well, and using $N_\neg\le\rho N$, gives
\[
  6v\le N_2+N_1+N_0+N_\neg+4\tau N\le(1+\rho+4\tau)N.
\]
Consequently,
\[
  \frac vN
  \le\min\left\{
    \frac15+\frac{4\tau}{5},\
    \frac{1+\rho}{6}+\frac{2\tau}{3}
  \right\}
  \le\theta_\rho+\tau
  =\theta_\rho+\frac\Delta2
  <\delta.
\]

\smallskip
\noindent\emph{The running time.}
Let $\ell$ be the binary encoding length of the circuit, including its
rational parameters, and recall that
\[
  g=\lceil\eps^{-2}\rceil,
  \qquad
  \lambda=\lceil4/\tau\rceil=\lceil8/\Delta\rceil.
\]
There are at most $N$ components, each requiring at most $(g+1)^\lambda$
candidate assignments, and each candidate can be checked in time
polynomial in $\ell$. The graph operations take polynomial time with an
exponent independent of $\lambda$. The total running time is therefore
\[
  (g+1)^{O(1/\Delta)}\operatorname{poly}(\ell).
\]
\end{proof}

\begin{proof}[Proof of Theorem~\ref{thm:robust}]
Every circuit satisfies the hypothesis of Lemma~\ref{lem:robust-rho} with $\rho=1$, in which case $\min\{1/5,(1+\rho)/6\}=1/5$.
\end{proof}

These three estimates alone cannot give a general threshold
below $1/5$. Indeed, for $N_0=0$, $N_1=N_\neg=N/5$, and
$N_2=4N/5$, their right-hand sides are $N/5+\tau N$,
$N/5$, and $N/5$, respectively. Their minimum is therefore
$N/5$ for every $\tau>0$. This calculation concerns what
the estimates certify; the assignments themselves may
violate fewer gates.



\subsection{Fan-out-one \texorpdfstring{\PC}{Pure-Circuit}}
\label{subsec:pure-circuit}

The argument for fan-out-one \PC follows the same pattern as for \GC:
we decompose the gate digraph and solve the remaining components one at a
time. Two differences make it simpler. The gate counts are determined by
the fan-out-one condition, so no balancing is needed, and a component can
be solved over the three values $\{0,1,\bbot\}$ rather than a grid.

Consider a fan-out-one instance of Definition~\ref{def:delta-pc}, and let
$m_P$, $m_N$, and $m_R$ be the numbers of \PURE gates, \NOT gates, and
two-input gates (\OR, \NOR, \AND, \NAND), respectively, with
$m=m_P+m_N+m_R$. Every node is produced once and occupies exactly one
input position. Counting outputs and input positions therefore gives
\begin{equation}\label{eq:pc-balance}
  2m_P+m_N+m_R=m_P+m_N+2m_R,
  \qquad\text{and hence}\qquad
  m_R=m_P\le\frac m2.
\end{equation}
In the gate digraph, arcs point from producers to consumers, so only
two-input gates can have in-degree two, and $n_2\le m_R\le m/2$.
Theorem~\ref{thm:fragmentation} then gives a set $S$ of gates with
\begin{equation}\label{eq:pc-fragment}
  |S|\le\frac{n_2}{4}+\frac{4m}{\lambda+1}
     \le\left(\frac18+\frac4{\lambda+1}\right)m
\end{equation}
such that every remaining strongly connected component has at most
$\lambda$ gates.

It remains to show that a component can be solved once the values
entering it are fixed. The following lemma does not use the fan-out-one
condition, which is needed only for the gate-count identity above.

\begin{lemma}\label{lem:pc-boundary-totality}
Let $C$ be a set of gates of an instance of
Definition~\ref{def:delta-pc}, and fix arbitrary values in
$\{0,1,\bbot\}$ for all inputs of gates in $C$ that are produced outside
$C$. Then the nodes produced by $C$ can be assigned values in
$\{0,1,\bbot\}$ so that every gate of $C$ is satisfied.
\end{lemma}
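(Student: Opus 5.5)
The natural route is to show that, once the external inputs are fixed, every gate of $C$ has a ``safe'' output in $\{0,1,\bbot\}$, and then set each node produced by $C$ directly. The difficulty is that inputs produced inside $C$ may lie on cycles, so we cannot evaluate gates in topological order. Instead, we look for an assignment whose satisfaction does not depend on the inputs' values at all, or depends on them only in a controlled way.

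A first attempt is the all-$\bbot$ assignment on nodes produced by $C$. Consider any gate of $C$ whose inputs are all $\bbot$, or whose inputs mix $\bbot$ with values that do not force the output. By Figure~\ref{fig:pc}, such \NOT, \OR, \NOR, \AND, \NAND gates accept output $\bbot$. The rows with a forced output are exactly those where an input is pure and the rule fires. Examples are a \NOT with pure input, an \OR with an input $1$, or an \AND with an input $0$. \PURE gates are the other problem: they always require at least one pure output, and both pure outputs equal the input when that input is pure. So all-$\bbot$ fails in general.

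The plan is therefore to reuse the mechanism of Theorem~\ref{thm:pc}'s existence argument. \PC solutions exist for every instance by \PPAD membership, through the Brouwer/\GC embedding. We apply this to a modified instance. For each node produced outside $C$ and fed into $C$ with fixed value $b$, replace its producer by a constant gadget realizing $b$:
\begin{itemize}
\item a pure bit $b$ is produced, for instance, by a \PURE gate fed by an auxiliary pure node;
\item a value $\bbot$ can simply be taken as an unconstrained source.
\end{itemize}
Concretely, we define the modified instance on the gates of $C$ alone, treating external inputs as fixed constants. We then argue existence of a solution exactly as in the proof of Lemma~\ref{lem:small-component}. Embed each gate as a continuous function on $[0,1]$, with external inputs fixed at $0$, $1$, or $1/2$ for $\bbot$. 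Apply Brouwer to get a fixed point, and decode with thresholds as in the \PC-to-\GC correspondence of~\cite{PureCircuit}. That correspondence shows that a fixed point of the embedded \PURE gadget yields one pure output and copies pure inputs, and that the Boolean gates behave correctly on pure inputs. Fixing inputs to constants does not affect this local argument. Hence the decoded values satisfy every gate of $C$.

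The main obstacle is supplying a continuous embedding of \PURE whose fixed points decode correctly in the presence of self-loops within $C$. We would use the standard one from~\cite{PureCircuit}: two outputs given by steep ramps at thresholds $1/3$ and $2/3$, decoded with the matching cutoffs.

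A simpler alternative, which we would try first, is a purely combinatorial fixed-point argument over the three-valued information order with $\bbot$ below $0$ and $1$:
\begin{itemize}
\item start from all $\bbot$;
\item repeatedly apply the monotone Kleene-style rules for the Boolean gates;
\item for each \PURE gate, output (copy,copy) when its input is pure, and $(0,\bbot)$ otherwise.
\end{itemize}
This converges, since the \PURE rule's change from $(0,\bbot)$ to $(b,b)$ is not monotone but happens at most once per gate. We would check that the limit satisfies all gates. This works the same way as the commit mechanism used for comparators in Section~\ref{sec:algorithm}.
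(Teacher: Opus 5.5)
Your Brouwer route is the paper's own proof: fix the external inputs at $0,1,\tfrac12$, turn each gate of $C$ into a continuous map on $[0,1]^k$, take a fixed point, and decode. But you stop at the one step with real content. You call the continuous embedding of \PURE ``the main obstacle'' and defer it to a loosely described threshold construction from~\cite{PureCircuit}.

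The paper closes this step directly by decoding \emph{exactly}: $0\mapsto0$, $1\mapsto1$, and every value in $(0,1)\mapsto\bbot$. The Boolean gates use $\max\{x,y\}$, $1-\max\{x,y\}$, $\min\{x,y\}$, $\min\{1,2-x-y\}$ and $1-x$. \PURE uses $f_1(x)=\max\{0,2x-1\}$ and $f_2(x)=\min\{1,2x\}$. Each Boolean function returns exactly the forced bit whenever its inputs are exactly pure and force one. \PURE copies $0$ and $1$. For $x\in(0,1)$, either $x\le\tfrac12$ and $f_1(x)=0$, or $x\ge\tfrac12$ and $f_2(x)=1$.

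With exact decoding no steep ramps are needed. Cycles (your ``self-loops'') are irrelevant, since Brouwer applies to any continuous self-map of the cube and the three nodes of a gate are distinct. With threshold decoding at $1/3,2/3$, the functions must be designed and checked with care. For instance, $f_2(x)=2x$ on $[0,\tfrac13]$, so an input decoded as $0$ can produce an output near $\tfrac23$ that is not decoded as $0$. Simply citing ``the correspondence'' therefore does not finish the proof.

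The combinatorial alternative you would try first is genuinely wrong. Take $C=\{(v,w)=\PURE(u),\ u=\NOT(v)\}$ and start from all-$\bbot$. Your \PURE rule gives $(v,w)=(0,\bbot)$, then $u=\NOT(0)=1$, then $(v,w)=(1,1)$, then $u=0$, then $(v,w)=(0,0)$, then $u=1$ again. The iteration cycles forever, even though $u=v=\bbot$, $w=0$ is a solution.

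Your claim of ``at most one change per \PURE gate'' fails because the change $0\to1$ on $v$ is not a refinement. It can travel around a cycle and flip the \PURE gate's own input. There is no analogue of Lemma~\ref{lem:noflip} here, and that lemma is what made the comparator commit rule safe in Section~\ref{sec:algorithm}.

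More generally, if $C$ is the set of all gates, the lemma is exactly the totality of \PC. So any propagation scheme that provably terminated after polynomially many updates would solve a \PPAD-complete problem (Theorem~\ref{thm:pc}). That is a strong sign that a fixed-point argument, like the paper's, is the right tool.
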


\begin{proof}
We represent the fixed values $0,1,\bbot$ by $0,1,1/2$, respectively, and
use the continuous functions
\[
  \begin{aligned}
  \PURE:\quad &f_1(x)=\max\{0,2x-1\}, & f_2(x)&=\min\{1,2x\},\\
  \NOT:\quad &1-x, & \NAND:\quad &\min\{1,2-x-y\},\\
  \OR:\quad &\max\{x,y\}, & \NOR:\quad &1-\max\{x,y\},\\
  \AND:\quad &\min\{x,y\}.
  \end{aligned}
\]
After substituting the fixed boundary values, these functions
define a continuous map from $[0,1]^k$ to itself, where $k$ is
the number of nodes produced by $C$. Each coordinate is defined
exactly once, since every node has a unique producer and the
two outputs of each \PURE gate are distinct. By
Brouwer's theorem it has a fixed point. We decode $0$ and $1$ as the
corresponding pure bits, and every value in $(0,1)$ as $\bbot$.

Each Boolean function returns exactly $0$ or $1$ whenever its pure
inputs force that output in Figure~\ref{fig:pc}: an input $1$ forces
$1$ for \OR and $0$ for \NOR, an input $0$ forces $0$ for \AND and $1$
for \NAND, and in the remaining forced cases all inputs are pure and the
function returns the Boolean value. When the inputs force nothing, any
output is allowed. For \PURE, both outputs copy the inputs $0$ and $1$.
For an input $x\in(0,1)$, we have $f_1(x)=0$ if $x\le1/2$ and
$f_2(x)=1$ if $x\ge1/2$, so at least one output is pure. Hence the
decoded assignment satisfies every gate of $C$ and agrees with the fixed
boundary values.
\end{proof}

\begin{proof}[Proof of Theorem~\ref{thm:pc-robust}]
The empty instance is trivial. Fix $\delta>1/8$ and set
\[
  \lambda=\left\lceil\frac{4}{\delta-1/8}\right\rceil.
\]
Then $4/(\lambda+1)<\delta-1/8$. By
\eqref{eq:pc-fragment}, we can compute a set $S$ of gates
with $|S|<\delta m$ such that every strongly connected
component of the remaining gate digraph has at most
$\lambda$ gates.

Assign arbitrary values to the outputs of the gates in
$S$, and process the remaining components in topological
order. When a component is reached, all its boundary inputs
have been fixed. A component of at most $\lambda$ gates
produces at most $2\lambda$ nodes, so
Lemma~\ref{lem:pc-boundary-totality} guarantees that
exhaustive search over at most $3^{2\lambda}$ assignments
finds one satisfying all its gates.

The assignments to earlier components are never changed.
Thus, only the gates in $S$ may be violated. There are at
most $m$ components, and the running time is
\[
  3^{2\lambda}\operatorname{poly}(m)
  =2^{O(1/(\delta-1/8))}\operatorname{poly}(m),
\]
which is polynomial for every fixed $\delta>1/8$.
\end{proof}

\section{Applications of our positive results}\label{app:positive-applications}
We give the definitions and proofs for the applications in Section~\ref{sec:positive-applications},
following the same order. The error mapping uses the numerical
$1/6$ guarantee of Theorem~\ref{thm:main}; the well-supported
public-goods result uses any fixed tolerance below $1/2$. 
The remaining applications run
Algorithm~\ref{alg:propagation} on its three-state
Boolean subsystem, sometimes with application-specific
decoding, or use Theorem~\ref{thm:fragmentation}.

\subparagraph{Approximate Nash equilibria.}
For a finite game, we write $A_i$ for player $i$'s action set, $s_i$ for
her (mixed) strategy, i.e., a probability distribution of $A_i$, and $u_i(a,s_{-i})$ for the expected payoff from
pure action $a$. Mixed strategies of different players are independent.
We use the following standard notions~\cite{PureCircuit}.

\begin{definition}[Approximate Nash equilibrium]\label{def:app-ne}
A strategy profile $s$ is an \emph{$\eps$-Nash equilibrium}
($\varepsilon$-NE) if, for every player $i$,
\[
 \operatorname{Reg}_i(s):=
 \max_{a\in A_i}u_i(a,s_{-i})-u_i(s)\le\eps.
\]
It is an \emph{$\eps$-well-supported Nash equilibrium}
($\eps$-WSNE) if, for every player $i$ and every
$a\in\operatorname{supp}(s_i) := \{j \in A_i \; : \; s_i(j) > 0\}$,
\[
 u_i(a,s_{-i})\ge\max_{b\in A_i}u_i(b,s_{-i})-\eps.
\]
The latter inequalities imply the former by averaging over $s_i$.
\end{definition}

\subparagraph{Boolean state propagation.}
For circuits containing only $G_0,G_1,G_=,G_\neg,G_\vee,G_\wedge$,
choose $R(0)=\lst$ and $R(1)=H$ in the constant rule of
Section~\ref{sec:localrules}. Starting from $M$, the states then remain
in $\{\lst,M,H\}$, where $\lst$ encodes zero, $H$ encodes one,
and $M$ is unresolved. We call $\lst$ and $H$ \emph{terminal}, as they
are maximal in $\preceq$. Let us denote by $N_i$ the in-degree of node $i$. Node $i$ (or the player that it represents in the rest of applications) with $N_i=\varnothing$ is called a \emph{source}. Negation exchanges $\lst$ and $H$ and
preserves $M$. Disjunction returns $H$ if some input is $H$,
$\lst$ if both inputs are $\lst$, and $M$ otherwise;
conjunction has the dual rule. Each node refines at most once.
These are restrictions of the update rules of
Algorithm~\ref{alg:propagation}. Some applications below decode the
final states into values different from $\nu$; their correctness is
proved from the state rules, rather than from the numerical
$1/6$-\GC guarantee.

\subsection{Divisible public goods games, preference games with strict  preferences, and financial networks}
\label{app:positive-goods-preferences-clearing}
This subsection treats three problems that share the same
best-response structure: divisible public goods with summation,
preference games with strict preferences, and clearing in restricted
CDS networks. After a change of variables, each asks for
$x\in[0,1]^V$ that is close to
$\max\{1-\sum_{j\in N_i}x_j,0\}$ coordinatewise
(see~\eqref{eq:strong-unit-nor} and~\eqref{eq:app-clearing-complement}).
For the two games, an elementary baseline already achieves error $1/2$:
give every source provision one and every other provision $1/2$, and
use Lemma~\ref{lem:strong-app-rows} to complete the preference rows. We are not
aware of a better general bound in~\cite{AppPP21,AppKPRST13}. For
financial networks, the corresponding trivial bound is the $1/2$
noted in~\cite{AppDHK25}.

We give two improvements. First, an error mapping from \eps-\GC
(Theorem~\ref{thm:strong-app-arithmetic}) gives error
$7\eps/(3+\eps)$ at in-degree two, which is $7/19$ at $\eps=1/6$.
This application genuinely needs the numerical threshold of
Theorem~\ref{thm:main}: the error is below $1/2$ only when
$\eps<3/13$, so a solver with an unspecified error below $1/2$ would
not suffice. Second, direct state propagation
(Theorem~\ref{thm:strong-app-direct-nor}) gives the stronger bounds
$1/5$ at in-degree two and $1/3$ without a degree bound. For
financial networks the two results differ in strength. The $1/5$
guarantee satisfies the $\eps$-approximate clearing notion of~\cite{AppDHK25},
including the solvency condition~\eqref{eq:strong-clearing-solvent},
whereas the $7/19$ guarantee gives only residual clearing. Both apply
to the restricted networks of Definition~\ref{def:app-unit-cds}, not
to arbitrary networks with CDSs.

\subsubsection{Definitions}

\subparagraph{Divisible public goods.}
We use the summation model and the best-response-distance notion of
approximation in Papadimitriou and Peng~\cite[Section~5.1]{AppPP21}.
The normalization below specifies the subclass to which our numerical
bounds apply.

\begin{definition}[Divisible public goods with summation]
\label{def:app-divisible}
An instance consists of a loop-free directed graph $G=(V,E)$.
For each player $i\in V$, let
$N_i=\{j:(j,i)\in E\}$ and $N(i)=N_i\cup\{i\}$.
Player $i$ chooses a provision $x_i\in[0,\infty)$ and receives utility
\[
 U_i(x)=b\!\left(\sum_{j\in N(i)}x_j\right)-c x_i,
\]
where $c>0$ is the common price per unit and $b$ is an increasing,
strictly concave valuation function. We normalize the standalone
optimum to one: $b(t)-ct$ has its unique maximum over
$t\in[0,\infty)$ at $t=1$. A profile $x\in[0,1]^V$ is an
\emph{$\eps$-approximate pure Nash equilibrium} if every provision
is within $\eps$ of its best response, that is,
\begin{equation}\label{eq:strong-unit-nor}
 \begin{gathered}
 T_i(x)=\max\left\{1-\sum_{j\in N_i}x_j,0\right\},\qquad
 \left|x_i-T_i(x)\right|\le\eps\quad(i\in V).
 \end{gathered}
\end{equation}
\end{definition}

Indeed, with $S=\sum_{j\in N_i}x_j$, maximizing player $i$'s
utility is equivalent to maximizing $b(t)-ct$ over $t\ge S$.
Strict concavity and the normalization give the stated best response.
Thus~\eqref{eq:strong-unit-nor} measures distance to a best response,
as in~\cite{AppPP21}, rather than utility regret. In particular, it is
not an $\eps$-NE in the sense of Definition~\ref{def:app-ne}. The algorithm needs
only the graph and this normalization. A different positive common
standalone optimum can be rescaled to one, with provision errors
rescaled accordingly.

\subparagraph{Preference games.}
We use the weight-distribution terminology and approximation convention
of Kintali et al.~\cite[Sections~3.1 and~3.5]{AppKPRST13}, restricted
to strict preference lists.

\begin{definition}[Preference game with strict preferences]
\label{def:app-preference}
Each player $i\in V$ has a strict total order $\succ_i$ on $V$,
including herself, and assigns weights $w_i(j)\ge0$ to the players.
Feasibility requires
\[
 \sum_{j\in V}w_i(j)=1,
 \qquad w_i(j)\le w_j(j)\quad(i,j\in V).
\]
An exact equilibrium is a feasible assignment in which every row is
lexicographically maximal with respect to the other rows: for every
feasible alternative $\widetilde w_i$ and every $j$,
\[
 \sum_{\ell:\,\ell\succeq_i j}w_i(\ell)
 \ge\sum_{\ell:\,\ell\succeq_i j}\widetilde w_i(\ell).
\]
For $\eps\ge0$, an \emph{$\eps$-equilibrium} is a weight
assignment satisfying, for all $i,j\in V$,
\begin{equation}\label{eq:strong-preference}
 \begin{gathered}
  \sum_{j\in V}w_i(j)=1,\qquad w_i(j)\ge0,
  \qquad w_i(j)\le w_j(j)+\eps, \quad \text{and} \\
  \sum_{\ell:\,\ell\succeq_i j}w_i(\ell)\ge1-\eps
  \quad\text{or}\quad |w_i(j)-w_j(j)|\le\eps.
 \end{gathered}
\end{equation}
Here $\succeq_i$ includes equality. Set
$N_i=\{j:j\succ_i i\}$ and $x_i=w_i(i)$.
The in-degree restriction counts players strictly preferred to self.
\end{definition}

In particular, the approximation convention permits capacity excess
up to $\eps$; exact capacity feasibility is required only when
$\eps=0$.

\begin{lemma}[Completion of preference rows]\label{lem:strong-app-rows}
Suppose $x\in[0,1]^n$ satisfies
\[
 \left|x_i-\max\left\{1-\sum_{j\in N_i}x_j,0\right\}\right|
 \le\eps\qquad\text{for every $i$},
\]
and $x_i=1$ whenever $N_i=\varnothing$. In polynomial time we can
construct a weight assignment with the prescribed diagonal entries
$w_i(i)=x_i$ satisfying~\eqref{eq:strong-preference}.
\end{lemma}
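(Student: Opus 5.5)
The plan is to build each row $w_i$ greedily along $\succ_i$: start from the most preferred player and pour weight into each successive player up to its capacity $x_j$, stopping once total weight $1$ is reached. Player $i$ itself has capacity $x_i$ and the row is forced to take exactly $w_i(i)=x_i$.

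\textbf{The row $i$ construction.} Let $S_i=\sum_{j\in N_i}x_j$.

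First, fill the players of $N_i$ in decreasing $\succ_i$-order. Set $w_i(j)=\min\{x_j,\,r\}$, where $r$ is the remaining mass, but cap the total mass placed on $N_i$ at $1-x_i$. Then put $w_i(i)=x_i$.

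Next, distribute any leftover mass $1-x_i-\sum_{j\in N_i}w_i(j)$ to the players ranked below $i$, again greedily with caps $x_j$.

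If that is still insufficient, dump the remainder on the last player visited, or on any player, accepting capacity excess. The tolerance $w_i(j)\le w_j(j)+\eps$ allows this, provided the remainder is at most $\eps$.

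The sources are handled separately: if $N_i=\varnothing$ then $x_i=1$, so the row is just $e_i$, which is exact.

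\textbf{Key inequalities.} Write $T_i=\max\{1-S_i,0\}$, so that $|x_i-T_i|\le\eps$. There are two cases.

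If $S_i\ge 1-x_i$, the greedy prefix on $N_i$ can absorb exactly $1-x_i$ without exceeding any capacity. The row then sums to $1$, with all entries in $N_i\cup\{i\}$ and within their capacities.

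If $S_i<1-x_i$, then $T_i=1-S_i>x_i$, so $1-S_i-x_i\le\eps$. Every $j\in N_i$ receives its full capacity $x_j$, and the leftover is at most $\eps$. This leftover can be placed on one player, for instance the top-ranked player below $i$, or $i$ itself, whose excess is then at most $\eps$.

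In both cases $\sum_j w_i(j)=1$, $w_i(j)\ge0$, and $w_i(j)\le w_j(j)+\eps$.

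\textbf{The disjunctive condition.} This is the step I expect to need the most care. Fix $j$. If $w_i(j)=x_j$ exactly, the second alternative holds. Otherwise $j$ is not filled to capacity, and there are three ways this can happen.

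The first is that $j$ is the greedy stopping point, or lies after it. Then all mass sits on players $\succeq_i j$ only if $j$ is the stopping point, in which case the prefix sum is $1$. If $j$ lies after the stopping point, $w_i(j)=0$; this is fine only if $x_j\le\eps$. So the claim to prove is that the greedy fill stops no later than position $j$. This holds because everything after the stopping point receives $0$, while the prefix up to and including the stopping point already carries total mass $1$. Hence any $j$ ranked after the stopping point has $\sum_{\ell\succeq_i j}w_i(\ell)=1$. I would phrase this carefully, noting that the prefix sum is monotone in $j$ for weights placed in preference order.

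The second is that $j=i$. This case never arises, because $w_i(i)=x_i$ exactly.

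The third is that $j$ is the dump player in the second case. Its excess is at most $\eps$, so $|w_i(j)-x_j|\le\eps$.

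The main obstacle is making this weight placement respect preference order, so that every under-filled $j$ has a prefix sum of at least $1-\eps$. The cap at $1-x_i$ before reaching $i$ can break this: a $j\in N_i$ cut off by the cap has prefix mass only $1-x_i$, not $1$. To repair it, I would use $x_i$ itself. That player $j$ lies strictly above $i$, so the prefix over $\ell\succeq_i j$ misses exactly $w_i(i)=x_i$ and the later zeros. I would then show that $x_i\le\eps$ whenever the cap binds before $N_i$ is exhausted. This follows because the cap binding means $S_i>1-x_i$, so $T_i=\max\{1-S_i,0\}<x_i$. Combining this with $x_i-T_i\le\eps$ and then choosing the dump player carefully should finish the argument; I expect a small case analysis here. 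Each row costs $O(n)$ work, so the whole construction runs in polynomial time.
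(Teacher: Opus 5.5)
Your construction is sound. When $S_i\ge 1-x_i$ it is exactly the paper's greedy fill of the players above $i$ up to total mass $1-x_i$, with zero weight below $i$. When $S_i<1-x_i$, placing the leftover $\tau=1-x_i-S_i\le\varepsilon$ on players below $i$ also works, because every prefix ending at or below $i$ already carries mass at least $S_i+x_i=1-\tau\ge1-\varepsilon$. The paper instead dumps $\tau$ on the least preferred above-self player. One restriction: the dump player must never be $i$ itself, since that breaks the prescribed diagonal $w_i(i)=x_i$.

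The gap is in verifying the disjunctive condition for a player $j\in N_i$ left under-filled by the cap at $1-x_i$. For such $j$ the prefix $\sum_{\ell\succeq_i j}w_i(\ell)$ equals $1-x_i$, not $1$. Your earlier claim that it equals $1$ holds only for $j\preceq_i i$. You then propose to show that $x_i\le\varepsilon$ whenever the cap binds, but this is false. A binding cap gives $S_i>1-x_i$, and combined with $x_i\le T_i+\varepsilon$ this yields $x_i\le\varepsilon$ only when $S_i\ge1$, i.e.\ $T_i=0$. For $1-x_i<S_i<1$ you only get $x_i\le 1-S_i+\varepsilon$.

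Here is a concrete counterexample. Take $\varepsilon=0.1$ and three players, each ranking the other two above herself, with $x=(0.6,0.25,0.25)$. The hypothesis holds, since the targets are $0.5,0.15,0.15$. In the first player's row $S_i=0.5$ exceeds the cap $0.4$, yet $x_i=0.6$. Her second-ranked player receives $0.15$ and has prefix $0.4<1-\varepsilon$. No dump occurs in this case, so ``choosing the dump player carefully'' cannot rescue it.

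The missing idea is to use the \emph{other} alternative of the disjunction. If $S_i\ge1$, then $x_i\le\varepsilon$ and the prefix $1-x_i\ge1-\varepsilon$ suffices. If $S_i<1$, the total unallocated capacity above $i$ is $S_i-(1-x_i)=x_i-(1-S_i)=x_i-T_i\le\varepsilon$. Hence every truncated or zeroed $j\in N_i$ satisfies $|w_i(j)-w_j(j)|=x_j-w_i(j)\le\varepsilon$; in the example, $|0.15-0.25|=0.1$. This is the paper's argument, split as $x_i\le\varepsilon$ versus $x_i>\varepsilon$, where the latter forces $S_i<1$. With this two-case split, and with the dump player restricted to $j\ne i$, your proof goes through.
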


\begin{proof}
Fix $i$, put $S=\sum_{j\in N_i}x_j$ and $B=1-x_i$, assign
$w_i(i)=x_i$, and put zero weight below $i$ in her preference order.

If $S\ge B$, greedily fill capacities of players above $i$ in strict
preference order until their allocated sum is $B$. Every weight is at
most its capacity, and all entries before truncation equal their
capacities. If $x_i\le\eps$, the prefix at truncation and every
subsequent prefix above $i$ has weight $B\ge1-\eps$. If
$x_i>\eps$, the residual inequality implies $S<1$, and
\[
 S-B=x_i-(1-S)\le\eps.
\]
The total unallocated above-self capacity is at most $\eps$, so
each above-self weight differs from its capacity by at most $\eps$.
This also covers $B=0$.

If $S<B$, then $S<1$ and
\[
 \tau=B-S=(1-S)-x_i\le\eps.
\]
Here $N_i\ne\varnothing$, for otherwise the source assumption would
give $B=0$. Assign every player above $i$ her full capacity and add the
remaining $\tau$ to the least preferred above-self player. Every capacity
excess and every capacity difference is at most $\eps$. In both cases
the row sums to one, the diagonal entry equals its capacity, and the
prefix at or below self is one. All conditions in
\eqref{eq:strong-preference} hold, and the greedy construction is
polynomial-time.
\end{proof}

\subparagraph{Clearing in financial networks.}
We use the model and notation of
Dohn, Hansen and Klinkby~\cite[Section~2.2 and Definition~3]{AppDHK25}.

\begin{definition}[Clearing recovery rates]\label{def:app-clearing}
A financial network consists of banks $B$, external assets $e_i\ge0$,
debt notionals $c_{i,j}\ge0$ for distinct $i,j$, and CDS notionals
$c_{i,j}^k\ge0$.
In a CDS, $i$ is the writer, $j$ the holder, and $k$ the reference
bank; these banks are distinct. For a recovery-rate vector
$r\in[0,1]^B$, define liabilities, payments, and assets by
\[
 \ell_{i,j}(r)=c_{i,j}+\sum_k c_{i,j}^k(1-r_k),\qquad
 \ell_i(r)=\sum_j\ell_{i,j}(r).
\]
\[
 p_{i,j}(r)=r_i\ell_{i,j}(r),\qquad
 a_i(r)=e_i+\sum_jp_{j,i}(r).
\]
We assume no default costs and non-degeneracy: each bank has positive
external assets or writes positive unconditional debt. The clearing
map is
\[
 F_i(r)=
 \begin{cases}
  1,&a_i(r)\ge\ell_i(r),\\
  a_i(r)/\ell_i(r),&a_i(r)<\ell_i(r).
 \end{cases}
\]
A vector is \emph{clearing} if $r=F(r)$. We call it
\emph{$\eps$-residual clearing} if
$|r_i-F_i(r)|\le\eps$ for every bank. Under the approximation definition
of~\cite{AppDHK25}, an \emph{$\eps$-approximate clearing recovery
rate vector} must additionally satisfy
\begin{equation}\label{eq:strong-clearing-solvent}
 a_i(r)\ge(1+\eps)\ell_i(r)
 \quad\Longrightarrow\quad r_i=1.
\end{equation}
\end{definition}

We now give the definition of the restricted CDS networks used here.

\begin{definition}[Restricted CDS networks]
\label{def:app-unit-cds}
There are core banks $V$, CDS writers outside $V$, and an outside
creditor. Each core bank $i$ has $e_i=0$, writes only an unconditional
debt $c_{i,o}=1$ to the creditor $o$, and receives only CDS payments.
Every received CDS has notional one. Its reference is a core bank
other than its holder. Each writer $s$ is \emph{funded}: its external assets
cover its maximum possible liabilities, i.e., 
\[
 e_s\ge\sum_jc_{s,j}+\sum_{j,k}c_{s,j}^k>0.
\]
The creditor writes no contracts and has positive external assets.
All notionals and endowments are rational.
\end{definition}

Assign recovery one to the writers and creditor. Since each writer’s external assets cover its largest possible liabilities and the creditor has no liabilities, recovery one gives $r_i=F_i(r)$ for them, and also satisfies implication~\eqref{eq:strong-clearing-solvent}, whatever the recovery rates of the core banks. If $N_i$ is the list of references of the CDSs held
by core bank $i$, then
\begin{equation}\label{eq:strong-cds-only}
 \ell_i(r)=1,\qquad
 a_i(r)=\sum_{j\in N_i}(1-r_j),\qquad
 F_i(r)=\min\{1,a_i(r)\}.
\end{equation}
The list may contain a reference more than once; $|N_i|$ counts CDSs,
with multiplicity. Equivalently, contracts with a common reference
have their notionals added. Contracts referencing a bank already
fixed at recovery one contribute zero and can be removed beforehand.
With $x_i=1-r_i$, we obtain
\begin{equation}\label{eq:app-clearing-complement}
 1-F_i(r)=\max\left\{1-\sum_{j\in N_i}x_j,0\right\}.
\end{equation}
Thus complementation preserves the additive approximation in
\eqref{eq:strong-unit-nor}. Every core bank writes positive
unconditional debt, so the zero endowments respect non-degeneracy.

\subsubsection{Techniques and results}

\subparagraph{Error mapping.}
\begin{theorem}[Error mapping]\label{thm:strong-app-arithmetic}
For every fixed $0\le\eps<1/2$, a polynomial-time solver for
$\eps$-\GC gives a polynomial-time algorithm for
\eqref{eq:strong-unit-nor} with at most two inputs per coordinate and
additive error
\[
 e(\eps)=\frac{7\eps}{3+\eps}.
\]
Coordinates with no predecessors are assigned \(x_i=1\), which is exact (for restricted CDS networks, this means that a core bank holding no CDS is assigned recovery rate \(r_i=0\), consistent with its zero assets). In particular, Theorem~\ref{thm:main}
gives $7/19$-approximate pure Nash equilibria for the summation
public-goods model of Definition~\ref{def:app-divisible},
$7/19$-equilibria for preference games with strict preferences and
at most two players above self, and $7/19$-residual clearing for restricted CDS networks (Definition~\ref{def:app-unit-cds}) with at most two held CDSs  per
core bank.
\end{theorem}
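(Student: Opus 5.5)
The plan is to reduce the fixed-point system~\eqref{eq:strong-unit-nor} with $|N_i|\le2$ to a generalized circuit, call the assumed $\eps$-\GC solver once, and decode its output by an affine map chosen to minimize the resulting error. I have not computed the optimal parameters, so the exact constant $7\eps/(3+\eps)$ is only expected to come out of the final optimization.

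\emph{Sources.} Coordinates with $N_i=\varnothing$ are fixed to $x_i=1$, which is exact. Their occurrences in other targets are substituted: any $i$ with a source predecessor gets $T_i=0$. Such $i$ are set to $x_i=0$, exactly, and removed as well. Iterating this gives a residual system in which every remaining coordinate has one or two remaining predecessors.

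\emph{Encoding.} I would encode each remaining $x_i$ by a circuit variable $y_i=\beta+\alpha x_i$, with constants $\alpha\in(0,1]$ and $\beta\ge0$ to be optimized; the shift is needed because an additive offset moves the truncation at $0$. For each $i$ with $N_i=\{j,k\}$ I would build
\[
s_i=G_+(y_j,y_k),\qquad q_i=G_\zeta(),\qquad y_i=G_-(q_i,s_i),
\]
and for a single predecessor I would use $y_i=G_-(q_i,y_j)$. The constant is $\zeta=\alpha+2\beta+\beta'$ in the first case and $\zeta=\alpha+\beta+\beta'$ in the second, where $\beta'$ is a small correction. In the intended regime $\alpha+2\beta\le1$, so the truncation $\min\{\cdot,1\}$ in $G_+$ is inactive when all $y$ lie in $[\beta,\beta+\alpha]$.

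The three gates contribute errors of at most $\eps$ each. Under the ideal encoding,
\[
y_i\approx\max\{\alpha+\beta+\beta'-\alpha x_j-\alpha x_k,0\}+\beta \pm3\eps ,
\]
and I would decode by $x_i=\min\{1,\max\{0,(y_i-\beta)/\alpha\}\}$. Clipping to $[0,1]$ can only reduce the distance to a target that lies in $[0,1]$.

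\emph{Error analysis.} Here the inputs $x_j,x_k$ are themselves decoded and clipped values, so the bound must be propagated through the decoding. I would split into two cases.
\begin{itemize}
\item If $1-x_j-x_k\ge0$, the error is $3\eps/\alpha$ plus the effect of clipping the inputs. Clipping moves an input by at most an amount that is itself absorbed by the slack $\beta$.
\item If the target is $0$, the truncation at zero in $G_-$ helps: the output is at most $\eps$ plus the part of $\beta'$ that survives. The shift $\beta$ then makes the decoded value close to $0$.
\end{itemize}
Balancing these cases yields a bound of the form $c_1\eps/\alpha$ against a linear constraint on $\alpha$ and $\beta$. That constraint comes from keeping every circuit value inside $[0,1]$ and inside the non-truncated range of $G_+$. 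Solving the resulting one-parameter minimization should give $e(\eps)=7\eps/(3+\eps)$, which is below $1/2$ exactly when $\eps<3/13$.

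\emph{Expected obstacle.} The main difficulty is this last step: choosing $(\alpha,\beta,\zeta)$ and tracking all regimes, namely truncation in $G_+$ and $G_-$, decoded values that clip at $0$ or $1$, and the one-predecessor case. I would first check the symmetric extremes, all inputs equal to $0$ and sums near $1$, to determine the binding constraints. Then I would verify that the chosen parameters handle every intermediate case by monotonicity.

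\emph{Applications.} These follow by the changes of variables already given in the paper:
\begin{itemize}
\item public goods directly from~\eqref{eq:strong-unit-nor};
\item preference games via Lemma~\ref{lem:strong-app-rows}, which requires sources to be exactly $1$, as guaranteed above;
\item CDS networks via~\eqref{eq:app-clearing-complement} with $r_i=1-x_i$, giving residual clearing.
\end{itemize}
Plugging in $\eps=1/6$ from Theorem~\ref{thm:main} gives $7/19$.
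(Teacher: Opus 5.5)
\textbf{The gap is your last step: no choice of parameters in your setup yields $7\eps/(3+\eps)$.} You use a fixed decoding $x=(y-\beta)/\alpha$ with $\alpha\le1$, and a separate constant gate per coordinate whose error you must absorb. Your own interior estimate $3\eps/\alpha$ is then already at least $3\eps$, and $3\eps>7\eps/(3+\eps)$ for every $\eps>0$. At $\eps=1/6$ it equals $1/2$, no better than the trivial baseline.

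This loss is actually attained; it is not an artifact of a loose analysis. Take $N_1=\{2,3\}$, $N_2=\{1,3\}$, $N_3=\{1,2\}$; there are no sources, so your preprocessing does nothing. Take $\beta'=\beta$, the choice that makes your interior case exact, so $\zeta=\alpha+3\beta$. A solver may return a symmetric solution in which the constant, sum and difference gates all err by $\eps$ in the same direction. Then:
\begin{itemize}
\item $3y=\zeta\pm3\eps$, and for $\eps\le1/4$ at least one sign is feasible;
\item the decoded value before clipping is $x=\tfrac13\pm\eps/\alpha$;
\item the residual $|x-\max\{1-2x,0\}|$ is at least $\min\{3\eps/\alpha,1/2\}$.
\end{itemize}
Changing $\beta'$ essentially just shifts this window. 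Even $\zeta=1$, which makes the constant's error one-sided, leaves a spread of $5\eps$ in $3y$. That gives a residual of at least $\min\{2.5\eps/\alpha,1/2\}\ge5/12>7/19$ at $\eps=1/6$.

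Two smaller problems: your displayed approximation of $y_i$ does not match your $\zeta$, since in fact $q_i-s_i\approx\alpha(1-x_j-x_k)+\beta'$. And the claim that clipping of decoded inputs is absorbed by $\beta$ is not justified.

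\textbf{What is missing is to decode by a \emph{contraction}, tied to the returned value of a single constant node.} The paper uses one shared node $c=G_1()$, so $c\in[1-\eps,1]$. After the solver returns, it sets $x_i=b+av_i$ for non-sources, with $a=3/(4-c)$ and $b=(1-c)/(4-c)$. In your parametrization this is $\alpha=1/a\ge1$ and $\beta=-b/a\le0$, which lies outside the range you allowed whenever $c<1$.

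Since $a+b=1$, all decoded values lie in $[b,1]$, so no clipping is ever needed; the price is that they never fall below $b$. Measured against the actual $c$, only the $G_+$ and $G_-$ errors enter: $|v_i-\max\{c-\sum_j v_j,0\}|\le k\eps$ for $k$ predecessors.

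Write the true target as $T_i(x)=a\max\{c+d_k-\sum_j v_j,0\}$. From $ac=1-4b$ one gets $ad_1=3b$ and $ad_2=2b$. This gives $b-ad_k-ak\eps\le x_i-T_i(x)\le b+ak\eps$. Since $b\le a\eps/3$, the error is at most $b+2a\eps=(1-c+6\eps)/(4-c)$. This is largest at $c=1-\eps$, where it equals $7\eps/(3+\eps)$. A source predecessor gives $T_i(x)=0\le x_i\le b+ak\eps$.

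Your exact preprocessing of sources and the three applications are fine.
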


\begin{proof}
Create the shared vertex $c=G_1()$. A vertex with no predecessors
uses the alias $v_i=c$. With one predecessor define $v_i=G_-(c,v_j)$;
with two define $t_i=G_+(v_j,v_k)$ and $v_i=G_-(c,t_i)$.
The circuit has linear size and each variable has a unique defining
gate. Every $\eps$-solution has $1-\eps\le c\le1$. Set
\[
 a=\frac3{4-c},\qquad b=\frac{1-c}{4-c},\qquad
 x_i=\begin{cases}
  1,&N_i=\varnothing,\\ b+av_i,&N_i\ne\varnothing.
 \end{cases}
\]
Since $a+b=1$, every $x_i$ lies in $[0,1]$.
For a coordinate with $k\in\{1,2\}$ inputs, let
$B_i=\max\{c-\sum_{j\in N_i}v_j,0\}$. As $c\le1$,
\[
 \max\{c-\min\{v_j+v_k,1\},0\}
 =\max\{c-v_j-v_k,0\}.
\]
Nonexpansiveness of clipping gives $|v_i-B_i|\le k\eps$.

First suppose none of the inputs is a source. Put
$d_k=(1-kb)/a-c$. Since $ac=1-4b$,
\[
 ad_1=3b,\qquad ad_2=2b,
\]
so $d_k\ge0$. The actual target is
\[
 T_i(x)=a\max\left\{c+d_k-\sum_{j\in N_i}v_j,0\right\}.
\]
The two truncated expressions with intercepts $c+d_k$ and $c$
differ by a value in $[0,d_k]$. Consequently,
\[
 b-ad_k-ak\eps\le x_i-T_i(x)\le b+ak\eps.
\]
For $k=2$, the absolute error is at most $b+2a\eps$; for
$k=1$, it is at most $2b+a\eps$, which is no larger since
$b/a=(1-c)/3\le\eps/3$. If some input is a source, then
$T_i(x)=B_i=0$, and $x_i\le b+ak\eps$. Sources themselves are
exact. Thus,
\[
 b+2a\eps=\frac{1-c+6\eps}{4-c}
 \le\frac{7\eps}{3+\eps},
\]
where the expression decreases with $c$ for $\eps<1/2$ and
$c\ge1-\eps$.

For preferences, apply Lemma~\ref{lem:strong-app-rows} to complete the
rows. For the core banks of Definition~\ref{def:app-unit-cds}, put $x_i=1-r_i$. Then
\[
 1-F_i(r)=\max\left\{1-\sum_{j\in N_i}x_j,0\right\},
\]
and complementation preserves the additive error. Funded writers are
substituted exactly.
\end{proof}

\subparagraph{Direct state propagation.}

\begin{lemma}\label{lem:strong-app-supported}
Run Algorithm~\ref{alg:propagation} on the following NOR circuit.
For each primary variable $v_i$, use $G_1$ if $N_i$ is empty,
$G_\neg(v_j)$ if $N_i$ has exactly one entry $j$, and a binary tree of $G_\vee$
gates followed by $G_\neg$ otherwise. The tree computes the OR
of the primary inputs in $N_i$; repeated references may be repeated
inputs. Every node has a unique producer. At termination, the states of its primary variables satisfy
\[
 \begin{aligned}
  \sigma(v_i)=\lst&\ \Longrightarrow\ \exists j\in N_i:\ \sigma(v_j)=H,\\
  \sigma(v_i)=H&\ \Longrightarrow\ \forall j\in N_i:\ \sigma(v_j)=\lst,\\
  \sigma(v_i)=M&\ \Longrightarrow\ \text{no predecessor is $H$ and some predecessor is $M$}.
 \end{aligned}
\]
Sources have state $H$. The construction permits arbitrary in-degree.
In particular, the low state is $\lst$, rather than the refinable state $L$.
\end{lemma}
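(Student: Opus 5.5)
The plan is to restrict attention to the Boolean subsystem and argue directly from the fixed-point property of Algorithm~\ref{alg:propagation} at termination. First I check that only the states $\lst,M,H$ ever occur. This follows by induction on the updates: $G_1$ is assigned $R(1)=H$, and every other gate is a $G_\neg$ or a $G_\vee$. By the Boolean rules of Section~\ref{sec:localrules}, these gates prescribe only $\lst$, $H$, or $M$. Moreover, since $\lst$ and $H$ are maximal and each rule is monotone (Lemma~\ref{lem:mono}), a node that reaches a terminal state never changes again. So each node is either $M$ or terminal, and the interpretation into $\{0,1,\bbot\}$ is exact: $\lst\mapsto0$, $H\mapsto1$, $M\mapsto\bbot$.

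Next I establish an auxiliary claim about the $G_\vee$ tree computing the OR of the inputs in $N_i$. At termination, the root has state $H$ iff some leaf (a primary $v_j$, $j\in N_i$) has state $H$. The root has state $\lst$ iff all leaves have state $\lst$. Otherwise the root is $M$. I prove this by structural induction on the tree, using that at termination each internal node equals its rule applied to the final states of its children. The rule is exactly three-valued OR: $H$ if some child is $H$, $\lst$ if both are $\lst$, and $M$ otherwise. Repeated inputs cause no difficulty, since OR is idempotent. The single-predecessor case is the degenerate tree consisting of the leaf itself.

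Then I apply the final $G_\neg$, which at termination swaps $\lst$ and $H$ and fixes $M$. This gives the three implications directly. If $\sigma(v_i)=\lst$, the root is $H$, so some predecessor is $H$. If $\sigma(v_i)=H$, the root is $\lst$, so all predecessors are $\lst$. If $\sigma(v_i)=M$, the root is $M$, so no leaf is $H$ and not all leaves are $\lst$. Since every state is in $\{\lst,M,H\}$, this means some leaf is $M$. Sources are produced by $G_1$ and hence have state $H$. None of this depends on the in-degree, and the low state produced is $\lst$ by construction.

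The only delicate point is the termination-equals-rule fact together with the restriction to three states. I would state it explicitly from the loop condition of Algorithm~\ref{alg:propagation}: no non-comparator output differs from its prescribed state, and there are no comparators in this circuit. The rest is routine.
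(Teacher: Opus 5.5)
Your proposal is correct and follows the paper's proof essentially step for step. You restrict to the invariant three-state subset $\{M,\lst,H\}$, use the fact that every gate is stable at termination because the circuit has no comparators, induct through each binary $G_\vee$ tree, and finish with the swap performed by the final $G_\neg$. The main difference is that you spell out details the paper leaves implicit, for example that a root in state $M$ with no $H$ leaf forces some $M$ leaf, and that sources receive $R(1)=H$.
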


\begin{proof}
The invariant state subset is $\{M,\lst,H\}$, with $\lst,H$
terminal. On this subset $G_\neg$ swaps the terminal states and fixes
$M$, while $G_\vee$ returns $H$ if an input is $H$, returns $\lst$ if both
inputs are $\lst$, and returns $M$ otherwise. At termination every
gate has the state prescribed by its rule. Induction through each
binary OR tree proves the implications. Each node refines at most
once, so Algorithm~\ref{alg:propagation} runs in polynomial time. No
unsupported terminal state is introduced.
\end{proof}

\begin{theorem}\label{thm:strong-app-direct-nor}
A direct run of
Algorithm~\ref{alg:propagation} yields a $1/5$-approximate pure Nash equilibrium for
Definition~\ref{def:app-divisible} with maximum in-degree two, a $1/5$-equilibrium for
Definition~\ref{def:app-preference} with at most two players above
self, and a $1/5$-approximate clearing recovery rate vector for
Definition~\ref{def:app-unit-cds} with at most two held CDSs per core bank. Without the degree restrictions, it gives
$1/3$-approximate pure Nash equilibria for summation public goods
and $1/3$-equilibria for preference games with strict preferences.
\end{theorem}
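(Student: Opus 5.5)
We decode the final states of the NOR circuit from Lemma~\ref{lem:strong-app-supported} as explicit numbers and check \eqref{eq:strong-unit-nor} by case analysis on $\sigma(v_i)$. The preference and clearing statements then follow from Lemma~\ref{lem:strong-app-rows} and from the complementation identity~\eqref{eq:app-clearing-complement}.

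\emph{Decoding.} Run Algorithm~\ref{alg:propagation} on the circuit of Lemma~\ref{lem:strong-app-supported}, with the multiset $N_i$ as predecessor list. Set $x_i=1$ if $\sigma(v_i)=H$, $x_i=0$ if $\sigma(v_i)=\lst$, and $x_i=m$ if $\sigma(v_i)=M$. The constant $m\in(0,1)$ is fixed below. The case analysis, using the three implications of Lemma~\ref{lem:strong-app-supported}, is as follows.
\begin{itemize}
\item If $\sigma(v_i)=H$, every predecessor has value $0$. The target $T_i(x)$ is then $1=x_i$, so the coordinate is exact; this includes sources.
\item If $\sigma(v_i)=\lst$, some predecessor has value $1$. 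Hence $T_i(x)=0=x_i$, again exact.
\item If $\sigma(v_i)=M$, no predecessor is $H$ and some $k\ge1$ predecessors, counted with multiplicity, are $M$; the rest are $0$. Hence $T_i(x)=\max\{1-km,0\}$, and the error is $\max_{1\le k\le d}|m-\max\{1-km,0\}|$, where $d$ bounds the in-degree.
\end{itemize}
For $d=2$, take $m=2/5$. Then $k=1$ gives $|2m-1|=1/5$ and $k=2$ gives $|m-(1-2m)|=1/5$. For arbitrary $d$, take $m=1/3$. Then $k=1$ gives $1/3$, $k=2$ gives $0$, and $k\ge3$ gives $m=1/3$. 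This yields the public-goods bounds $1/5$ and $1/3$.

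\emph{Preference games.} Sources receive $x_i=1$, so Lemma~\ref{lem:strong-app-rows} applies directly. It completes the rows to a $1/5$-equilibrium in the degree-two case and a $1/3$-equilibrium in general.

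\emph{Clearing.} Put $r_i=1-x_i$ for core banks and recovery one for writers and the creditor. By~\eqref{eq:strong-cds-only} and~\eqref{eq:app-clearing-complement}, residual error $1/5$ transfers verbatim. A core bank with no CDS is a source and gets $r_i=0$, which is consistent with its zero assets. The remaining requirement is the solvency implication~\eqref{eq:strong-clearing-solvent}: if $a_i(r)=\sum_{j\in N_i}x_j\ge 1+1/5$, then $r_i=1$.
\begin{itemize}
\item A bank in state $H$ has all predecessors at $0$, so $a_i=0$ and the premise fails.
\item A bank in state $\lst$ has $x_i=0$, i.e.\ $r_i=1$, so the conclusion holds.
\item A bank in state $M$ has predecessors valued in $\{0,2/5\}$ with at most two of them, so $a_i\le 4/5<6/5$ and the premise fails.
\end{itemize}

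The main obstacle is only the choice of $m$ balancing the $k=1$ and $k=2$ (respectively $k\ge3$) errors. The subtle point is that the $M$ case relies on Lemma~\ref{lem:strong-app-supported} guaranteeing that no predecessor of an $M$ node is $H$.
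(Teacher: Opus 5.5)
Your proposal is correct and follows the paper's proof essentially step for step. Both use the same decoding of $\lst,H,M$ as $0,1,2/5$ (with $M\mapsto 1/3$ when there is no degree bound), the same three-case check via Lemma~\ref{lem:strong-app-supported}, row completion by Lemma~\ref{lem:strong-app-rows}, and the same solvency verification after setting $r_i=1-x_i$ (unresolved banks have assets at most $4/5<6/5$).
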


\begin{proof}
For in-degree at most two, decode states $\lst,H,M$ as $0,1,2/5$.
A vertex in state $\lst$ has a predecessor of value one and thus
$T_i(x)=0$; a vertex in state $H$ has all predecessors zero and
$T_i(x)=1$. These are exact. An unresolved vertex has one or two
predecessors of value $2/5$, so its target is $3/5$ or $1/5$,
both at distance $1/5$ from $2/5$. Sources receive exactly one.
Lemma~\ref{lem:strong-app-rows} then completes the preference rows.

For Definition~\ref{def:app-unit-cds}, put $r_i=1-x_i$, so residual errors are
preserved. If $x_i=0$, then $r_i=1$; if $x_i=1$, then assets are zero;
and if $x_i=2/5$, then assets are $2/5$ or $4/5$, strictly below
$1+1/5$. Hence the strengthened solvency implication holds.
Repeated references are counted with multiplicity in the sum and can
be represented by repeated inputs of an OR gate.

At arbitrary in-degree, decode $M$ as $1/3$. For an unresolved
vertex, with $k\ge1$ unresolved predecessors, the target is
$2/3$, $1/3$, or $0$ according as $k=1$, $k=2$, or $k\ge3$.
Each is within $1/3$ of $1/3$. The terminal-state argument and
row-completion lemma still apply. The clearing assertion uses the restriction to two held CDSs to verify 
\eqref{eq:strong-clearing-solvent}.
\end{proof}

\subsection{Binary best-shot public goods games}\label{app:positive-binary}
We consider the binary best-shot public goods game of Do Dinh and
Hollender~\cite{AppDDH24}, with common price $p\in(0,1)$, and give
guarantees for both well-supported and ordinary Nash equilibria. For
well-supported equilibria, the previous general guarantee is
$\min\{p,1-p\}$~\cite[Lemma~2.2]{AppDDH24}. Their matching hardness
result concerns graphs of unrestricted in-degree. We show that
\emph{any} polynomial-time \eps-\GC solver with $\eps<1/2$ gives
an improved guarantee $e_d(p)$ at every finite in-degree bound $d$
(Theorem~\ref{thm:strong-app-binary-wsne}). This gives exact
equilibria at $d=1$, and at $d=2$, $p=1/2$ it lowers the error from
$1/2$ to $(\sqrt5-2)/2\approx0.119$. For ordinary Nash equilibria,
the all-$1/2$ profile is an elementary $1/4$-NE at $p=1/2$ on every
directed graph, since then $|\Delta_i|\le1/2$ and each player mixes
uniformly. Direct state propagation lowers this to $\approx0.058$ at
in-degree two, and achieves $4/27$ for arbitrary in-degree and price
(Theorem~\ref{thm:strong-app-binary-ne}). Unlike the WSNE result,
the NE guarantee relies on the support properties of
Lemma~\ref{lem:strong-app-supported}, not merely on the existence of
an approximate circuit solution.

\begin{definition}[Indivisible public goods with the best-shot rule]
\label{def:app-binary}
An instance, as in~\cite[Definition~2]{AppDDH24}, consists of a
loop-free directed graph $G=(V,E)$ and a rational common price
$p\in(0,1)$. Each player $i$ chooses an action $a_i\in\{0,1\}$, where 1 means
producing the good, and 0 means not producing it. With $N_i=\{j:(j,i)\in E\}$, her payoff is
\[
 u_i(a_i,a_{-i})=
 \begin{cases}
  1-p,&a_i=1,\\
  1,&a_i=0\text{ and some }j\in N_i\text{ has }a_j=1,\\
  0,&\text{otherwise}.
 \end{cases}
\]
Write $s_i\in[0,1]$ for the probability of producing. The
approximation notions are those of Definition~\ref{def:app-ne}.
\end{definition}

The payoff difference between producing and abstaining is
\begin{equation}\label{eq:strong-binary-difference}
 \Delta_i(s)=\prod_{j\in N_i}(1-s_j)-p,
\end{equation}
where an empty product is one. Accordingly, a Nash equilibrium
approximation bounds
\[
 \operatorname{Reg}_i(s)=
 \max\{(1-s_i)\Delta_i(s),-s_i\Delta_i(s),0\},
\]
whereas a well-supported Nash equilibrium approximation bounds the payoff loss of each action assigned positive probability.

\subparagraph{Well-supported Nash equilibria (WSNE).}

\begin{theorem}\label{thm:strong-app-binary-wsne}
Consider the binary best-shot public-goods game of
Definition~\ref{def:app-binary}, with price $p\in(0,1)$
and maximum in-degree $d\ge1$.
Fix $0\le\eps<1/2$ such that $\eps$-\GC has a
polynomial-time solver. Then a polynomial-time algorithm
computes an $e_d(p)$-WSNE, where
\begin{equation}\label{eq:strong-wsne-root}
 a+a^d=2p,\quad a\in(0,1),\qquad
 e_d(p)=a-p=p-a^d.
\end{equation}
The probabilities may be algebraic. For every $\eta>0$,
rational outputs achieve $e_d(p)+\eta$ in time
polynomial in the input and $\log(1/\eta)$.
\end{theorem}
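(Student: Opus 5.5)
The plan is to reduce the game to a circuit of $G_\neg$ and $G_\vee$ gates. We run the assumed $\eps$-\GC solver on it, read off a three-valued labelling, and decode the labels into mixed strategies using the root $a$ of~\eqref{eq:strong-wsne-root}. The solver will only be used qualitatively, so we need nothing beyond $\eps<1/2$.

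\emph{Circuit.} For each player $i$ we build a primary variable $v_i$ exactly as in Lemma~\ref{lem:strong-app-supported}. We use $G_1$ if $N_i=\varnothing$, use $G_\neg(v_j)$ if $N_i=\{j\}$, and otherwise use a binary tree of $G_\vee$ gates over $\{v_j:j\in N_i\}$ followed by $G_\neg$. This circuit has linear size, and we solve it with the given solver. We label each variable $0$ if its value is at most $\eps$, $1$ if its value is at least $1-\eps$, and $\bbot$ otherwise. These classes are disjoint because $\eps<1/2$. By the definitions of $G_\neg$ and $G_\vee$, the labels satisfy the three-valued \NOT and \OR constraints of Figure~\ref{fig:pc}. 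Composing these along each OR tree shows that the labels of the primary variables satisfy the NOR constraint. Concretely, if some predecessor of $i$ is labelled $1$, then $v_i$ is labelled $0$. If all predecessors of $i$ are labelled $0$ (in particular, if $i$ is a source), then $v_i$ is labelled $1$. In every other case the label of $v_i$ is unconstrained.

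\emph{Decoding and the root.} The map $f(a)=a+a^d$ is continuous and strictly increasing on $[0,1]$ with $f(0)=0$ and $f(1)=2$, so there is a unique $a\in(0,1)$ with $f(a)=2p$. Since $a\ge a^d$, the quantity $e:=a-p=p-a^d$ is nonnegative, and $e=0$ when $d=1$. We set $s_i=1$ for label $1$, $s_i=0$ for label $0$, and $s_i=1-a$ for label $\bbot$. Then $1-s_j\in\{0,1,a\}$, and by~\eqref{eq:strong-binary-difference} we get $\Delta_i=a^k-p$ whenever no predecessor has label $1$, where $k$ is the number of predecessors labelled $\bbot$.

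\emph{Case check.} We go through the labels of $i$ one at a time.
If $i$ has label $1$, then either all its predecessors are $0$, which gives $\Delta_i=1-p>0$ so producing is exact, or no predecessor is $1$ and $1\le k\le d$. In the latter case the loss of producing is $-\Delta_i=p-a^k\le p-a^d=e$.
If $i$ has label $0$, then either some predecessor is $1$, which gives $\Delta_i=-p<0$ so abstaining is exact, or $k\ge1$. In the latter case the loss of abstaining is $\Delta_i=a^k-p\le a-p=e$.
If $i$ has label $\bbot$, the NOR constraint excludes a predecessor labelled $1$ and also excludes all predecessors being $0$, so $1\le k\le d$. Hence $\Delta_i\in[a^d-p,\,a-p]=[-e,e]$, and both supported actions lose at most $e$.
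These three cases together give an $e_d(p)$-WSNE.

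\emph{Main obstacle and rationality.} The delicate point is that a solver at arbitrary $\eps<1/2$ gives no numerical control. The argument therefore has to rely only on the forced-label implications, and it must allow labels $0$ or $1$ that were \emph{not} forced. The case check above handles exactly these unforced labels, and this is why the strategy value $1-a$ is pinned to the equation $a+a^d=2p$, which balances the two extreme cases $k=1$ and $k=d$. For rational outputs, we find $a$ by bisection to precision $\eta'$ in $O(\log(1/\eta'))$ steps. Since $a\mapsto a$ and $a\mapsto a^d$ are $d$-Lipschitz on $[0,1]$, this perturbs every bound above by at most $d\eta'$, so choosing $\eta'=\eta/d$ yields an $(e_d(p)+\eta)$-WSNE.
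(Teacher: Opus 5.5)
Your proposal is correct and follows the paper's proof essentially step for step. It uses the same NOR circuit from Lemma~\ref{lem:strong-app-supported}, the same three-band decoding of an arbitrary $\eps$-\GC solution, the same probabilities $0,1,1-a$, the same bound $\Delta_i=a^k-p\in[-e_d(p),e_d(p)]$ for $1\le k\le d$ unresolved predecessors, and the same bisection to precision $\eta/d$; the only difference is that you split cases by the player's own label, whereas the paper splits by the predecessor configuration and notes that the bound holds irrespective of the player's own value, which is equivalent.
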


\begin{proof}
Apply the construction of Lemma~\ref{lem:strong-app-supported} to the
graph $G$ of the game, with one primary variable $v_i$ for each player
$i$ and $N_i$ the set of her in-neighbors. Thus $v_i$ is the NOR of the
variables of her in-neighbors, mirroring the best response in the
best-shot game: produce exactly when no in-neighbor produces. The
circuit has size $O(|V|+|E|)$. We use an arbitrary $\eps$-\GC
solution, rather than the propagation algorithm and its supported
states. Decode a circuit value at most $\eps$ as \emph{low}, a value
at least $1-\eps$ as \emph{high}, and any other value as
\emph{unresolved}. Here low, high, and unresolved refer to this numerical
decoding, not to the information states of Algorithm~\ref{alg:propagation}.  The bands are disjoint because $\eps<1/2$, and the
Boolean gates imply
\begin{equation*}
 \begin{aligned}
  \text{some predecessor high}&\ \Longrightarrow\ \text{output low},\\
  \text{all predecessors low}&\ \Longrightarrow\ \text{output high}.
 \end{aligned}
\end{equation*}
These implications propagate through each OR tree.

Assign the probabilities $0,1,1-a$ to primary variables with low, high,
and unresolved values, respectively. If some predecessor is high, the output is low and
abstention is an exact best response. If all predecessors are low,
the output is high and production is an exact best response.
Otherwise there is no high predecessor and there are
$k\in\{1,\ldots,d\}$ unresolved predecessors. Irrespective of the
player's own value, her payoff difference is
\[
 \Delta_i=a^k-p\in[a^d-p,a-p]
             =[-e_d(p),e_d(p)].
\]
Every supported action is therefore an $e_d(p)$-best response.

The function $a+a^d$ is strictly increasing from zero to two, so
the root in~\eqref{eq:strong-wsne-root} is unique. It satisfies
$a\ge p\ge a^d$, strictly for $d>1$, and
$e_d(p)<\min\{p,1-p\}$ for finite $d>1$. For $d=1$, $a=p$
and the error is zero. Bisection gives a rational approximation
$\widetilde a$ with $|a-\widetilde a|\le\eta/d$. As
$|a^k-\widetilde a^k|\le k|a-\widetilde a|$, this adds at most
$\eta$ to the WSNE error. Rational powers and bisection have
polynomial bit complexity, also if $d$ is part of the input.
\end{proof}

\subparagraph{Nash equilibria (NE).}

\begin{theorem}\label{thm:strong-app-binary-ne}
For Definition~\ref{def:app-binary} with maximum in-degree $d\ge1$,
Algorithm~\ref{alg:propagation} computes a
$\gamma_d(p)$-NE, where
\begin{equation}\label{eq:strong-ne-root}
 a^2+(1-a)a^d=p,\qquad \gamma_d(p)=a(a-p).
\end{equation}
The equation has a unique root $a\in(0,1)$. In particular,
\[
 \gamma_2(1/2) \approx 0.05788.
\]
Without an in-degree bound, the algorithm computes a $p(1-\sqrt p)$-NE,
whose worst-price bound is $4/27$. For every $\eta>0$, rational outputs achieve the stated
errors plus $\eta$ in time polynomial in the input and
$\log(1/\eta)$.
\end{theorem}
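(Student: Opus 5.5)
The plan is to run Algorithm~\ref{alg:propagation} on the NOR circuit of Lemma~\ref{lem:strong-app-supported}, built from the game graph with one primary variable $v_i$ per player and $N_i$ her in-neighbours. By that lemma, the final states lie in $\{\lst,M,H\}$, sources are $H$, and the three implications hold. We then decode $\lst\mapsto s_i=0$ and $H\mapsto s_i=1$. We decode $M\mapsto s_i=1-a$, where $a\in(0,1)$ is the root of~\eqref{eq:strong-ne-root}. (For unbounded in-degree we use $a=\sqrt p$ instead.) The regret of each player is then bounded via $\operatorname{Reg}_i(s)=\max\{(1-s_i)\Delta_i,-s_i\Delta_i,0\}$ and~\eqref{eq:strong-binary-difference}.

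\emph{Resolved players.} If $\sigma(v_i)=\lst$, some predecessor is $H$, so some $s_j=1$. Hence $\Delta_i=-p<0$, and with $s_i=0$ the regret is $0$. If $\sigma(v_i)=H$, all predecessors are $\lst$ (or $i$ is a source), so $\Delta_i=1-p>0$. With $s_i=1$ the regret is again $0$.

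\emph{Unresolved players.} If $\sigma(v_i)=M$, no predecessor is $H$, and $k\ge1$ predecessors are $M$ while the rest are $\lst$. Since the $M$ predecessors have $1-s_j=a$ and the $\lst$ ones have $1-s_j=1$, we get $\Delta_i=a^k-p$ with $1\le k\le d$. With $s_i=1-a$ the regret is
\[
\max\{a(a^k-p),\,(1-a)(p-a^k),\,0\}\le\max\{a(a-p),\,(1-a)(p-a^d)\},
\]
where the two worst cases are $k=1$ and $k=d$ respectively. Expanding $a(a-p)=(1-a)(p-a^d)$ gives exactly $a^2+(1-a)a^d=p$, so at the root both terms equal $\gamma_d(p)=a(a-p)$.

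\emph{The root.} Let $f(a)=a^2+(1-a)a^d$, so that $f(0)=0$ and $f(1)=1$. Its derivative is
\[
f'(a)=2a+a^{d-1}\bigl(d-(d+1)a\bigr)\ge 2a-a^{d-1}\ge a>0 \quad\text{on }(0,1)
\]
for $d\ge2$, and $f(a)=a$ when $d=1$. Hence the root is unique, and it satisfies $a\ge p$, so $\gamma_d(p)\ge0$. Numerically solving at $d=2$, $p=1/2$ should give $\gamma_2(1/2)\approx0.05788$.

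\emph{Unbounded in-degree.} Here $a^k$ ranges over $(0,a]$, so the second term is bounded by $(1-a)p$. Balancing $a(a-p)=(1-a)p$ gives $a^2=p$, that is, $a=\sqrt p$, and error $p(1-\sqrt p)$. Maximising over $p$ places the worst price at $p=4/9$, with value $4/27$.

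\emph{Rational outputs.} We approximate $a$ by bisection to accuracy $\eta'$. Both $a^k$ and the regret expressions are Lipschitz in $a$ with constant $O(d)$, so choosing $\eta'=\eta/O(d)$ costs at most $\eta$ and takes $O(\log(d/\eta))$ bisection steps. The running time is polynomial by Lemma~\ref{lem:strong-app-supported}.

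The main obstacle is making sure the worst case for $M$ players really is the pair $k=1$ versus $k=d$, and that unresolved predecessors contribute exactly $a$ rather than an interval. This follows from the support implications of Lemma~\ref{lem:strong-app-supported}: an $M$ output never has an $H$ predecessor and always has at least one $M$ predecessor. These two facts are why the proof uses the propagation algorithm rather than an arbitrary circuit solution.
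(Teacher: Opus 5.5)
Your proposal is correct and follows the paper's proof essentially step for step. Both use the same decoding $\lst,H,M\mapsto 0,1,1-a$, the same regret bound $\max\{a(a-p),(1-a)(p-a^d)\}$ over $1\le k\le d$, the same derivative bound $f'(a)\ge 2a-a^{d-1}$, the same $a=\sqrt p$ balancing, and the same Lipschitz-plus-bisection rounding. The one step you leave implicit, $a\ge p$ at the root, follows from $f(a)\le a^2+(1-a)a=a$, exactly as the paper notes.
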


\begin{proof}
Decode states $\lst,H,M$ as $0,1,1-a$. Players with
terminal states play exact best responses by
Lemma~\ref{lem:strong-app-supported}. A remaining player has
$k\ge1$ unresolved predecessors, and her regret is
\[
 \max\{a(a^k-p),\,(1-a)(p-a^k),\,0\}
 \le\max\{a(a-p),\,(1-a)(p-a^d)\}.
\]
Equation~\eqref{eq:strong-ne-root} equalizes the two upper bounds.
Its left-hand side is strictly increasing on $(0,1)$: for $d\ge2$
its derivative is at least $2a-a^{d-1}>0$, and for $d=1$ it
equals $a$. The root satisfies $p\le a$ and $a^d\le p$, since
$a^2+(1-a)a^d\le a$ and $a^2+(1-a)a^d\ge a^d$.

For unrestricted in-degree, choose $a=\sqrt p$. Positive regret is at
most $a(a-p)=p(1-\sqrt p)$, and negative regret is at most
$(1-a)p$, the same value. The maximum of $p(1-\sqrt p)$ on
$[0,1]$ occurs at $p=4/9$ and is $4/27$. 
The regret expressions are Lipschitz in $a$ with
constant at most $d+2$, where $d$ is the maximum
in-degree of the input graph. A rational approximation $\widetilde a$ satisfying
$|\widetilde a-a|\le\eta/(d+2)$ can be computed in polynomial time by binary search, and increases the regret by at most $\eta$.
Since $d\le |V|-1$, this requires polynomial
time even when no fixed in-degree bound is imposed.
\end{proof}

\subsection{Polymatrix and Bayesian games}
\label{app:positive-polymatrix-bayesian}

For two-action polymatrix games, Deligkas et
al.~\cite[Theorem~4.2]{PureCircuit} give a
$1/3$-WSNE algorithm, while a modification described
in~\cite{AppEASSS23} achieves a $1/5$-NE.
We improve the latter bound to $0.191$.
The same improvement extends to two-player Bayesian
games and pairwise-additive multiplayer Bayesian games
through the type-agent representation~\cite{AppDFSS17}.
The corresponding $1/5$ Bayesian guarantee also
follows from this representation, rather than from
a previously published Bayesian result.

For comparison, Deligkas et
al.~\cite[Theorem~4.4]{PureCircuit} prove
PPAD-hardness of computing an $\eps$-NE in
two-action polymatrix games for every fixed
$\eps<2\sqrt{73}-17\approx0.088$.

\begin{definition}[Polymatrix game]\label{def:app-polymatrix}
A polymatrix game consists of players $V$, finite action sets
$A_i$, and rational matrices $P^{ij}\in\mathbb{Q}^{A_i\times A_j}$
for ordered pairs $i\ne j$. Missing interactions have zero matrices.
At an action profile $a$, player $i$'s total payoff is
\[
 u_i(a)=\sum_{j\ne i}P^{ij}_{a_i,a_j}.
\]
The two matrices $P^{ij}$ and $P^{ji}$ give the respective payoffs
on their common interaction. This is the standard sum of pairwise
games~\cite{AppDFSS17,PureCircuit}. The game is \emph{normalized}
if $u_i(a)\in[0,1]$ for every $i$ and every action profile $a$.
This normalizes total payoffs, not each interaction separately.
\end{definition}

\subsubsection{Two-action polymatrix games}
\label{subsec:strong-polymatrix}

We first consider normalized games with $A_i=\{0,1\}$ and no
restriction on the interaction graph. Write $x_i\in[0,1]$ for player $i$'s probability of
playing action one, and define the pure-action payoff difference
\begin{equation}\label{eq:poly-diff}
\Delta_i(x_{-i})=u_i(1,x_{-i})-u_i(0,x_{-i})
=b_i+\sum_{j\ne i}a_{ij}x_j.
\end{equation}
The difference is affine in opponents' probabilities and satisfies
\begin{equation}\label{eq:poly-norm}
-1\le\Delta_i(x_{-i})\le1\qquad(x_{-i}\in[0,1]^{n-1}).
\end{equation}
For an arbitrary strategy profile, player $i$'s regret is
\begin{equation}\label{eq:poly-regret}
\operatorname{Reg}_i(x)=
\begin{cases}
(1-x_i)\Delta_i(x_{-i}),&\Delta_i(x_{-i})\ge0,\\
-x_i\Delta_i(x_{-i}),&\Delta_i(x_{-i})<0.
\end{cases}
\end{equation}

Our construction invokes a standard complexity-theoretic result.
\begin{lemma}[Monotone weighted thresholds~\cite{AppBW06}]
\label{lem:strong-bw}
For rational nonnegative weights and rational threshold, the associated
Boolean weighted threshold function has a monotone AND/OR circuit of
size polynomial in the number of input variables and the encoding
length of the weights. Such a circuit can be constructed in polynomial
time, as shown explicitly by Beimel and Weinreb~\cite[Theorem~2.5]{AppBW06}
using sorting, merging, and halving circuits. Its gates may be taken to
have fan-in two.
\end{lemma}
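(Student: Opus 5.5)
This lemma is cited from Beimel and Weinreb, so my plan is to reconstruct their argument in outline: reduce to integer weights of polynomial bit-length, then build the threshold test as a monotone comparison of a sum against a constant.

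\emph{Normalization.} Write $f(x)=[\sum_i w_ix_i\ge\theta]$ with nonnegative rationals. Multiply by the common denominator so that every $w_i$ and $\theta$ is a nonnegative integer of bit-length $\ell$, polynomial in the input encoding. If $\theta\le0$ the function is the constant $1$; if $\theta>\sum_i w_i$ it is the constant $0$. Both are trivial monotone functions, so assume $1\le\theta\le W=\sum_iw_i$. Next expand each weight in binary, $w_i=\sum_b 2^b\beta_{ib}$. The weighted sum then becomes $\sum_b 2^b c_b(x)$, where $c_b(x)$ counts the $i$ with $\beta_{ib}=1$ and $x_i=1$. Each $c_b$ is a monotone count of the inputs, and its unary encoding (the threshold bits $[c_b\ge t]$) is computed by a monotone sorting network, for example Batcher's. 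A comparator is just the pair (AND, OR), so this step has polynomial size.

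\emph{Monotone adder on unary/threshold encodings.} The key point is that $f$ is monotone, so it suffices to produce monotone circuits computing threshold bits of the sum. I would process bit levels from least to most significant in the style of the ``sorting, merging, halving'' approach. Keep a unary (sorted) representation $U_b$ of the carry-augmented count at level $b$. Form the count at level $b+1$ by merging $c_{b+1}$ with $\lfloor U_b/2\rfloor$, where halving a sorted unary vector means taking every second wire. This is monotone and exact. The dropped parity bit of $U_b$ at each level records the binary digit $s_b$ of the total. Counts stay at most $2n$ at every level, so all vectors have length $O(n)$ and there are $O(\ell)$ levels. Each merge is a monotone merging network of polynomial size.

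\emph{Comparison with $\theta$.} This is the main obstacle, because the digit $s_b=U_b\bmod 2$ is not monotone. I would avoid extracting digits altogether and instead compare via the halving recursion directly. Define the ``deficit'' requirements $\theta_b$ by the recursion $\theta_{b+1}=\lceil(\theta_b-\text{contribution})/2\rceil$. The sum reaches $\theta$ iff at the top level the unary count is at least the final requirement. To make this sound, I would fold $\theta$ in from the start: subtract $\theta$ by instead testing $\sum_b2^bc_b\ge\theta$ as follows. At each level, compare $U_b$ against $t_b$, the corresponding bit of $\theta$, by shifting which wire is taken when halving: take wires at positions $2j+t_b$, which computes $\lfloor(U_b-t_b)/2\rfloor$ monotonically. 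A final wire $[U_{\mathrm{top}}\ge 1]$ then gives $f$, and every operation used is monotone. Finally, convert gates of larger fan-in into fan-in-two trees, and observe that the construction is explicit, giving polynomial time. I expect the correctness of the ``shifted halving'' to be the delicate verification. It requires checking the identity $\lfloor(U-t)/2\rfloor\ge k\iff U\ge 2k+t$ for sorted unary vectors, and confirming that negative intermediate deficits are handled by clipping at $0$.
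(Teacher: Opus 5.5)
Your normalization, the per-bit counts $c_b$ via sorting networks, and the carry recursion by halving and merging are a sound reconstruction of the ``sorting, merging, halving'' outline. The paper gives no proof of its own here; it only cites Beimel and Weinreb, Theorem~2.5.

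\textbf{The gap is the comparison with $\theta$, at exactly the step you flagged.} A sorted unary vector cannot encode a negative number. So reading the wires at positions $2j+t_b$ ($j\ge1$) computes $\max\{0,\lfloor (U_b-t_b)/2\rfloor\}$, and the borrow created when $U_b<t_b$ is silently discarded. This breaks correctness, not just the presentation.

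Write $S=\sum_i w_ix_i$ and $V_b$ for the unclipped quantity, $V_0=c_0$, $V_{b+1}=c_{b+1}+\lfloor (V_b-t_b)/2\rfloor$. Once $K$ exceeds every bit position of $\theta$ and of the weights, one gets $V_K=\lfloor (S-\theta)/2^K\rfloor$. The correct final test is therefore $V_K\ge0$, which clipping makes vacuous. Your test $[U_K\ge1]$ (already unclipped) asks for $S-\theta\ge2^K$ instead.

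Two concrete counterexamples:
\begin{itemize}
\item For $f(x)=x_1$ with $w_1=\theta=1$: $U_0=x_1$ and $t_0=1$, so $U_1=0$, and your circuit is the constant $0$.
\item Testing earlier, at the top bit of $\theta$ where the residual requirement is $1$, fails the other way. For $w_1=2$, $\theta=3$, the borrow at level $0$ is dropped, $U_1=x_1$, and $x_1=1$ is accepted although $S=2<3$.
\end{itemize}

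\textbf{Fix: keep every quantity nonnegative.} Choose $K$ with $2^K>\theta$ and beyond all weight bit positions. Add $R=2^K-\theta$ rather than subtracting $\theta$:
\begin{itemize}
\item at each level where the binary digit of $R$ is $1$, merge one constant-$1$ wire into the count;
\item use plain halving (positions $2j$);
\item output $[U_K\ge1]$.
\end{itemize}
The discarded parities satisfy $\sum_{b<K}2^bs_b<2^K$ and $S+R=\sum_{b<K}2^bs_b+2^KU_K$. Hence $S\ge\theta$ iff $S+R\ge2^K$ iff $U_K\ge1$.

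Equivalently, your shifted halving works once you carry an offset of one. The unclipped $V_b$ never drops below $-1$, so:
\begin{itemize}
\item start with one extra constant-$1$ wire at level $0$, so that $U_b$ represents $V_b+1\ge0$;
\item read the wires at positions $2j-1+t_b$. This gives $U_{b+1}=c_{b+1}+\lfloor (U_b+1-t_b)/2\rfloor=V_{b+1}+1$, with no clipping ever needed;
\item then $[U_K\ge1]$ is exactly $[S\ge\theta]$.
\end{itemize}
Constant wires are permitted in the monotone circuit (the paper's threshold circuits use $G_0,G_1$), all vectors still have length $O(n)$, and the rest of your argument, including the fan-in-two conversion, goes through.
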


We evaluate these circuits using the information states $\lst,M,H$
of Algorithm~\ref{alg:propagation}. The next lemma establishes
the completion properties needed by the approximation proof;
a numerical circuit solution alone need not have these properties.

A gate is \emph{stable} if its state equals the state prescribed by its
update rule on the current states of its inputs. For circuit inputs with states
in $\{\lst,M,H\}$, a \emph{completion} assigns a Boolean value to each
such input with state $M$ (unresolved input), and the encoded value to each terminal (resolved) input, that is, if its state is $\lst$ it assigns $0$ and if its state is $H$ it assigns $1$. This means that if the circuit has $k$ unresolved inputs, it admits $2^k$ distinct completions. The following result relies on the monotonicity of the circuit and focuses on only two completions: the
\emph{lower} and \emph{upper} completions, which assign $0$ and $1$, respectively, to
every unresolved circuit input.

\begin{lemma}[Three-state monotone circuit evaluation]
\label{lem:strong-threshold-states}
Let $K$ be a finite acyclic circuit of $G_\wedge$, $G_\vee$, $G_=$ and
constant gates, with $R(0)=\lst$ and $R(1)=H$, computing a monotone
Boolean function $f$ of inputs with states in $\{\lst,M,H\}$, and suppose
every gate of $K$ is stable. Then the output of $K$ is $H$ if every
completion gives $f=1$, is $\lst$ if every completion gives $f=0$, and
is $M$ otherwise. The same holds when each input is a literal, i.e.,
a primary variable $y$ or the output of $G_\neg(y)$, provided no primary
variable occurs with both polarities; completions are then taken over
the primary variables.
\end{lemma}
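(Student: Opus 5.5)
We argue by structural induction over the acyclic circuit $K$, processing gates in topological order. We prove the stronger claim that \emph{every} node $g$ of $K$ satisfies the trichotomy for the monotone Boolean function $f_g$ it computes. That is, $\sigma(g)=H$ iff every completion gives $f_g=1$; $\sigma(g)=\lst$ iff every completion gives $f_g=0$; and $\sigma(g)=M$ otherwise.

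By monotonicity of $f_g$, every completion lies between the lower and upper completions. So the trichotomy is equivalent to the following: $\sigma(g)=H$ iff $f_g(\text{lower})=1$, and $\sigma(g)=\lst$ iff $f_g(\text{upper})=0$, with $M$ otherwise. Note that $f_g(\text{lower})=1$ and $f_g(\text{upper})=0$ cannot both hold, since lower $\le$ upper.

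\emph{Base cases.} For a circuit input in state $\lst$ or $H$, both completions give the encoded bit. For an input in state $M$, the lower completion gives $0$ and the upper gives $1$, so the claim holds. For a constant gate, stability gives $R(0)=\lst$ or $R(1)=H$, which matches the constant function.

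\emph{Inductive step.} Stability means the gate's state equals its update rule applied to its inputs. On $\{\lst,M,H\}$ these rules are as follows: $G_=$ copies its input; $G_\vee$ gives $H$ if some input is $H$, $\lst$ if both are $\lst$, and $M$ otherwise; $G_\wedge$ is dual. This is the restriction described in the Boolean state propagation paragraph, since $\lst\in\mathcal I_0$ and $H\in\mathcal I_1$.

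For $g=G_\vee(g_1,g_2)$, we have $f_g=f_{g_1}\vee f_{g_2}$. Evaluating at the lower completion, $f_g(\text{lower})=1$ iff some $f_{g_i}(\text{lower})=1$, iff (by induction) some $\sigma(g_i)=H$, iff the rule outputs $H$. Evaluating at the upper completion, $f_g(\text{upper})=0$ iff both $f_{g_i}(\text{upper})=0$, iff both $\sigma(g_i)=\lst$, iff the rule outputs $\lst$. The $\wedge$ case is dual, and $G_=$ is immediate. The crucial point is that a single pair of completions (lower and upper) works simultaneously for all gates. This is why monotonicity of the whole circuit, rather than merely of $f$, is used.

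\emph{Literals.} Here we substitute each negated literal $\neg y$ by a fresh input $y'$. The $G_\neg$ rule maps the state of $y$ to a state in $\{\lst,M,H\}$, swapping the terminal states and fixing $M$. Because no primary variable appears with both polarities, completions of the primary variables correspond bijectively to completions of the literal inputs: complement the value for each negated variable, and the complement of an unresolved variable is again unresolved. The trichotomy statement for the literal inputs then transfers verbatim.

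The main obstacle is exactly this literal case. If a variable occurred in both polarities, its two literals would not vary independently, so the lower/upper completion argument would fail. The hypothesis excludes precisely this, so I would make the bijection explicit and then cite the positive case.
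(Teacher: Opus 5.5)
Your proof is correct and follows essentially the same route as the paper. Both run a topological induction in which every stable gate's state records its values at the lower and upper completions: the paper encodes these as the pairs $(0,0),(0,1),(1,1)$ corresponding to $\lst,M,H$, with $(1,0)$ excluded by monotonicity. Both then handle literal inputs by the same bijection between completions, which relies on no primary variable occurring with both polarities.
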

\begin{proof}
By $(a,b) \in \{ 0, 1 \}^2$ we denote the four possible pairs of values of $f$, where $a$ and $b$ are its values after the lower and upper completion, respectively. Since $f$ is monotone, its values at the lower and upper completions are
the smallest and largest values of $f$ over all completions. The pair
$(1,0)$ never arises, and on the pairs $(0,0),(0,1),(1,1)$ the
three-valued rules for $G_\wedge,G_\vee$ coincide with componentwise
minimum and maximum (e.g.\ $M\wedge M=M$ corresponds to
$(0,1)\wedge(0,1)=(0,1)$). Copy gates preserve these pairs, and constant
gates have their fixed values. Processing the gates in topological order,
stability gives each gate the pair of its values at the lower and upper
completions, and the pairs $(0,0),(0,1),(1,1)$ correspond to $\lst,M,H$.

For literal inputs, a negation of a primary variable interchanges its two
terminal states and preserves $M$. Since no primary variable occurs with
both polarities, completions of the primary variables correspond
bijectively to completions of the literals, a negated literal receiving
$1-b_y$, so the first part applies.
\end{proof}

The following theorem presents our improved results for polymatrix games by a direct application of the propagation technique of Algorithm~\ref{alg:propagation}. 

\begin{theorem}\label{thm:strong-polymatrix}
Every normalized rational two-action polymatrix game admits a
polynomial-time computable $0.191$-Nash equilibrium, with each
player's probability values belonging to
$\left\{0.191, 0.5, 0.809\right\}$.
More generally, every fixed rational error strictly above
$(3-\sqrt5)/4$ is achievable by the same method.
\end{theorem}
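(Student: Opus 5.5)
The plan is to reduce the game to a Boolean circuit over the three states $\lst,M,H$, run Algorithm~\ref{alg:propagation} on its Boolean subsystem, and decode $\lst,M,H$ as the probabilities $\alpha,\tfrac12,1-\alpha$ with $\alpha=(3-\sqrt5)/4\approx0.191$. For each player $i$ write $\Delta_i(x_{-i})=b_i+\sum_j a_{ij}x_j$ as in~\eqref{eq:poly-diff}. The decoded profile can be written as $x_j=\alpha+(1-2\alpha)y_j$, where $y_j\in\{0,1\}$ for terminal opponents and $y_j=\tfrac12$ for unresolved ones. Define the Boolean threshold function
\[
f_i(y)=\Bigl[\,b_i+\alpha\textstyle\sum_j a_{ij}+(1-2\alpha)\sum_j a_{ij}y_j\ge0\,\Bigr].
\]
Opponents with $a_{ij}<0$ enter through negated literals $G_\neg(v_j)$ and those with $a_{ij}\ge0$ positively. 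After this substitution all weights are nonnegative, and no variable appears in both polarities. Lemma~\ref{lem:strong-bw} then gives a polynomial-size monotone $G_\wedge/G_\vee$ circuit for $f_i$, with the threshold shifted by $\sum_{a_{ij}<0}|a_{ij}|$ and all quantities rational after rounding $\alpha$ to a rational slightly above $(3-\sqrt5)/4$. Its output is the primary variable $v_i$. The global circuit is cyclic, but Algorithm~\ref{alg:propagation} terminates in polynomial time with every gate stable, because on $\{\lst,M,H\}$ each node refines at most once. Lemma~\ref{lem:strong-threshold-states} therefore applies to each player's subcircuit. It gives $\sigma(v_i)=H$ iff every completion has $f_i=1$, $\sigma(v_i)=\lst$ iff every completion has $f_i=0$, and $\sigma(v_i)=M$ otherwise.

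Next, I bound the regrets using~\eqref{eq:poly-regret}. Since $\Delta_i$ is affine, $\Delta_i(x_{-i})$ is a convex combination of its values at completions: each $M$ coordinate equals $\tfrac12$, the average of its two completions. If $\sigma(v_i)=H$, all completions give a nonnegative value, so $\Delta_i(x_{-i})\ge0$. Player $i$ plays $1-\alpha$, and her regret is $\alpha\Delta_i\le\alpha$ by~\eqref{eq:poly-norm}. The case $\lst$ is symmetric, with regret $\alpha|\Delta_i|\le\alpha$.

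The main obstacle is the unresolved case $\sigma(v_i)=M$. Here the regret is $\tfrac12|\Delta_i(x_{-i})|$, so I need $|\Delta_i(x_{-i})|\le2\alpha=(1-2\alpha)^2$; this identity is exactly what defines $\alpha$, since $1-2\alpha=(\sqrt5-1)/2$. I would argue as follows.
\begin{itemize}
\item Let $U$ be the set of unresolved opponents and $w=\sum_{j\in U}|a_{ij}|$. The lower and upper completions $\Delta^{\mathrm{lo}},\Delta^{\mathrm{hi}}$ (in the sense of minimizing and maximizing $\Delta_i$) differ by $(1-2\alpha)w$, and $\Delta_i(x_{-i})$ is their midpoint.
\item Since $\Delta^{\mathrm{lo}}<0\le\Delta^{\mathrm{hi}}$, this gives $|\Delta_i(x_{-i})|\le\tfrac12(1-2\alpha)w$.
\item To bound $w$, note that $\Delta^{\mathrm{hi}}$ is obtained from a point of the shrunken cube by pushing the $U$-coordinates to $1-\alpha$ or $\alpha$. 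Pushing them further to $1$ or $0$ adds $\alpha w$, and normalization~\eqref{eq:poly-norm} bounds the resulting value by $1$. The same argument applies to $\Delta^{\mathrm{lo}}$ against $-1$.
\item Combining the two bounds with the sign change should give $|\Delta_i(x_{-i})|\le(1-2\alpha)^2$ after optimizing. The computation that produces exactly this constant, balancing the terminal shrink factor against the midpoint, is the step I expect to need the most care.
\end{itemize}
If the direct estimate falls short, I would first introduce a threshold $\theta\ne0$ in $f_i$ and re-balance, which tweaks the three decoded values as in the statement.

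Finally, for any rational error above $(3-\sqrt5)/4$, I would take a rational $\alpha$ slightly larger than $(3-\sqrt5)/4$. The bounds degrade continuously, and all circuits keep polynomial size.
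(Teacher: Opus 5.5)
Your proposal is correct and follows the paper's proof essentially step for step. It uses the same three-state monotone threshold circuits, the same $\alpha,\tfrac12,1-\alpha$ decoding and the same terminal-case bound. In the unresolved case, your sign-change bound $|c|\le(1-2\alpha)w/2$ and your push-to-the-full-cube bound $|c|+w/2\le1$ are exactly the paper's \eqref{eq:poly-cross} and \eqref{eq:poly-center}, with $s_i=w/2$.

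The algebra you left open does close. Together the two bounds give $|c|\le\frac{1-2\alpha}{2(1-\alpha)}$. This is at most $2\alpha$ precisely when $4\alpha^2-6\alpha+1\le0$, that is, for every $\alpha\in[(3-\sqrt5)/4,1/2)$, including $\alpha=0.191$. So the M-case regret is at most $\alpha$, and the threshold shift you held in reserve is not needed.
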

\begin{proof}
Fix a rational $p\in(0,1/2)$, and associate with each player a
primary Boolean-circuit variable $y_i$. In the intended decoding,
\begin{equation}\label{eq:poly-decode}
x_i=p+(1-2p)y_i,
\quad
\lst\mapsto p,\quad M\mapsto\tfrac12,\quad H\mapsto1-p.
\end{equation}
For every player $i$, form the Boolean function
\begin{equation*}
f_i(y_{-i})=
\mathbf1\left\{\Delta_i\bigl(p+(1-2p)y_{-i}\bigr)>0\right\}.
\end{equation*}
Since \eqref{eq:poly-diff} is affine, $f_i$ becomes a monotone
weighted threshold function by complementing precisely those inputs
$y_j$ whose coefficient $a_{ij}$ is negative. Its resulting weights
are nonnegative rationals of polynomial encoding length. Clearing
denominators turns the strict threshold into a non-strict integer
threshold (increase the integer right-hand side by one), so
Lemma~\ref{lem:strong-bw} applies.

Construct a monotone $G_\wedge/G_\vee$ circuit for each $f_i$, insert a negation
node for each complemented input literal, and identify its final output with the primary variable $y_i$. A constant threshold function is implemented by $G_0$ or $G_1$.
If a threshold circuit outputs an input literal directly, append
a copy gate $G_=$ before identifying its output with $y_i$.
The copy rule also preserves the three-state subset. Convert large-fan-in Boolean
gates into binary gates. Every circuit variable now has one producer,
and every individual threshold subcircuit is acyclic, although the
combined circuit may have directed cycles. The construction has
polynomial size. Choose state $\lst$ for constant zero and state $H$
for constant one, and run Algorithm~\ref{alg:propagation}. The three-state subsystem is invariant under these gates,
so Algorithm~\ref{alg:propagation} terminates in polynomial time with final states
$\sigma_i\in\{\lst,M,H\}$ for the primary variables.
Decode them by~\eqref{eq:poly-decode}.

At termination every gate is stable. Each threshold subcircuit is
acyclic, so Lemma~\ref{lem:strong-threshold-states} applies to it with
the final information states of its primary inputs, even though the
combined generalized circuit may contain cycles.

Fix a player $i$. By Lemma~\ref{lem:strong-threshold-states}, if
$\sigma_i=H$, then $\Delta_i>0$ at all completions of the
unresolved opponents. Such completions describe the corners of the
probability box in which resolved opponents play $p$ or $1-p$ and
unresolved opponents range over $[p,1-p]$. Since $\Delta_i$ is affine,
it is nonnegative throughout this box. Hence action one is a best
response at the decoded strategy profile, and by~\eqref{eq:poly-norm} and
\eqref{eq:poly-regret},
\[
\operatorname{Reg}_i(x)=(1-x_i)\Delta_i(x_{-i})\le p.
\]
Likewise, if $\sigma_i=\lst$, the threshold output is zero at every
completion, so $\Delta_i\le0$ throughout the box. Action zero is a
best response, and $\operatorname{Reg}_i(x)\le p$.

Suppose now that $\sigma_i=M$. The monotone threshold circuit
then has one completion with $\Delta_i\le0$ and another with
$\Delta_i>0$. Condition on the already fixed opponents, and let
$c_i$ be player $i$'s payoff difference when every remaining
unresolved opponent uses probability $1/2$. Let $s_i\ge0$ be the
half-range of this affine difference as the unresolved opponents range
over the full cube $[0,1]$. Its full-cube extrema are
$c_i-s_i$ and $c_i+s_i$. By~\eqref{eq:poly-norm},
\begin{equation}\label{eq:poly-center}
s_i+|c_i|\le1.
\end{equation}
Restricting the unresolved opponents to $[p,1-p]$ multiplies the
half-range by $1-2p$, while leaving the center unchanged. Since
$\Delta_i$ has opposite weak signs at two corners of that box,
\begin{equation}\label{eq:poly-cross}
|c_i|\le(1-2p)s_i
\le(1-2p)(1-|c_i|).
\end{equation}
Consequently,
\begin{equation}\label{eq:poly-unres}
|c_i|\le\frac{1-2p}{2(1-p)},
\qquad
\operatorname{Reg}_i(x)=\frac{|c_i|}{2}
\le\frac{1-2p}{4(1-p)},
\end{equation}
because the unresolved player herself mixes uniformly.
Combining the cases yields the uniform guarantee
\begin{equation}\label{eq:poly-eps}
\operatorname{Reg}_i(x)\le
\max\left\{p,\frac{1-2p}{4(1-p)}\right\}.
\end{equation}
The right-hand side of ~\eqref{eq:poly-eps} is minimized at
$p=(3-\sqrt5)/4 \approx 0.19099$. Choosing a fixed rational $p$ sufficiently close
to this point proves every fixed rational tolerance strictly above
that value.

All circuits have polynomial size by Lemma~\ref{lem:strong-bw},
Algorithm~\ref{alg:propagation} is polynomial, and the decoding uses
constant-bit rational probabilities for fixed $p$. This proves the
claimed running time.
\end{proof}

\subsubsection{Bayesian games}\label{subsec:strong-bayesian}

\begin{definition}[Bayesian game and interim equilibrium]
\label{def:app-bayesian}
A finite Bayesian game has players $1,\ldots,k$, explicitly represented
finite type sets $\Theta_i$, a rational common prior $\mu$ on type
profiles $t\in\prod_i\Theta_i$, and payoff functions $u_i(a;t)$.
Player $i$ observes only $t_i$ and chooses an action in
$A_i(t_i)$. A (mixed) strategy specifies a distribution
$s_{i,t_i}$ for each type. For every type of positive marginal
probability, define its interim pure-action payoff by
\[
 U_{i,t_i}(a;s_{-i})=
 \sum_{t_{-i}}\mu(t_{-i}\mid t_i)
 \mathbb{E}_{a_{-i}\sim\prod_{j\ne i}s_{j,t_j}}
 \bigl[u_i(a,a_{-i};t)\bigr].
\]
An \emph{$\eps$-interim Bayesian Nash equilibrium} satisfies
\[
 \max_{a\in A_i(t_i)}U_{i,t_i}(a;s_{-i})
 -\sum_{a\in A_i(t_i)}s_{i,t_i}(a)U_{i,t_i}(a;s_{-i})
 \le\eps
\]
for every such type. We normalize each type's total interim payoff
to $[0,1]$ for every own action and every opponent type-strategy
profile. No condition is imposed on zero-probability types.
\end{definition}

The type-agent representation for two-player games is the one used
in~\cite{AppDFSS17}. For two players we assume an explicit rational
table for the joint prior and realized payoffs. For the multiplayer
extension, payoffs are pairwise additive,
\[
 u_i(a;t)=\sum_{j\ne i}u_{ij}(a_i,a_j;t),
\]
and the conditional pairwise tables in
\eqref{eq:poly-bayes-multi} must be polynomial-time computable
rationals of polynomial encoding length. This assumption holds,
for example, when the prior and realized pairwise payoffs are
explicitly tabulated. Types may be correlated.

\begin{corollary}
\label{cor:strong-bayesian-two}
In a rational two-player Bayesian game with explicitly represented type
sets, an explicit rational joint prior and payoff table, two actions per
type, and normalized total interim payoffs as in
Definition~\ref{def:app-bayesian}, a polynomial-time algorithm
computes a $0.191$-interim Bayesian Nash equilibrium.
\end{corollary}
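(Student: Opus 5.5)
The plan is to reduce the two-player Bayesian game to a normalized two-action polymatrix game through the type-agent representation, and then invoke Theorem~\ref{thm:strong-polymatrix}.

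First, discard zero-probability types. They impose no condition, so they can play arbitrarily. For every remaining type $t_i\in\Theta_i$ of player $i\in\{1,2\}$, create one agent $(i,t_i)$ with action set $A_i(t_i)$, which has two actions. Agents of the same player do not interact. For agents $(1,t_1)$ and $(2,t_2)$, define the interaction matrix
\[
P^{(1,t_1),(2,t_2)}_{a,b}=\mu(t_2\mid t_1)\,u_1(a,b;(t_1,t_2)),
\]
and define the symmetric counterpart for player~2 in the same way. These entries are rational and have polynomial encoding length, because the prior and payoff table are explicit. The polymatrix game therefore has polynomial size.

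The key identity comes from summing over the opponent's types. For every profile of type-strategies, agent $(i,t_i)$'s polymatrix payoff from pure action $a$ equals
\[
\sum_{t_{-i}}\mu(t_{-i}\mid t_i)\,\mathbb{E}\bigl[u_i(a,a_{-i};t)\bigr]=U_{i,t_i}(a;s_{-i}).
\]
This holds because, with two players, the expectation over the opponent's action factorizes through the single opponent agent $(-i,t_{-i})$.

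Normalization of the polymatrix game in the sense of Definition~\ref{def:app-polymatrix} requires $u_{(i,t_i)}\in[0,1]$ at every pure action profile of the agents. A pure profile of the opponent agents is a particular opponent type-strategy profile, and the interim payoff is assumed normalized to $[0,1]$ for every own action and every such profile. So the game is normalized.

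Applying Theorem~\ref{thm:strong-polymatrix} then yields, in polynomial time, mixed strategies with $\operatorname{Reg}_{(i,t_i)}\le 0.191$ for every agent. By the payoff identity, this regret is exactly the interim regret in Definition~\ref{def:app-bayesian}. Hence the profile $s_{i,t_i}$ is a $0.191$-interim Bayesian Nash equilibrium.

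The main point needing care is the normalization step. We must check that the assumption on interim payoffs ranges over all opponent type-strategy profiles, including pure ones. Only then does it transfer to the polymatrix normalization over all pure agent profiles, which Theorem~\ref{thm:strong-polymatrix} uses through \eqref{eq:poly-norm}. The remaining steps (conditional probabilities are well defined because we keep only positive-marginal types, and the construction is polynomial) are routine.
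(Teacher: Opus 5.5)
Your proposal is correct and matches the paper's proof essentially step for step. You use the same type-agent polymatrix game with edge matrices $\mu(t_2\mid t_1)\,u_1(a,b;t_1,t_2)$ (the paper writes $\mu(t,s)/\mu_1(t)$), the same payoff identity equating each clone's regret with its type's interim regret, and a direct appeal to Theorem~\ref{thm:strong-polymatrix}; your explicit check that the interim normalization over opponent type-strategy profiles, in particular pure ones, yields the polymatrix normalization is correct and is a step the paper leaves implicit.
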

\begin{proof}
Ignore zero-marginal-probability types during computation, and assign
them arbitrary actions afterwards. With joint prior $\mu(t,s)$, create
one polymatrix player per positive-marginal type. For types $t$ of
player one and $s$ of player two, give their connecting edge the
respective payoff matrices
\begin{equation}\label{eq:poly-bayes-two}
A_{t,s}(a,b)=\frac{\mu(t,s)}{\mu_1(t)}u_1(a,b;t,s),\qquad
B_{t,s}(a,b)=\frac{\mu(t,s)}{\mu_2(s)}u_2(a,b;t,s).
\end{equation}
No two types of the same original player interact. The payoff to
clone $t$ from action $a$ is
\[
\sum_s\mu(s\mid t)\sum_{b\in\{0,1\}}x_{2,s}(b)
 u_1(a,b;t,s),
\]
which is exactly that type's conditional payoff in the original
Bayesian game. Hence the regret of each clone equals the conditional
regret of its corresponding type. Apply
Theorem~\ref{thm:strong-polymatrix}. The transformation has
polynomial size and rational bit complexity.
\end{proof}

\begin{corollary}
\label{cor:strong-bayesian-multi}
The same $0.191$ interim guarantee holds for rational multiplayer
Bayesian games with explicitly represented type sets, two actions per
type, and pairwise-additive
payoffs, provided all conditional pairwise payoff tables can be
computed in polynomial time and each type's total interim payoff is normalized as in
Definition~\ref{def:app-bayesian}.
\end{corollary}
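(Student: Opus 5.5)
The plan is to reduce to Theorem~\ref{thm:strong-polymatrix} through a type-agent representation, following the pattern of Corollary~\ref{cor:strong-bayesian-two}. Types of zero marginal probability are discarded during the computation and given arbitrary actions at the end. For every other type $t_i\in\Theta_i$ of every player $i$, we create one polymatrix player $(i,t_i)$ with the same two actions. No two types of the same original player interact.

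For types $t_i$ and $t_j$ of distinct players $i\ne j$, the edge between $(i,t_i)$ and $(j,t_j)$ gets the payoff matrix
\begin{equation}\label{eq:poly-bayes-multi}
P^{(i,t_i),(j,t_j)}(a,b)=\sum_{t_{-i,-j}}\mu(t_{-i}\mid t_i)\,u_{ij}(a,b;t),
\end{equation}
where the sum runs over the types of the remaining players with $t_j$ held fixed. Equivalently, the matrix entry is the conditional expectation of $u_{ij}(a,b;t)$ given $t_i$, weighted by the event that player $j$ has type $t_j$. By the stated assumption, these are polynomial-time computable rationals of polynomial encoding length. The number of clones is $\sum_i|\Theta_i|$, which is polynomial because type sets are given explicitly.

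The key step is checking that each clone's polymatrix payoff equals that type's interim payoff. Fix pure action $a$ for clone $(i,t_i)$ and mixed strategies $x_{j,t_j}$ for the other clones. Pairwise additivity and independence of the mixed strategies across players give
\[
U_{i,t_i}(a;s_{-i})=\sum_{j\ne i}\sum_{t_{-i}}\mu(t_{-i}\mid t_i)\,\mathbb{E}_{b\sim s_{j,t_j}}\bigl[u_{ij}(a,b;t)\bigr].
\]
The inner expectation depends on the strategies only through $s_{j,t_j}$. Grouping the sum over $t_{-i}$ by the value of $t_j$ therefore yields
\[
U_{i,t_i}(a;s_{-i})=\sum_{j\ne i}\sum_{t_j}\sum_{b}x_{j,t_j}(b)\,P^{(i,t_i),(j,t_j)}(a,b),
\]
which is exactly the clone's polymatrix payoff.

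It follows that clone regrets coincide with interim regrets. The interim normalization assumed in Definition~\ref{def:app-bayesian} also means the constructed polymatrix game is normalized in the sense of Definition~\ref{def:app-polymatrix}. One point needs care here. Normalization of the polymatrix game is required at every pure action profile of the clones, whereas Definition~\ref{def:app-bayesian} states it for type-strategy profiles. Since pure profiles of clones are exactly pure type-strategy profiles, the two conditions match.

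Applying Theorem~\ref{thm:strong-polymatrix} produces a $0.191$-NE of the polymatrix game in polynomial time. Read back as a type-strategy profile, this is a $0.191$-interim Bayesian Nash equilibrium. I expect the main obstacle to be the bookkeeping in the marginalization that defines \eqref{eq:poly-bayes-multi}: the correlation of types must be absorbed into the edge weights without double counting across pairs $j$. Pairwise additivity is what makes each $j$ term separate cleanly.
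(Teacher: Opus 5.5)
Your proposal is correct and follows essentially the same route as the paper: one clone per positive-probability type, edge payoffs $\sum_{t_{-(i,j)}}\mu(t_j,t_{-(i,j)}\mid t_i)\,u_{ij}(a,b;t)$, a conditional-expectation identity showing each clone's payoff equals the type's interim payoff (with no independence of types assumed), and an application of Theorem~\ref{thm:strong-polymatrix}. Your explicit check that the interim normalization makes the clone game a normalized polymatrix game fills in a detail the paper leaves implicit.
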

\begin{proof}
Use the pairwise payoff decomposition in
Definition~\ref{def:app-bayesian}. Make one clone
$(i,t_i)$ per type of positive marginal probability. For distinct
original players $i,j$, define the edge payoff to clone $(i,t_i)$
from clone $(j,t_j)$ by
\begin{equation}\label{eq:poly-bayes-multi}
A_{(i,t_i),(j,t_j)}(a,b)=
\sum_{t_{-(i,j)}}
\mu(t_j,t_{-(i,j)}\mid t_i)
 u_{ij}(a,b;t_i,t_j,t_{-(i,j)}).
\end{equation}
Conditional expectation shows that the sum of the clone's edge
payoffs against opponent type strategies is exactly the original
type's interim payoff. This identity also holds for correlated
private types; no independence of types is assumed. Thus every
clone's regret equals the corresponding type's interim regret.
Apply Theorem~\ref{thm:strong-polymatrix}. As stated, the input
representation makes the conditional tables polynomial-time
computable with rational polynomial bit length.
\end{proof}

\subsection{Threshold games}\label{app:positive-threshold}

The circuit uses only three of the information states of
Algorithm~\ref{alg:propagation}; other gate types and the remaining
states are not needed. Unlike the previous applications, the decoding
here uses their original numerical values $\nu$. The guarantee
matches the algorithm and hardness threshold of~\cite{PureCircuit}.

\begin{definition}[Approximate threshold equilibrium]
\label{def:app-threshold}
A threshold game~\cite{AppPP21,PureCircuit} consists of a loop-free
directed graph $G=(V,E)$, with $N_i=\{j:(j,i)\in E\}$, and a
strategy $x_i\in[0,1]$ for each player. A strategy profile is an
\emph{$\eps$-approximate equilibrium} if, for every $i$,
\[
 \sum_{j\in N_i}x_j<\tfrac12-\eps
 \quad\Longrightarrow\quad x_i\ge1-\eps,
\]
\[
 \sum_{j\in N_i}x_j>\tfrac12+\eps
 \quad\Longrightarrow\quad x_i\le\eps.
\]
No constraint is imposed when the predecessor sum lies in the
closed interval $[1/2-\eps,1/2+\eps]$. These are the strict
premises of the $\eps$-\textsc{ThresholdGameNE} problem
in~\cite{PureCircuit}.
\end{definition}

\begin{theorem}\label{thm:strong-app-threshold}
Algorithm~\ref{alg:propagation} computes a $1/6$ threshold equilibrium
in polynomial time, with no restriction on in-degree. This matches
the tight bound of~\cite[Theorems~4.22--4.23]{PureCircuit}.
\end{theorem}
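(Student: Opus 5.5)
Build a NOR-type circuit that mirrors the threshold rule and run Algorithm~\ref{alg:propagation} on its three-state Boolean subsystem; then decode with the original values $\nu$, so $\lst\mapsto 1/6$, $M\mapsto 1/3$, $H\mapsto 5/6$.

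The circuit has one primary variable $v_i$ per player. A source ($N_i=\varnothing$, predecessor sum $0<1/3$) gets $v_i=G_1()$ with $R(1)=H$. Otherwise $v_i$ is the negation of a binary $G_\vee$ tree over $\{v_j:j\in N_i\}$, exactly as in Lemma~\ref{lem:strong-app-supported}. That lemma already gives the termination states: $\sigma(v_i)=\lst$ implies some predecessor is $H$; $\sigma(v_i)=H$ implies all predecessors are $\lst$; and $\sigma(v_i)=M$ implies no predecessor is $H$ and at least one is $M$. Polynomial running time follows because each node refines at most once.

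The approximation property is a three-case check at $\eps=1/6$.
\begin{itemize}
\item \emph{State $H$ (value $5/6$).} Every predecessor has value $1/6$. The only possibly triggered premise is sum $<1/3$, which demands $x_i\ge5/6$, and this holds. The premise sum $>2/3$ would demand $x_i\le1/6$, which fails, so it must not trigger. This is the main obstacle: with unbounded in-degree, many $\lst$ predecessors each of value $1/6$ can push the sum above $2/3$. The na\"ive decoding therefore seems to fail whenever $|N_i|\ge5$, and I must resolve this rather than gloss over it.
\item \emph{State $\lst$ (value $1/6$).} Some predecessor has value $5/6>2/3$, so only the low-output premise can trigger, and it is satisfied.
\item \emph{State $M$ (value $1/3$).} Since $1/3\in(1/6,5/6)$, neither premise may trigger. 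No predecessor has value $5/6$ and at least one has value $1/3$, so the sum is at least $1/3$, which excludes the sum $<1/3$ premise. But with many $M$ or $\lst$ predecessors the sum can exceed $2/3$, which is the same obstacle as in the $H$ case.
\end{itemize}

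To fix this I would change the circuit rather than the decoding, since the theorem insists on $\nu$. The Boolean layer should compute thresholds on the weighted sum using the decoded values, not a plain NOR. For each player, let $H$ be forced only when the sum with all unresolved predecessors set to their value $1/3$ is $<1/2$. Precisely, with $h$ predecessors in $H$, $m$ in $M$ and $\ell$ in $\lst$, player $i$ is forced high iff $\tfrac56h+\tfrac13m+\tfrac16\ell<1/2$ at every completion, and forced low analogously, in the style of Lemma~\ref{lem:strong-threshold-states}. These completion conditions are monotone threshold functions of the literals (``$v_j$ is $H$'' and ``$v_j$ is not $\lst$''), so Lemma~\ref{lem:strong-bw} yields polynomial monotone circuits.

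Stability at termination, via Lemma~\ref{lem:strong-threshold-states}, then gives the following.
\begin{itemize}
\item A resolved $H$ means the actual sum is $<1/2$, hence not $>2/3$.
\item A resolved $\lst$ means the sum is $>1/2$, hence not $<1/3$.
\item An $M$ means the completions straddle $1/2$. Since unresolved players only move between the values $1/6$ and $5/6$, I would aim to show the actual sum stays within $[1/3,2/3]$ in this case.
\end{itemize}
Calibrating these completion ranges is the step I expect to require the most care.
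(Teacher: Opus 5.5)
Your diagnosis that the plain NOR circuit fails under $\nu$ is correct: an $H$ vertex with five $\lst$ predecessors, or an $M$ vertex with three $M$ predecessors, violates the high-sum condition. But the repair stops exactly at the decisive step, and its specification is inconsistent. You define ``forced high'' through the sum with unresolved predecessors set to $1/3$, yet also ``at every completion''; completions set unresolved inputs to $1/6$ or $5/6$, never to $1/3$. The literal $1/3$ reading cannot be implemented. One predecessor in state $M$ would give $H$, and refining that predecessor to $H$ would then require $\lst$, which is not monotone in $\preceq$. Likewise, ``$v_j$ is $H$'' and ``$v_j$ is not $\lst$'' are not separate inputs available in the three-state system; they are only the lower and upper completions of the single negated literal $\neg v_j$. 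The one consistent reading is $f_i(y)=\mathbf 1\{\sum_{j\in N_i}(\frac16+\frac23y_j)<\frac12\}$, with Lemma~\ref{lem:strong-threshold-states} fixing the state. Even then you never prove that state $M$ yields a predecessor sum in $[1/3,2/3]$. The straddling idea you hint at cannot do this alone, because the actual sum only lies between the lower and upper completion sums, an interval of width $\frac23m$. The threshold genuinely matters. With $2/3$ in place of $1/2$, three $M$ predecessors give lower sum $1/2$ and upper sum $5/2$, hence state $M$ and value $1/3$, while the actual sum is $1>2/3$.

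What closes the gap is a count. A lower-completion sum below $1/2$ forces $h=0$ and $m+\ell\le2$, and straddling forces $m\ge1$. Hence the actual sum $\frac13m+\frac16\ell$ lies in $[\frac13,\frac23]$. The same count shows that your $f_i$ is simply NOR for $|N_i|\le2$ and the constant $0$ for $|N_i|\ge3$, since one predecessor completed to $5/6$, or three at $1/6$, already reaches $1/2$. So the Beimel--Weinreb circuits of Lemma~\ref{lem:strong-bw} are unnecessary. This is essentially the paper's proof, which is simpler still. Every state has value at least $1/6$, so a vertex with at least two predecessors has sum at least $1/3$, and the low-sum premise never fires. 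A constant $G_0$ with state $\lst$ (value $1/6$) then satisfies the only premise that can fire. Sources get $G_1$, and in-degree one gets $G_\neg$, verified directly on the three input values $1/6$, $1/3$, $5/6$.
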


\begin{proof}
For each threshold vertex, use a $G_1$ gate at in-degree zero,
$G_\neg$ on the unique predecessor at in-degree one, and a $G_0$
gate at in-degree at least two. For the latter, select the permitted
terminal low state $\lst$. Run Algorithm~\ref{alg:propagation}, using
its original values $\nu(M)=1/3$, $\nu(\lst)=1/6$, and
$\nu(H)=5/6$. The three states form an invariant subposet, and all
values are at least $1/6$.

When there are at least two predecessors, the sum of their values
is at least $1/3$; the low-sum premise cannot hold, and the constant
output $1/6$ satisfies every high-sum requirement. Sources have
value $5/6$. With exactly one predecessor, the possible inputs are
$1/6$, $1/3$, and $5/6$. The first forces the output $5/6$,
the last forces $1/6$, and the middle forces neither condition.
The negation rule leaves state $M$ unresolved. Thus every threshold
gate is satisfied, including those incoming edges that were not
represented in the circuit for high-in-degree vertices.
\end{proof}

\subsection{Violating equilibrium conditions for a fraction of players}
\label{app:positive-fragment}

Our decomposition theorem gives polynomial-time algorithms
that satisfy the equilibrium conditions of all but a small
fraction of players in constant-action polymatrix and
graphical games of maximum in-degree two. The guarantee is
$3/4-o(1)$ and does not rely on the numerical $1/6$-\GC
bound or on state propagation. Arieli and
Babichenko~\cite{AppAB17} study weak approximate equilibria
in explicitly represented normal-form games, but their
polynomial-time guarantees do not directly extend to the
succinct representations considered here.

\begin{definition}[Graphical game and dependency digraph]
\label{def:app-graphical}
A graphical game specifies finite action sets $A_i$ and, for
each player $i$, a set $N_i\subseteq V\setminus\{i\}$ and a local
payoff table
\[
 u_i:A_i\times\prod_{j\in N_i}A_j\longrightarrow\mathbb{Q}.
\]
The dependency digraph has an arc $j\to i$ for each $j\in N_i$:
arcs point from an opponent whose action can affect $i$'s payoff
towards $i$. For a polymatrix game, put $j\in N_i$ if some row of
$P^{ij}$ is nonconstant across $j$'s actions. Thus $N_i$ contains
precisely the opponents whose actions can affect $i$'s total payoff.
Contributions from the other matrices depend only on $i$'s own
action. The graph is simple and loop-free.
Write $n=|V|$ and $n_2=|\{i:|N_i|=2\}|$.
\end{definition}

We say that a player \emph{satisfies her $\eps$-WSNE conditions}
at a strategy profile if all the actions she plays with positive probability in her strategy satisfy the inequalities
in Definition~\ref{def:app-ne}; the other players need not satisfy
their conditions. The following statement concerns strategy profiles with
this property at many players, rather than a WSNE of the entire game.

\begin{theorem}
\label{thm:strong-app-fraction}
For a rational polymatrix game with $|A_i|\le s$ for a fixed
constant $s$, and a dependency digraph of maximum in-degree two, a
polynomial-time algorithm returns a rational strategy profile in which at
least
\begin{equation}\label{eq:strong-app-players-count}
 n-\frac{n_2}{4}-\frac{4n}{\lambda+1},
 \qquad \lambda=\max\{1,\lfloor\log_2(n+1)\rfloor\},
\end{equation}
players have every supported action as an exact best response.
Here $n_2$ is the number of vertices of dependency in-degree
two. For graphical games with the same restrictions, payoff values in
$[0,1]$, and fixed $\eps>0$, the same count satisfies their $\eps$-WSNE, hence
$\eps$-NE, conditions. Thus the loss beyond $n_2/4$ is
$O(n/\log n)$, and the worst-case satisfied fraction is
$3/4-o(1)$.
\end{theorem}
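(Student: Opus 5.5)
The plan is to apply Theorem~\ref{thm:fragmentation} to the dependency digraph $D$ of Definition~\ref{def:app-graphical} with $\lambda=\max\{1,\lfloor\log_2(n+1)\rfloor\}$. This gives in polynomial time a set $S$ with $|S|\le n_2/4+4n/(\lambda+1)$ such that every strongly connected component of $D-S$ has at most $\lambda$ players.

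\textbf{Assignment.} Each player in $S$ gets an arbitrary pure action, and we then process the components of $D-S$ in topological order. When a component $C$ is reached, every in-neighbour of a player in $C$ lies outside $C$ only if it is in $S$ or in an earlier component. Hence all boundary strategies are already fixed, and $C$ becomes a game among $|C|\le\lambda$ players.

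\textbf{Solving one component.} For each player $i\in C$ we precompute the expected payoff of each action against the fixed outside opponents. In the polymatrix case this is just the sum of $P^{ij}$ rows weighted by the fixed mixed strategies $x_j$. In the graphical case it is an expectation over the at most two in-neighbours, and only those in-neighbours matter since the in-degree is at most two.

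\textbf{Polymatrix components.} The residual game on $C$ is again a polymatrix game with $|C|\le\lambda$ players and at most $s$ actions each, plus constant offsets. Such a game has a rational Nash equilibrium. We find one by enumerating supports: there are at most $2^{s\lambda}=(n+1)^{s}$ support profiles, which is polynomial. For each support profile, the equilibrium conditions (equal payoffs on the support, weakly higher than off-support actions, probabilities summing to one) form a linear feasibility program. This is because polymatrix payoffs are linear in each opponent's strategy, and a player's payoff depends on others only additively, so the constraints are linear in the joint vector of mixed strategies. Solving these LPs yields a rational exact equilibrium of polynomial bit length. Every player of $C$ then has every supported action an exact best response, since their payoffs depend only on players in $C$ and on already-fixed players.

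\textbf{Graphical components.} The residual game is not polymatrix, so exact rational equilibria may not exist. Here we use a grid, as in Lemma~\ref{lem:small-component}. Because payoffs lie in $[0,1]$ and each player has at most two neighbours, payoffs are Lipschitz in the sup norm with a constant depending only on $s$. Brouwer/Nash gives an exact equilibrium, and rounding it to a grid of mesh $\eta=\Theta(\eps)$ yields an $\eps$-WSNE on the grid. This needs care, because the support can grow under rounding. We handle that by rounding probabilities below a threshold to zero and renormalising, so that every supported action was nearly optimal in the exact equilibrium. The grid has $O(1/\eps)^{s\lambda}=n^{O(s\log(1/\eps))}$ points, which is polynomial for fixed $\eps,s$, and each point can be checked directly.

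\textbf{Counting and main obstacle.} Earlier strategies are never changed, so only players in $S$ can fail their conditions, which gives the bound~\eqref{eq:strong-app-players-count}. The $o(1)$ term follows from $\lambda=\Theta(\log n)$. I expect the main obstacle to be showing that exhaustive solving of a component is polynomial when $\lambda$ grows logarithmically. The choice $\lambda\approx\log_2 n$ is exactly what makes $2^{O(\lambda)}$ enumerations polynomial. It also requires a careful support-rounding argument in the graphical case so that well-supportedness, not just $\eps$-NE, is preserved.
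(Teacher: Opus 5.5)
Your plan follows the paper's proof: fragmentation with $\lambda=\max\{1,\lfloor\log_2(n+1)\rfloor\}$, arbitrary pure actions on $S$, components in topological order, support-enumeration LPs for polymatrix components, and grid search for graphical components. The counting and enumeration bounds are fine. The gap is the bit-complexity argument in the polymatrix case, which the theorem's claims of polynomial time and a rational output depend on. You assert that each LP yields a rational solution of polynomial bit length. However, the right-hand sides of a component's LP contain probabilities already computed for earlier components, and components can depend on one another along a long chain. A bound that is polynomial in each LP's own encoding length could therefore compound over many components into exponential length. That would break both the running time and the polynomial-size output. The paper closes this as follows. Earlier probabilities enter only the right-hand side, while the constraint matrix consists solely of the original payoff coefficients and simplex coefficients. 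By Cramer's rule, a basic feasible solution then has a denominator dividing $D$ times an integer whose bit length is polynomial and independent of $D$, where $D$ is a common denominator of the previously fixed probabilities. Each component thus adds only polynomially many bits, and over at most $n$ components the total stays polynomial. For this you must return vertex solutions, not arbitrary feasible points.

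A smaller issue is the graphical rounding. Thresholding small probabilities to zero and renormalising generally leaves the grid you enumerate: dividing multiples of $1/q$ by a remaining mass $(q-k)/q$ produces denominators $q-k$. As written, your existence argument therefore does not exhibit a grid point. The paper's rounding avoids this. For each player, round all but one positive-support probability down to a multiple of $1/q$ and give the removed mass to the remaining supported action. The result then has the following properties:
\begin{itemize}
\item it lies on the grid;
\item the support only shrinks, so every supported action was an exact best response;
\item the distribution moves by at most $(s-1)/q$ in total variation;
\item each pure payoff moves by at most $2(s-1)/q$, because it depends on at most two predecessors.
\end{itemize}
Hence the gap is at most $4(s-1)/q\le\varepsilon$ for $q=\max\{1,\lceil4(s-1)/\varepsilon\rceil\}$. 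With these two repairs the proof is complete. The polynomiality of the enumeration at $\lambda\approx\log_2 n$, which you single out as the main obstacle, is the routine part.
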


\begin{proof}
Apply Theorem~\ref{thm:fragmentation} to the dependency
digraph. It computes a set $S$ with
\[
 |S|\le n_2/4+4n/(\lambda+1)
\]
such that every strongly connected component of the remaining
digraph has at most $\lambda$ players. Fix arbitrary pure
actions for the players in $S$, and process the components of
the remaining graph in topological order, directing arcs from
influencers to influenced players. When a component is reached,
all inputs from outside it have already been fixed.
\\

\noindent\emph{Polymatrix games.}
For each component, enumerate all choices of nonempty support
sets $T_i$ and solve the rational linear feasibility system
\begin{equation}\label{eq:strong-app-polymatrix-LP}
 \begin{gathered}
 p_{i,a}\ge0,\quad \sum_a p_{i,a}=1,\quad
 p_{i,a}=0\quad(a\notin T_i),\\
 U_{i,a}(p)=v_i\quad(a\in T_i),\qquad
 U_{i,a}(p)\le v_i\quad(a\notin T_i).
 \end{gathered}
\end{equation}
Pure-action payoffs $U_{i,a}$ are affine in the other players'
probabilities, so these are linear programs. Nash's theorem
ensures that the supports of some exact equilibrium give a
feasible system; conversely every feasible solution is an
exact equilibrium of the component. There are at most
$(2^s-1)^\lambda=n^{O(s)}$ systems. Each has
$O(s\lambda)$ variables with rational coefficients, so a
feasible rational vertex can be found in polynomial time.
The rational outputs retain polynomial bit complexity:
coefficient matrices have polynomial-size determinants, while
previously fixed probabilities occur only on right-hand sides.
After clearing denominators in the original matrices once, let $D$
be a common denominator for already fixed probabilities. Each
component's right-hand sides have a denominator dividing $D$
times the fixed input denominator. Cramer's rule for a basic
feasible solution adds only a determinant of an integer coefficient
matrix built from the original payoff coefficients and simplex
constraints. Its bit length is polynomial in the input, independently
of $D$. Over at most $n$ components, these added bit lengths sum
to a polynomial.
\\

\noindent\emph{Graphical games.}
Set
\[
 q=\max\{1,\lceil4(s-1)/\eps\rceil\}.
\]
Enumerate all strategy profiles with probabilities that are multiples
of $1/q$, giving at most
$(q+1)^{s\lambda}$ candidates per component. Some candidate
satisfies every component player's $\eps$-WSNE inequalities.
To see this, take an exact equilibrium of the component.
For each player round all but one positive-support probabilities
down to multiples of $1/q$ and add the removed mass to the
remaining positive-support action. The support can only shrink,
and the total variation of each distribution changes by at most
$(s-1)/q$. As payoffs lie in $[0,1]$, and each pure payoff depends on at
most two changing predecessors, its expected value changes by at most
$2(s-1)/q$. Every action retained in the support was an
exact best response before rounding; its payoff gap after
rounding is therefore at most
$4(s-1)/q\le\eps$. Exhaustive enumeration finds such a
profile by checking the inequalities.
For fixed $s,\eps$, the enumeration is polynomial-time.

Later components do not influence already solved ones.
Only the players of $S$ may fail their conditions, proving
\eqref{eq:strong-app-players-count}. Averaging each player's
WSNE support inequalities gives the corresponding NE condition.
\end{proof}
\end{document}